\def\doi{9(1:06)2013}
\newcommand{\tto}{\Rightarrow}
\newcommand{\abs}[0]{\ensuremath{\mathfrak{a}}}
\newcommand{\CC}[0]{\ensuremath{\Phi}} 
\newcommand{\Extra}[0]{\ensuremath{\mathsf{Extra}}}
\newcommand{\ExtraM}[0]{\ensuremath{{\Extra_M}}}
\newcommand{\ExtraMp}[0]{\ensuremath{{\Extra_M^+}}}
\newcommand{\ExtraLU}[0]{\ensuremath{{\Extra_{LU}}}}
\newcommand{\ExtraLUp}[0]{\ensuremath{{\Extra_{LU}^+}}}
\renewcommand{\Nat}[0]{\ensuremath{\mathbb{N}}}
\newcommand{\Integers}[0]{\ensuremath{\mathbb{Z}}}
\newcommand{\reset}[1]{\ensuremath{[#1]}}
\newcommand{\Rpos}[0]{\ensuremath{\mathbb{R}_{\geq 0}}}
\newcommand{\SG}[0]{\ensuremath{ZG}}
\newcommand{\GZG}[0]{\ensuremath{GZG}}
\newcommand{\SGLU}[0]{\ensuremath{\SG^{LU}}}
\newcommand{\RGZG}[0]{\ensuremath{r\GZG}}
\newcommand{\RGZGM}[0]{\ensuremath{\RGZG^M}}
\newcommand{\val}[0]{v} 
\newcommand{\vali}[0]{\ensuremath{\mathbf{0}}} 
\newcommand{\xra}{\xrightarrow}
\newcommand{\zlu}{\SGLU}
\newcommand{\npc}{\NP-complete}
\newcommand{\nph}{\NP-hard}
\newcommand{\ANZ}[0]{\ensuremath{\Aa^{NZ}}}
\newcommand{\AZ}{\ensuremath{\Aa^{Z}}}
\newcommand{\zzg}{\Ss \SG^\abs(\Aa)}
\newcommand{\free}{\operatorname{free}}
\newcommand{\slow}{\operatorname{slow}}
\newcommand{\ZP}{\ensuremath{\mathsf{ZP}}}
\newcommand{\NZP}{\ensuremath{\mathsf{NZP}}}
\newcommand{\Lf}{\operatorname{Lf}}
\newcommand{\U}{\bar{U}}
\newcommand{\elub}[0]{\ensuremath{\Extra_{L \bar{U}}}}
\newcommand{\elbu}[0]{\ensuremath{\Extra_{\bar{L} U}}}
\newcommand{\elubp}[0]{\ensuremath{\Extra^+_{L \bar{U}}}}
\newcommand{\elbup}[0]{\ensuremath{\Extra^+_{\bar{L} U}}}
\newcommand{\Rl}{\operatorname{Rl}}
\newenvironment{definition}[1][]{\begin{defi}[#1]}{\end{defi}}
\newenvironment{theorem}[1][]{\begin{thm}[#1]}{\end{thm}}
\newenvironment{lemma}[1][]{\begin{lem}[#1]}{\end{lem}}
\begin{document}

\title[Coarse abstractions make Zeno behaviours difficult to detect]{Coarse abstractions make Zeno behaviours difficult to detect\rsuper*}

\author[F.~Herbreteau]{Fr\'ed\'eric Herbreteau\rsuper a}

\address{{\lsuper a}Univ. Bordeaux, CNRS, LaBRI, UMR 5800\\
  F-33400 Talence, France}
\email{fh@labri.fr}

\thanks{{\lsuper a}This work has been supported by ANR project DOTS
  ANR-06-SETI-003.}

\author[B.~Srivathsan]{B.~Srivathsan\rsuper b}

\address{{\lsuper b}Software Modeling and Verification group \\
 RWTH Aachen University \\
 Germany}
\email{sri@cs.rwth-aachen.de}

\thanks{{\lsuper b}The second author B. Srivathsan was at LaBRI, Univ. Bordeaux, when this
  work was first submitted.}

\keywords{Timed automata, Zeno runs, Abstractions, Verification}

\subjclass{D.2.4, F.1.3, F.1.1}

\ACMCCS{[{\bf Software and its engineering}]: Software creation and
  management---Software verification and validation---Formal software
  verification; [{\bf Theory of computation}]: Computational complexity and cryptography---Complexity classes; [{\bf Theory of computation}]: Formal languages and automata theory---Automata extensions}

\titlecomment{{\lsuper*}Extended abstract appeared at CONCUR2011}

\begin{abstract}
  An infinite run of a timed automaton is Zeno if it spans
  only a finite amount of time. Such runs are considered unfeasible
  and hence it is important to detect them, or dually, find runs that
  are non-Zeno.  Over the years important improvements have been
  obtained in checking reachability properties for timed automata. We
  show that some of these very efficient optimizations make testing for
  Zeno runs costly. In particular we show NP-completeness for the
  LU-extrapolation of Behrmann et al. We analyze the source of this
  complexity in detail and give general conditions on extrapolation
  operators that guarantee a (low) polynomial complexity of Zenoness
  checking. We propose a slight weakening of the LU-extrapolation that
  satisfies these conditions.
\end{abstract}

\maketitle

\section*{Introduction}
\label{sec:introduction}

Timed automata~\cite{Alur:TCS:1994} are finite automata augmented with
a finite number of clocks. The values of the clocks increase
synchronously along with time in the states of the automaton and these
values can be compared to a constant and reset to zero while crossing
a transition. This model has been successfully used for verification
of real-time systems thanks to a number of
tools~\cite{behrmann2006uppaal,bozga1998kronos,DBLP:conf/isola/Wang06}.

Since timed automata model reactive systems that continuously interact
with the environment, it is interesting to consider questions related
to their infinite executions. An execution is said to be \emph{Zeno}
if an infinite number of events happen in a finite time interval. Such
executions are clearly unfeasible. During verification, the aim is to
detect if there exists a \emph{non-Zeno} execution that violates a
certain property. On the other hand while implementing timed automata,
it is required to check the presence of pathological Zeno
executions. This brings the motivation to analyze an automaton for the
presence of such executions.

The analysis of timed automata faces the challenge of handling its
uncountably many configurations. To tackle this problem, one considers
a finite graph called the \emph{abstract zone graph} (also known as
simulation graph) of the automaton. This finite graph captures the
semantics of the automaton.  In this paper, we consider the problems
of deciding if an automaton has a non-Zeno execution, dually a Zeno
execution, given its abstract zone graph as input.

An abstract zone graph is obtained by over-approximating each zone of
the so-called \emph{zone graph} with an abstraction function. The zone
graph in principle could be infinite and an abstraction function is
necessary for reducing it to a finite graph. The coarser the
abstraction, the smaller the abstract zone graph, and hence the
quicker the analysis of the automaton. This has motivated a lot of
research towards finding coarser abstraction
functions~\cite{Behrmann:STTT:2006}. The classic maximum-bound
abstraction uses as a parameter the maximal constant a clock gets
compared to in a transition. A coarser abstraction called the
LU-extrapolation was introduced in Behrmann et
al.~\cite{Behrmann:STTT:2006} for checking state reachability in timed
automata. This is the coarsest among all the implemented
approximations and is at present efficiently used in tools like
UPPAAL~\cite{behrmann2006uppaal}.

It was shown in~\cite{Tripakis:TOCL:2009,Tripakis:FMSD:2005} that even
infinite executions of the automaton directly correspond to infinite
paths in the abstract zone graph when one uses the maximum-bound
approximation. In addition, it was proved that the existence of a
non-Zeno infinite execution could be determined by adding an extra
clock to the automaton to keep track of time and analyzing the
abstract zone graph of this transformed
automaton~\cite{Tripakis:ARTS:1999,Tripakis:FMSD:2005}. A similar
correspondence was established in the case of the LU-extrapolation by
Li~\cite{Li:FORMATS:2009}. These results answer our question about
deciding non-Zeno infinite executions of the automaton from its
abstract zone graph. However, it was shown
in~\cite{Herbreteau:FMSD:2012} that adding a clock has an exponential
worst case complexity. A new polynomial construction was proposed for
the case of the classic maximum-bound approximation. But, the case of
the LU-extrapolation was not addressed.

In this paper, we prove that the non-Zenoness question turns out to be
\NP-complete for the LU-extrapolation, that is, given the abstract
zone graph over the LU-extrapolation, deciding if the automaton has a
non-Zeno execution is \NP-complete. We study the source of this
complexity in detail and give conditions on abstraction operators to
ensure a polynomial complexity. To this regard, we extend the
polynomial construction given in~\cite{Herbreteau:FMSD:2012} to an
arbitrary abstraction function and analyze when it stays
polynomial. It then follows that a slight weakening of the
LU-extrapolation makes the construction polynomial. In the second part
of the paper, we repeat the same for the dual question: given an
automaton's abstract zone graph, decide if it has Zeno executions.
Yet again, we notice \NP-completeness for the LU-extrapolation. We
introduce an algorithm for checking Zenoness over an abstract zone
graph with conditions on the abstraction operator to ensure a
polynomial complexity. We provide a different weakening of
LU-extrapolation that gives a polynomial solution to the Zenoness
question. Finally, we also prove that deciding if a given automaton
has a non-Zeno run (resp. Zeno run) is \PSPACE-complete when the input
is restricted to the automaton only.

Note that the reachability problem for timed automata is
\PSPACE-complete \cite{Alur:TCS:1994,Courcoubetis:FMSD:1992} and the
standard algorithms make use of the abstract zone graph to solve the
reachability problem. Therefore one could expect an object as complex
as the abstract zone graph to solve the Zeno-related questions
too. This makes the complexity analysis of the Zeno-related questions,
given both the automaton and abstract zone graph as input, all the
more relevant.
 
\subsection*{Related work}

As mentioned above, the LU-extrapolation was proposed
in~\cite{Behrmann:STTT:2006} and shown how it could be efficiently
used in UPPAAL for the purpose of reachability. The correctness of the
classic maximum-bound abstraction was shown
in~\cite{Bouyer:FMSD:2004}. Extensions of these results to infinite
executions occur
in~\cite{Tripakis:FMSD:2005,Li:FORMATS:2009}. Detection of non-Zeno
runs was already addressed in~\cite{Alur:TCS:1994}. Their approach
works on the region graph, but for correctness reasons, it cannot be used on (abstract)
zone graphs. The trick involving adding an extra clock for
non-Zenoness is discussed
in~\cite{Tripakis:ARTS:1999,Tripakis:FMSD:2005,Alur:SFM:2004,Herbreteau:FMSD:2012}.
The problem of checking existence of Zeno runs was formulated as early
as in~\cite{Tripakis:ARTS:1999}. A bulk
of the literature for this problem also directs
to~\cite{gomez2007efficient,bowman2006stop,Rinast:FORMATS:2012}. All
of these solutions
provide a sufficient-only condition for the absence of Zeno runs. This
is different from our proposed solution which gives a complete
solution (necessary and sufficient conditions) by analyzing the
abstract zone graph of the automaton.

\subsection*{Organization of the paper}

We start with the formal definitions of timed automata, abstract zone
graphs, the Zenoness and non-Zenoness problems in
Section~\ref{sec:ta-zeno}. Subsequently, we prove the \NP-hardness
of the non-Zenoness problem for the LU-extrapolation in
Section~\ref{sec:nz-npc-lu}. We then recall in
Section~\ref{sec:gzg-lu} the construction proposed
for non-Zenoness in~\cite{Herbreteau:FMSD:2012} and extend it to a
general abstraction operator giving conditions for polynomial
complexity. Section~\ref{sec:zenoness-problem} is dedicated to the
dual Zenoness problem. In Section~\ref{sec:discussion} we discuss some
interesting observations arising out of the entire complexity
analysis. We prove in Section~\ref{sec:pspace-complete-zeno} that
finding if an automaton has a (non-)Zeno run turns out to
\PSPACE-complete when the input is restricted to the automaton
only. This gives a complete characterization of the complexity of
finding (non-)Zeno runs in timed automata. We conclude the
paper with some perspectives in Section~\ref{sec:conclusion}.

\medskip

A shorter version of this paper appeared at the $22^{nd}$
International Conference on Concurrency Theory in the year
2011~\cite{Herbreteau:CONCUR:2011}. The current version includes the missing
proofs, a new discussion (Section~\ref{sec:discussion}) about two
observations arising out of the complexity analysis, and the new
result about the \PSPACE-completeness of the Zeno-related problems
when the only input is the automaton
(Section~\ref{sec:pspace-complete-zeno}).

\section{Zeno-related Problems for Timed Automata}
\label{sec:ta-zeno}

\subsection{Timed automata}
\label{sec:ta}

Let $\Rpos$ denote the set of non-negative real numbers. Let $X$ be a
set of variables, named \emph{clocks} hereafter. A \emph{valuation} is
a function $\val:X \mapsto \Rpos$ that maps every clock in $X$ to a
non-negative real value. We denote the set of all valuations by
$\Rpos^X$, and $\vali$ the valuation that maps every clock in $X$ to
$0$. For $\d\in\Rpos$, we denote $\val+\d$ the valuation mapping each
$x\in X$ to the value $\val(x)+\d$. For a subset $R$ of $X$, let
$\reset{R}\val$ be the valuation that sets $x$ to $0$ if $x\in R$ and
assigns $\val(x)$ otherwise. A \emph{clock constraint} is a
conjunction of constraints $x\# c$ for $x\in X$,
$\#\in\{<,\leq,=,\geq,>\}$ and $c\in \Nat$. We denote $\CC(X)$ the set
of clock constraints over clock variables $X$. For a valuation $\val$
and a constraint $\phi$ we write $\val \sat \phi$ when $\val$
satisfies $\phi$, that is, when $\phi$ holds after replacing every $x$
by $\val(x)$.

A \emph{Timed Automaton (TA)}~\cite{Alur:TCS:1994} $\Aa$ is a finite
automaton extended with clocks that enable or disable
transitions. Formally, $\Aa$ is a tuple $(Q,q_0,X,T)$ where $Q$ is a
finite set of states, $q_0\in Q$ is the initial state, $X$ is a finite
set of clocks and $T\subseteq Q\times\CC(X)\times 2^X\times Q$ is a
finite set of transitions. For each transition $(q,g,R,q')\in T$, $g$
is a clock constraint, also called a \emph{guard}
 that defines the valuations of the clocks that are allowed
to cross the transition, and $R$ is a set of clocks that are \emph{reset} on
the transition.

The semantics of a timed automaton $\Aa$ is a transition system of its
configurations.  A \emph{configuration} of $\Aa$ is a pair
$(q,\val)\in Q\times\Rpos^X$, with $(q_0,\vali)$ being the
\emph{initial configuration}. We have two kinds of transitions:
\smallskip
\begin{description}
\item[delay] $(q,\val)\to^\d(q,\val+\d)$ for some $\d\in \Rpos$;
  \smallskip
\item[action] $(q,\val)\to^t(q',\val')$ for some transition $t =
  (q,g,R,q')\in T$ such that $\val \sat g$ and $\val'=[R]\val$.
\end{description}
\smallskip

A \emph{run} of $\Aa$ is a (finite or infinite) sequence of
transitions starting from the initial configuration $(q_0,
\vali)$. Without loss of generality, we can assume that the first
transtition is a delay transition and that delay and action
transitions alternate. We write $(q, \val) \xra{\d, t} (q',\val')$ if
there is a delay transition $(q,\val) \to^\d (q, \val+\d)$ followed by
an action transition $(q, \val+\d) \to^{t} (q',\val')$.  So a run of
$\Aa$ can be written as:
\begin{equation*}
  (q_0, \val_0) \xra{\d_0, t_0} 
  (q_1, \val_1) \xra{\d_1, t_1} 
  (q_2, \val_2) \cdots 
  (q_i, \val_i) \cdots
\end{equation*}
where $(q_0, \val_0)$ represents the initial configuration $(q_0,
\vali)$.

\begin{definition}[Zeno/non-Zeno runs]
  \label{defn:non-zeno-run}
  An infinite run $(q_0,v_0)\xra{\d_0,t_0}\dots
  (q_i,\val_i)\xra{\d_i,t_i}\dots$ is \emph{Zeno} if time
  does not diverge, that is, $\sum_{i\geq 0} \d_i \le c$ for some $c
  \in \Rpos$. Otherwise it is
  \emph{non-Zeno}. 
\end{definition}

\begin{thm}
  \label{thm:Z-NZ-PSPACE}
  The problem of deciding if a timed automaton $\Aa$ has a non-Zeno
  run (resp. Zeno run) is \PSPACE-complete if $\Aa$ is the only input.
\end{thm}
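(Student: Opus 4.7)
The plan is to establish \PSPACE-membership by on-the-fly exploration of the region graph, and \PSPACE-hardness by reduction from reachability in timed automata.

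For the upper bound, the region graph of Alur and Dill has exponentially many regions in $|\Aa|$ but each region has polynomial size, so it can be traversed on the fly in polynomial space. For the non-Zeno question I would add a fresh clock $z$ that is repeatedly reset once it reaches $1$, and search for a reachable cycle in the region graph of this extended automaton whose loop fires a $z$-reset: time then diverges along any iteration of the cycle, yielding a non-Zeno run of $\Aa$, and conversely every non-Zeno run induces such a lasso. The Zeno question can be handled by a similar analysis, locating a reachable cycle of the region graph that can be executed by an infinite sequence of transitions whose delays sum to a finite total. In both cases the problem reduces to lasso detection in an exponentially sized graph, which is in nondeterministic polynomial space, hence in \PSPACE\ by Savitch's theorem.

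For the lower bound I would reduce from timed automaton reachability, which is \PSPACE-hard. Given $(\Aa,q_f)$, the reduction to non-Zenoness adds a fresh clock $y$, guards every original transition by $y \leq T$ for a polynomially-encoded but sufficiently large constant $T$, and attaches at $q_f$ a self-loop with guard $y \geq 1$ and reset $\{y\}$. Then $q_f$ is reachable in $\Aa$ iff the modified automaton admits a non-Zeno run: the ``if'' direction uses infinitely many iterations of the self-loop after a witnessing prefix; the ``only if'' direction follows because any run that stays in the $\Aa$-part has $y$ never reset and bounded by $T$, hence has total time at most $T$ and is therefore Zeno. For the dual Zenoness direction I would symmetrically force every original transition to consume at least one time unit (via a fresh clock $z$ with guard $z \geq 1$ and reset $\{z\}$, after rescaling the constants of $\Aa$ to accommodate this forced delay), and attach at $q_f$ a self-loop with trivial guard and empty reset that can be fired with zero delay, producing a Zeno run iff $q_f$ is reachable.

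The expected main obstacle is the lower bound, especially the Zenoness direction: forcing each original transition to last at least one time unit can invalidate tight guards such as $x<c$, so one must carefully rescale the constants of $\Aa$ and possibly insert intermediate ``wait'' states to preserve reachability of $q_f$ exactly while simultaneously ruling out Zeno runs that avoid the $q_f$-gadget. Similar care is needed to choose $T$ for the non-Zenoness reduction: it must be polynomially encodable while large enough not to prune reachable paths, which should work by taking $T$ proportional to the number of regions of $\Aa$ times its maximal constant.
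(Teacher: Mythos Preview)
Your upper bound is correct and standard; the paper uses the zone-graph-based \NP\ algorithms from Theorems~\ref{thm:nzp-np-extralu} and~\ref{thm:np-zeno} instead of regions, but either route gives \PSPACE\ via Savitch.

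Your non-Zeno lower bound is also correct and is a genuinely different (and arguably cleaner) argument than the paper's. The paper reduces directly from deterministic LBA acceptance, building a time-deterministic automaton that simulates the LBA step by step; you reduce black-box from timed-automaton reachability. Your choice of $T$ is fine: a simple path in the region graph has length at most $R$ and each delay step crosses one region boundary in time at most $M+1$, so $T = R(M+1)$ suffices, and this exponential value is polynomially encodable in binary.

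The Zeno lower bound, however, has a real gap. Your plan is to force every original transition to consume at least one time unit (fresh clock $z$ with guard $z\ge 1$, reset $\{z\}$) so that any infinite run avoiding $q_f$ is non-Zeno. You already flag that this may conflict with tight upper guards and propose to ``rescale the constants''. But rescaling does not help with zero-checks: a guard $x\le 0$ or $x=0$ stays $x\le 0$ or $x=0$ after multiplying all constants by $K$, and if the original automaton needs a transition with $\delta=0$ (say a reset of $x$ immediately followed by $x=0$), then no rescaling makes that step compatible with $z\ge 1$. Inserting ``wait states'' runs into the same wall: you cannot wait before an $x=0$ guard once $x$ has been reset. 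So the black-box reduction from general TA reachability does not go through for the Zeno direction as sketched.

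The paper sidesteps exactly this obstacle by \emph{not} treating reachability as a black box. It reduces directly from LBA acceptance and builds a time-deterministic automaton in which every simulation step involves equality guards of the form $x=c$ with $c\ge 1$, so each step takes a fixed positive amount of time by construction; the base automaton therefore has no Zeno run at all, and adding a self-loop with guard $x\le 0$ at the accepting states gives the Zeno reduction immediately. If you want to keep your overall strategy, the fix is to reduce from the \emph{specific} LBA-to-TA construction underlying the \PSPACE-hardness of reachability (which already yields an automaton whose infinite runs are non-Zeno), rather than from an arbitrary timed automaton.
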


A proof of Theorem~\ref{thm:Z-NZ-PSPACE} is given in
Section~\ref{sec:pspace-complete-zeno}
(page~\pageref{sec:pspace-complete-zeno}) that relies on results in
Sections~\ref{sec:gzg-lu} and~\ref{sec:zeno-P-algorithm}.

As can be seen, the number of configurations $(q, \val)$ could be
uncountable. We now define an abstract semantics for timed
automata. The abstract semantics is usually used for the verification
of timed automata. 

\subsection{Symbolic semantics}
\label{sec:ss-ta}

We begin with the definition of special \emph{sets of valuations}
called zones. A \emph{zone} is a set of valuations defined by a
conjunction of two kinds of clock constraints: for $x_i, x_j \in X$
\begin{align*}
  x_i \sim c \\
  x_i - x_j \sim c
\end{align*}
where $\sim \in \{ \le, <, = , >, \ge\}$ and $c \in \Integers$. An
example of a zone over two clocks $x_1$ and $x_2$ is illustrated in
Figure~\ref{fig:zone-example}. The shaded area is the zone represented
by the conjunction of the six constraints shown in the figure.

\begin{figure}[t]
  \centering \begin{tikzpicture}[scale=0.8] 
  \draw[->, very thick, >=stealth] (0,0) -- (0,5); \draw[->, very
  thick, >=stealth] (0,0) -- (6,0);

  \draw (6.3,0) node {\scriptsize $x_1$} (7, 0); \draw (0, 5.3) node
  {\scriptsize $x_2$} (0,6.3);
  \draw (-0.2,-0.2) node {\scriptsize  $0$} (0,-0.2);

  \draw[very thin,  fill=gray, nearly transparent]
  (1,1) -- (1,2) -- (3,4) -- (5,4) -- (5,3) -- (3,1) -- cycle; 
  
  \draw (0,1) -- (4,5);
  \draw[white] (4.5,5.5) -- node[sloped, black] {\footnotesize $x_2 - x_1
  < 1$} (5.5,6.5) ;
  
\draw (1,0) -- (1,5);
  \draw[white] (0.5, -0.3) -- node[black] {\footnotesize $x_1 > 1$} (1,
  -0.3);   

  \draw (0,1) -- (7,1);
  \draw[white] (7.2, 1) -- node[black] {\footnotesize $x_2 > 1$} (8,1);  

  \draw (2,0) -- (7,5);
  \draw[white] (7.5,5.5) -- node[sloped, black] {\footnotesize $x_1 - x_2
    < 2$} (8,6);

  \draw (5,0) -- (5,5);
  \draw[white] (4.5, -0.3) -- node[black] {\footnotesize $x_1 < 5$} (5,
  -0.3); 

  \draw (0,4) -- (7,4);
  \draw[white] (7.2, 4) -- node[black] {\footnotesize $x_2 < 4$} (8,4); 

\end{tikzpicture}

  \caption{An example of a zone.}
  \label{fig:zone-example}
\end{figure}
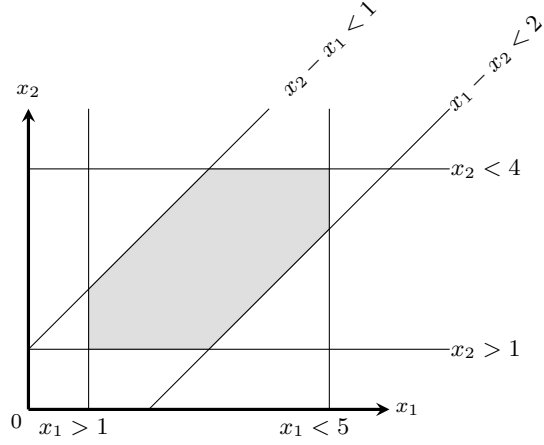

Zones can be efficiently represented by Difference Bound Matrices
(DBMs)~\cite{Dill:AVMFSS:1989}. A DBM representation of a zone $Z$ is
a $|X|+1$ square matrix $(Z_{ij})_{i,j\in[0;|X|]}$ where each entry
$Z_{ij}=(c_{ij},\preccurlyeq_{ij})$ represents the constraint
$x_i-x_j\preccurlyeq_{ij} c_{ij}$ for $c_{ij}\in\Integers$ and
$\preccurlyeq_{ij}\in\{<,\leq\}$ or $(c_{ij}, \preccurlyeq_{ij}) =
(\infty, <)$. A special variable $x_0$ encodes the value $0$. Hence,
in a DBM $x_i > 4$ is encoded as $x_0 - x_i < -4$.

The \emph{symbolic semantics} (or \emph{zone graph}) of an automaton
$\Aa$ is the transition system $\SG(\Aa)$ given by the tuple $(S, s_0,
\tto)$, where $S$ is the set of nodes, $s_0$ is the initial node and
$\tto$ is the transition relation. Each node in $S$ is a pair $(q, Z)$
consisting of a state $q$ of the automaton and a zone $Z$. The initial
node $s_0$ is $(q_0, Z_0)$ where $Z_0 = \{ \vali+ \d~|~ \d \in
\Rpos\}$. For every $t = (q, g, R, q') \in T$, there exists a
transition $\tto^t$ from a node $(q,Z)$ as follows:
\begin{align*}
  (q, Z) \tto^t (q', Z') \qquad \text{where } Z' =\set{ \val'~|~
    \exists \val \in Z, ~\exists \d \in \Rpos: (q,\val) \to^t\to^\d
    (q',\val')}
\end{align*}

In the above definition, $\to^t\to^\d$ denotes the discrete transition
$t$ followed by a delay transition of $\d$ time units.  It can be
shown that if $Z$ is a zone, then $Z'$ is a zone. Moreover, a DBM
representation of $Z'$ can be computed from the DBM representation of
$Z$ (see for
instance~\cite{Bouyer:FMSD:2004}). Figure~\ref{fig:zg-example} shows
an example of an automaton and its zone graph.

\begin{figure}[t]
  \centering \begin{tikzpicture}[font=\footnotesize]
  \begin{scope}
    \tikzstyle{every node}=[circle,draw]
    \node (q0) at (0,0) {$q_0$};
    \node (q1) at (2,0) {$q_1$};
    \node (q2) at (-2,0) {$q_2$};
  \end{scope}

  \begin{scope}[->,line width=1pt]
    \draw (0,0.8) -- (q0);
    \draw (q0) edge [bend left] node[above] {$\{x_1\}$} (q1);
    \draw (q1) edge [bend left] node[below] {$x_1 \le 2$} (q0);
    \draw (q0) edge node [above]  {$x_2 > 5$} (q2);
  \end{scope}
\end{tikzpicture}

  \begin{tikzpicture}[font=\footnotesize]
  \begin{scope}
    \tikzstyle{every node}=[draw,fill=white]
    \node (z0) at (0,0) {$q_0, \ 0 \le x_1 = x_2$};
    \node (z2) at (-5,0) {$q_2, \  5 < x_1 = x_2$};
    \node (z1) at (4,0) {$q_1, \ 0 \le x_1 \le x_2$};
    \node (z01) at (0,-2) {$q_0, \ 0 \le x_1 \le x_2$};
    \node (z21) at (-5,-2) {$q_2, \  0 \le x_1 \le x_2 \, \land \,
      x_2>5$};
  \end{scope}

  \begin{scope}[line width=1pt]
    \draw[->] (0,0.7) -- (z0);
    \draw[->] (z0) -- node[above] {$x_2 > 5$} (z2);
    \draw[->] (z0) -- node[above] {$\{x_1\}$} (z1);
    \draw[->] (z01) -- node[above] {$x_2 > 5$} (z21);
    \draw[->] (z1) edge[bend left] node[right] {$x_1\le 2$} (z01);
    \draw[->] (z01) edge node[above] {$\{x_1\}$} (z1);
  \end{scope}
\end{tikzpicture}

  \caption{A timed automaton (top) and its zone graph (bottom).}
  \label{fig:zg-example}
\end{figure}
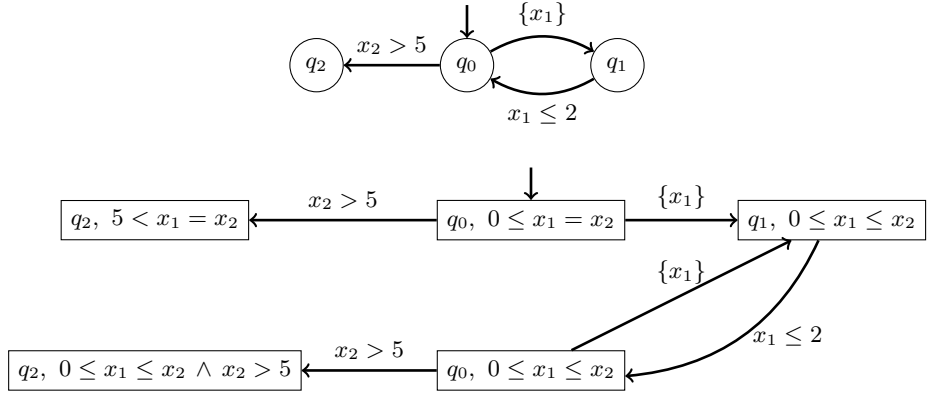

Several definitions of $\SG(\Aa)$ have been considered in the
literature. They differ on the definition of $\tto$. People have
considered graphs with both action $\tto^t$ and delay $\tto^\d$
transitions or, with only combined transitions $\tto^t$, but
corresponding to the reverse consecution $\to^\d\to^t$
(delay-then-action). Our results do not depend on a specific choice,
but have a simpler presentation using the chosen symbolic semantics.

Although the zone graph $\SG(\Aa)$ deals with sets of valuations
instead of valuations themselves, the zone graph could stil be
infinite.  Consider the automaton $\Aa_{inf}$ shown in
Figure~\ref{fig:a-inf}, with two clocks $x_1$ and $x_2$. The initial
node is given by $(q_0, x_1 = x_2 ~\land~ x_1 \ge 0)$. The transition
to $q_1$ gives the node $(q_1, x_1 = x_2 ~\land~ x_1 \ge 0)$. The only
transition from $q_1$ taken from this node gives the node $(q_1, x_1 -
x_2 = 1 ~\land~ x_1 \ge 0)$, which is a new node. This node has its
own successors and the process continues. Finally at $q_1$ we have the
following zones in the zone graph $\SG(\Aa_{inf})$:
\begin{equation*}
  (~x_1 - x_2 = k ~\land~ x_1 \ge 0~) ~ \text{ for all } k \in \Nat 
\end{equation*}
This is pictorially shown in Figure~\ref{fig:a-inf}. It is however
sufficient to consider a finite abstraction of the zone graph to
capture all the behaviors of a timed automaton. Several abstractions
have been introduced to obtain a finite graph from $\SG(\Aa)$.

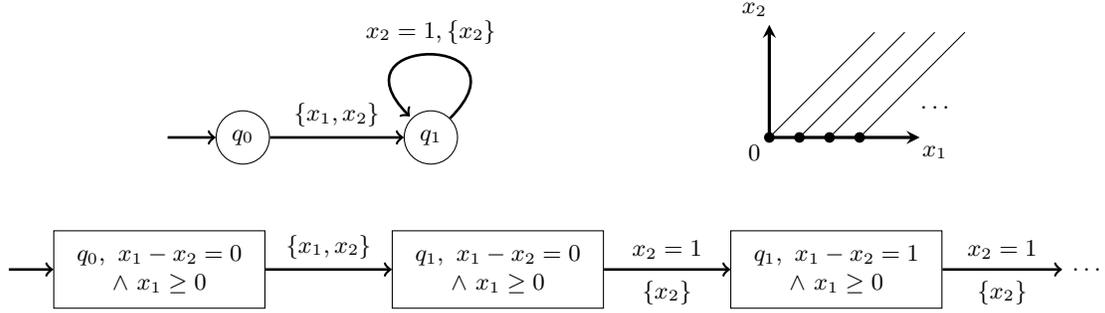
\begin{figure}[t]
  \centering{
  \begin{tikzpicture}[font=\footnotesize]
    \begin{scope}
      \tikzstyle{every node} = [circle,minimum size = 0.7cm,draw]
      \node (s0) at (0,0) {$q_0$};
      \node (s1) at (2.5,0) {$q_1$};
    \end{scope}
    \begin{scope}[->,line width=1pt]
      \draw (-1,0) -- (s0);
      \draw (s1) edge[loop=above] node[above] {$x_2=1, \{x_2\}$} (s1);
      \draw (s0) edge node[above] {$\{x_1,x_2\}$} (s1);
    \end{scope}

    \begin{scope}[xshift=7cm]
      \draw[->,very thick, >=stealth] (0,0) -- (2,0);
      \draw[->,very thick, >=stealth] (0,0) -- (0,1.5);

      \node at (-0.2,-0.2) {$0$};
      \node at (2.2, -0.2) {$x_1$};
      \node at (-0.2, 1.7) {$x_2$};
      
      \node at (2.2, 0.4) {$\mathbf{\dots}$};

      \fill (0,0) circle (2pt);
        
      \draw[thin] (0,0) -- (1.4,1.4);
      
      \fill (0.4,0) circle (2pt);
      \draw [thin] (0.4,0) -- (1.8,1.4);
      
      \fill (0.8,0) circle (2pt);
      \fill (1.2,0) circle (2pt);
      \draw[thin] (0.8,0) -- (2.2,1.4);
      \draw[thin] (1.2,0) -- (2.6,1.4);
    \end{scope}
  \end{tikzpicture}

  \vspace{0.7cm}
  
  \begin{tikzpicture}[font=\footnotesize]
    \begin{scope}
      \tikzstyle{every node}=[draw]
      \node (z0) at (0,0) {$\begin{array}{c}
          q_0, \ x_1 - x_2 = 0 \\
          \land \ x_1 \ge 0
        \end{array}$};
      \node (z1) at (4.5,0) {
        $\begin{array}{c}
          q_1, \ x_1 -x_2 = 0 \\
          \land \ x_1 \ge 0
        \end{array}$};
      \node (z2) at (9, 0) {
        $\begin{array}{c}
          q_1, \ x_1 - x_2 = 1 \\
          \land \ x_1 \ge 0
        \end{array}$};
    \end{scope}
    \begin{scope}[line width=1pt]
      \draw[->] (-2, 0) -- (z0);
      \draw[->] (z0) -- node[above] {$\{x_1,x_2\}$} (z1);
      \draw[->] (z1) -- node[above] {$x_2=1$}
      node[below] {$\{ x_2\}$} (z2);
      \draw[->] (z2) -- node[above] {$x_2=1$}
      node[below] {$\{ x_2\}$} (12, 0);
    \end{scope}
    \node[right] at (12,0) {$\dots$};
  \end{tikzpicture}
}

  \caption{Automaton $\Aa_{inf}$ (top left), the graph of zones
    obtained at $q_1$ (top right) and a part of the infinite zone
    graph $\SG(\Aa_{inf})$ (bottom).}
  \label{fig:a-inf}
\end{figure}

\subsection{Abstract symbolic semantics}
A \emph{finite abstraction} is a function $\abs: 2^{\Rpos^{|X|}} \to
2^{\Rpos^{|X|}}$ such that for every zone $Z$: $\abs(Z)$ is a zone,
$Z\incl\abs(Z)$, $\abs(\abs(Z))=\abs(Z)$ and $\abs$ has a finite
range. An abstraction operator defines an abstract semantics.
\begin{align*}
  (q,Z) \tto^t_{\abs }(q',\abs(Z'))
\end{align*}
when $\abs(Z)=Z$ and $(q,Z)\tto^t (q',Z')$ in $\SG(\Aa)$. 

The \emph{abstract symbolic semantics} (or the \emph{abstract zone
  graph}) of $\Aa$ is the transition system $\SG^\abs(\Aa)$ induced by
$\tto_{\abs}$ starting from the node $(q_0, \abs(Z_0))$, where
$(q_0,Z_0)$ is the initial node of $\SG(\Aa)$.

A \emph{path} $\pi$ in $\SG^\abs(\Aa)$ is a (finite or infinite)
sequence of transitions
\begin{align*}
  (q_0, Z_0') \tto^{t_0}_\abs 
  (q_1, Z_1') \tto^{t_1}_\abs \cdots 
  (q_i, Z_i') \tto^{t_i}_\abs \cdots
\end{align*}

We say that a run $\rho$: $(q_0,v_0)\xra{\d_0,t_0}\dots
(q_i,\val_i)\xra{\d_i,t_i}\dots$ of $\Aa$ is an \emph{instance} of the
path $\pi$ of $\SG^\abs(\Aa)$ as described above, if $\rho$ and $\pi$
agree on the sequence of transitions $t_0, t_1, \dots$, and if for
every $i\geq 0$, $(q_i,\val_i)$ and $(q_i,Z_i')$ coincide on $q_i$,
and $\val_i\in Z_i'$. By definition of $Z_i'$, this implies
$\val_i+\d_i\in Z'_i$.

An abstraction $\abs$ is \emph{sound} if every path of $\SG^\abs(\Aa)$
can be instantiated as a run of $\Aa$. Conversely, $\abs$ is
\emph{complete} when every run of $\Aa$ is an instance of some path in
$\SG^\abs(\Aa)$. If an abstraction $\mathfrak{b}$ satisfies
$\mathfrak{b}(Z) \incl \abs(Z)$ for every $Z$, it is easy to see that
the abstract zone graph $\SG^{\mathfrak{b}}(\Aa)$ is bigger than 
$\SG^\abs(\Aa)$.

\subsection{Bounds and finite abstractions}

A standard way to obtain finite abstractions is to consider as a
parameter, a bound function $M : X \mapsto \Nat \cup \{-\infty\}$ that
associates to each clock $x$, the maximum integer $c$ appearing in a
guard involving $x$. Abstractions $\ExtraM$~\cite{Bouyer:FMSD:2004}
and $\ExtraMp$~\cite{Behrmann:STTT:2006} are well-known finite
abstractions that depend on such a bound function $M$.

It has been observed that considering separately the guards that
lower-bound clocks and guards that upper-bound clocks leads to much
coarser abstractions and hence to much smaller abstract zone
graphs. This has given rise to abstractions $\ExtraLU$ and
$\ExtraLUp$~\cite{Behrmann:STTT:2006} which are currently used in
implementations. We recall the definitions of $\ExtraLU$ and
$\ExtraLUp$ below.

Let $L: X\mapsto \Nat\cup\{-\infty\}$ and $U: X\mapsto
\Nat\cup\{-\infty\}$ be two maps that associate to each clock in $\Aa$
its maximal lower bound and its maximal upper bound respectively: that
is, for every $x\in X$, $L(x)$ is the maximal integer $c$ such that
$x>c$ or $x\geq c$ appears in some guard of $\Aa$. We let
$L(x)=-\infty$ if no such $c$ exists. Similarly, we define $U(x)$ with
respect to clock constraints like $x\leq c$ and $x<c$. We define
$\ExtraLU(Z)=Z^{LU}$ and $\ExtraLUp(Z)=Z^{LU+}$ as:

{\footnotesize\noindent
  \begin{minipage}{.465\textwidth}
    \begin{equation*}
      Z_{ij}^{LU}=
      \begin{cases}
        (\infty,<) & \text{if}\ c_{ij}>L(x_i)\\
        (-U(x_j),<) & \text{if}\ -c_{ij}>U(x_j)\\
        Z_{ij} & \text{otherwise}
      \end{cases}
    \end{equation*}
  \end{minipage}
  \vrule
  \begin{minipage}{.53\textwidth}
    \begin{equation*}
      Z_{ij}^{LU+}=
      \begin{cases}
        (\infty,<) & \text{if}\ c_{ij}>L(x_i)\\
        (\infty,<) & \text{if}\ -c_{0i}>L(x_i)\\
        (\infty,<) & \text{if}\ -c_{0j}>U(x_j),i\neq 0\\
        (-U(x_j),<) & \text{if}\ -c_{0j}>U(x_j),i=0\\
        Z_{ij} & \text{otherwise}
      \end{cases}
    \end{equation*}
  \end{minipage}
} 

In the above, we set $L(x_0)=U(x_0)=0$ for the special clock
$x_0$. The abstraction $\ExtraM$ (resp. $\ExtraMp$) is obtained from
$\Extra_{LU}$ (resp. $\ExtraLUp$) by replacing every occurrence of $L$
and $U$ by $M$ which maps every clock $x$ to $\max(L(x),U(x))$. These
abstractions compare in the following way
(cf. Figure~\ref{fig:abs_lit}).

\begin{theorem}[\cite{Behrmann:STTT:2006}]
  \label{thm:inclusion-abstractions}
  For every zone $Z$, we have: $Z\subseteq \ExtraM(Z)\subseteq
  \ExtraMp(Z)$; $Z \subseteq \ExtraLU(Z)\subseteq \ExtraLUp(Z)$ and
  $\ExtraMp(Z) \subseteq \ExtraLUp(Z)$.
\end{theorem}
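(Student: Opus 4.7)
My plan is to argue entirely at the level of DBM representations, comparing the two zones in each inclusion coordinate by coordinate. The standard order on bound pairs $(c,\preccurlyeq)$ captures exactly set inclusion on the half-spaces $\{x_i-x_j\preccurlyeq c\}$, so to show $Z\subseteq Z'$ it suffices to verify $Z_{ij}\le Z'_{ij}$ for every pair $(i,j)$. I take every zone in canonical (shortest-path closed) form, which is harmless since every zone has such a representative. The pointwise inequalities $L(x)\le M(x)$ and $U(x)\le M(x)$ will handle all comparisons between the $LU$ operators and the $M$ operators.

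To establish $Z\subseteq\ExtraM(Z)$ and $Z\subseteq\ExtraLU(Z)$, I inspect each branch of the definition of $Z^{LU}_{ij}$. The value $(\infty,<)$ is trivially no smaller than $(c_{ij},\preccurlyeq_{ij})$, and the value $(-U(x_j),<)$ is assigned precisely when $-c_{ij}>U(x_j)$, i.e.\ $c_{ij}<-U(x_j)$, so once again the replacement weakens the entry. Specializing to $L=U=M$ gives the $\ExtraM$ version.

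The only subtle inclusion is $\ExtraLU(Z)\subseteq\ExtraLUp(Z)$ (and by the same argument $\ExtraM(Z)\subseteq\ExtraMp(Z)$), where I need to show that whenever $\ExtraLU$ rewrites an entry to $(-U(x_j),<)$, the corresponding $\ExtraLUp$ rewrite is at least as weak. This is where canonicity enters. Suppose Case~2 of $\ExtraLU$ fires at $(i,j)$, so $-c_{ij}>U(x_j)$. Shortest-path closedness gives $c_{0j}\le c_{0i}+c_{ij}$, and the implicit bound $x_i\ge 0$ (valid in any zone reachable from $Z_0$) forces $c_{0i}\le 0$, so $-c_{0j}\ge -c_{ij}>U(x_j)$. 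Hence Case~3 (if $i\neq 0$) or Case~4 (if $i=0$) of $\ExtraLUp$ is triggered, returning $(\infty,<)$ or $(-U(x_j),<)$ respectively, both $\ge(-U(x_j),<)$. If Case~2 of $\ExtraLU$ does not fire at $(i,j)$, then either Case~1 of both operators fires simultaneously or $\ExtraLU$ leaves $Z_{ij}$ unchanged, and in the latter case any value $\ExtraLUp$ could produce is, by the previous paragraph, no smaller than $Z_{ij}$.

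For $\ExtraMp(Z)\subseteq\ExtraLUp(Z)$, the three triggering inequalities for the $(\infty,<)$ cases of $\ExtraLUp$ are $c_{ij}>L(x_i)$, $-c_{0i}>L(x_i)$, and $-c_{0j}>U(x_j)$, each implied by the corresponding inequality with $M$ on the right-hand side; so every $(\infty,<)$ rewrite performed by $\ExtraMp$ is also performed by $\ExtraLUp$. In the remaining Case~4, $-U(x_j)\ge -M(x_j)$ gives the inclusion directly. The main obstacle in the whole argument is the canonical-form step in the previous paragraph: without shortest-path closedness one can exhibit a non-tight DBM where $-c_{ij}>U(x_j)$ but $-c_{0j}\le U(x_j)$, and the entrywise comparison breaks even though the underlying zones still satisfy the inclusion. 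Everything else is a routine matching of the piecewise definitions.
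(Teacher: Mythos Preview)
The paper does not give its own proof of this theorem; it is quoted as a result from \cite{Behrmann:STTT:2006} and used without argument. There is therefore nothing in the paper to compare your proof against.

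Your entry-by-entry DBM argument is correct and is the natural route to these inclusions. The one place that requires care---showing that whenever Case~2 of $\ExtraLU$ fires at $(i,j)$ the condition $-c_{0j}>U(x_j)$ also holds, so that Case~3 or~4 of $\ExtraLUp$ is triggered---is handled correctly via canonicity and $c_{0i}\le 0$. One small remark: your parenthetical ``valid in any zone reachable from $Z_0$'' undersells the claim. Since valuations take values in $\Rpos$, \emph{every} zone (reachable or not) satisfies $x_i\ge 0$, hence $c_{0i}\le 0$ in canonical form; so your argument already covers the full ``for every zone $Z$'' in the statement, and you need not restrict to reachable zones.
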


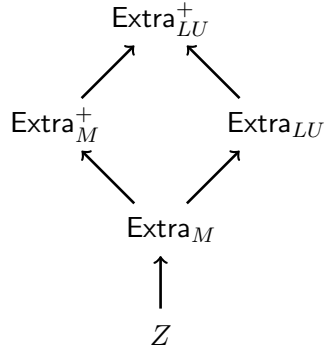
\begin{figure}[!t]
  \centering
  \begin{tikzpicture}[scale=0.7,line width=1pt]
    \draw (2,-2) node {$Z$} (4,-2); \draw (0,2) node {$\ExtraMp$}
    (1,2); \draw (4.2,2) node {$\ExtraLU$ } (1,2); \draw (2.2,0) node
    {$\ExtraM$ } (4,0); \draw (2,4) node {$\ExtraLUp$} (3,4);

    \draw[->] (2, -1.5) -- (2,-0.5); \draw[->] (1.5,0.5) -- (0.5,1.5);
    \draw[->] (0.5, 2.5) -- (1.5,3.5); \draw[->] (2.5, 0.5) -- (3.5,
    1.5); \draw[->] (3.5, 2.5) -- (2.5, 3.5);
  \end{tikzpicture}
  \caption{Comparison of the finite
    abstractions~\cite{Behrmann:STTT:2006}.}
  \label{fig:abs_lit}
\end{figure}

Figure~\ref{fig:abst-examples} shows a zone and depicts the action of
the different abstractions on it. In the rest of the paper, we say
$M$-extrapolations for $\ExtraM$ and $\ExtraMp$; and
$LU$-extrapolations for $\ExtraLU$ and $\ExtraLUp$.


\begin{figure}[t]
  \centering
\begin{tikzpicture}

\begin{scope}[scale=0.6]
\draw[->, very thick, >=stealth] (0,0) -- (0,10); 
\draw[->, very thick, >=stealth] (0,0) -- (10,0);

\node at (-0.4, -0.4) {\scriptsize $0$};
\node at (10, -0.4) {\scriptsize $x_1$};
\node at (-0.4, 10) {\scriptsize $x_2$};


\fill[gray!30] (8,9) -- (6,9) -- (4,7) -- (4,3) -- (5,3) -- (8,6)
-- cycle;

\draw (0,3) -- (7,10);
\draw (2,0) -- (10,8);
\draw (4,7) -- (4,3) -- (5,3);
\draw (6,9) -- (8,9) -- (8,6);

\draw[thick] (1,0) -- (1,10);
\draw[thick] (3,0) -- (3,10);
\draw[thick] (0,6) -- (10,6);
\draw[thick] (0,4) -- (10,4);
\node at (1, -0.5) {\scriptsize $\mathbf{U(x_1)}$};
\node at (3, -0.5) {\scriptsize $\mathbf{L(x_1)}$};
\node at (-0.8, 6) {\scriptsize $\mathbf{U(x_2)}$};
\node at (-0.8, 4) {\scriptsize $\mathbf{L(x_2)}$};


\draw (3,3) -- (4,3);
\fill[pattern=dots] (3,6) -- (4,7) -- (4,3) -- (3,3); 
\fill[pattern=dots] (7,10) -- (6,9) -- (10,9) -- (10,10);
\fill[pattern=dots] (8,9) -- (8,6) -- (10,8) -- (10,9);


\draw (5,3) -- (10,3);
\fill[pattern= horizontal lines] (5,3) -- (10,8) -- (10,3); 
\fill[pattern= vertical lines] (3,10) -- (3,6) -- (7,10);


\draw (1,3) -- (3,3);
\fill[pattern= north west lines] (1,3) -- (1,4) -- (3,6) -- (3,3);
\fill[pattern= grid] (1,4) -- (1,10) -- (3,10) -- (3,6);

\end{scope}

\begin{scope}[xshift=7cm]
\node[left] at (2,4) {\footnotesize $Z$ :};
\node[left] at (2,3.5) {\footnotesize $\ExtraM(Z)$ :};
\node[left] at (2,3) {\footnotesize $\ExtraMp(Z)$ :};
\node[left] at (2,2.5) {\footnotesize $\ExtraLU(Z)$ :};
\node[left] at (2,2) {\footnotesize $\ExtraLUp(Z)$ :};

\foreach \y in {3.85, 3.35, 2.85, 2.35, 1.85}
  \draw[fill = gray!20] (2, \y) rectangle (2.3, \y+0.3);

\foreach \y in {3.35, 2.85, 2.35, 1.85}
{
\draw[fill, pattern=dots] (2.8, \y) rectangle (3.1, \y+0.3);
\node at (2.55, \y+0.15) {\footnotesize $\cup$};
}

\foreach \y in {2.85, 2.35, 1.85}
{
\draw[fill, pattern=horizontal lines] (3.6, \y) rectangle (3.9,
\y+0.3);
\node at (3.35, \y+0.15) {\footnotesize $\cup$}; 
}

\foreach \y in {2.85, 1.85}
{
\draw[fill, pattern=vertical lines] (4.4, \y) rectangle (4.7, \y +
0.3);
\node at (4.15, \y+0.15) {\footnotesize $\cup$}; 
}

\foreach \y in {2.35, 1.85}
{
\draw[fill, pattern=north west lines] (5.2, \y) rectangle (5.5,
\y+0.3);
\node at (4.95, \y+0.15) {\footnotesize $\cup$};
}

\draw[fill, pattern=grid] (6,1.85) rectangle (6.3, 2.15); 
\node at (5.75, 2) {\footnotesize $\cup$};

\end{scope}

\end{tikzpicture}
\caption{An illustration of the abstraction hierarchy shown in
  Figure~\ref{fig:abs_lit}.}
\label{fig:abst-examples}
\end{figure}

 
Let us look at the timed automaton $\Aa_{inf}$ of
Figure~\ref{fig:a-inf}. For this automaton, the maximum bounds
function $M$ sets $M(x_1) = -\infty$ and $M(x_2) = 1$. Define:
\begin{align*}
\text{for } k \in \Nat, \quad Z_k \equiv (x_1 - x_2 = k) \ \land \
(x_1 \ge 0)
\end{align*}
By the definition of $\ExtraM$, every zone $Z_k$ has $\ExtraM(Z_k)$
given by the constraints $(x_1 \ge 0 \land x_2 \ge 0$). Therefore, the
zone graph $\SG^\abs(\Aa_{inf})$ has two nodes for $\abs$ being any of
the abstractions defined above.

\begin{figure}[t]
  \begin{tikzpicture}[font=\footnotesize]
  \begin{scope}
    \tikzstyle{every node}=[draw]
    \node (z0) at (0,0) {$q_0, \ x_1 \ge 0 \land x_2 \ge 0$};
    \node (z1) at (5,0) {$q_1, \ x_1 \ge 0 \land x_2 \ge 0$};
  \end{scope}
  \begin{scope}[line width=1pt]
    \draw[->] (-2, 0) -- (z0);
    \draw[->] (z0) -- node[above] {$\{x_1, x_2\}$} (z1);
    \draw[->] (z1) edge[loop=above] node[above] {$x_2=1, \{x_2\}$}
    (z1);
  \end{scope}
\end{tikzpicture}

  \caption{$\SG^\abs(\Aa_{inf})$ for the automaton $\Aa_{inf}$ shown in
    Figure~\ref{fig:a-inf}. We get the same abstract zone graph
    $\SG^\abs(\Aa_{inf})$ for $\abs$ being either $\ExtraM, \ExtraMp,
    \ExtraLU$ or $\ExtraLUp$.}
  \label{fig:abszg-eg}
\end{figure}

\subsection{Zenoness and non-Zenoness problems}

A classical verification problem for timed automata is to answer state
reachability queries. For this purpose, we consider only the runs of
$\Aa$ and paths in $\SG^\abs(\Aa)$ that are \emph{finite} sequences of
transitions. A reachability query asks if there exists a run of $\Aa$
leading to a given state. Such problems can be solved using
$\SG^\abs(\Aa)$ when $\abs$ is sound and complete. This is true for
the $M$-extrapolations and $LU$-extrapolations.

\begin{theorem}[\cite{Bouyer:FMSD:2004,Behrmann:STTT:2006}]
  \label{thm:extraLU-extraM-sound-complete-finite}
  $\ExtraM$, $\ExtraMp$, $\ExtraLU$ and $\ExtraLUp$ are sound and
  complete for finite sequences of transitions.
\end{theorem}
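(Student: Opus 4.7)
The statement has two directions: \emph{completeness}, that every run of $\Aa$ is an instance of some path in $\SG^\abs(\Aa)$, and \emph{soundness}, that every path of $\SG^\abs(\Aa)$ can be instantiated as a run of $\Aa$. I would handle them separately, with completeness being essentially uniform across the four abstractions and soundness requiring abstraction-specific work.

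For completeness, proceed by induction on the length of a finite run $(q_0,\val_0)\xra{\d_0,t_0}\cdots\xra{\d_{n-1},t_{n-1}}(q_n,\val_n)$. The base case uses $\val_0 = \vali \in Z_0 \subseteq \abs(Z_0)$. For the inductive step, the defining property $Z \subseteq \abs(Z)$ and the definition of $\tto_\abs$ (which applies to every valuation in $\abs(Z)$) guarantee that the concrete successor of any valuation of $\abs(Z)$ by $\xra{\d,t}$ lies in $\abs(Z')$. Thus $\val_i \in Z_i'$ for every $i$, so the run is an instance of the corresponding abstract path.

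For soundness, the standard technique is to exhibit, for each abstraction, a preorder $\preccurlyeq_\abs$ on $\Rpos^X$ satisfying two properties: (i) $\abs(Z) = \{\val' \mid \exists\, \val \in Z,\ \val \preccurlyeq_\abs \val'\}$, so that $\abs(Z)$ is exactly the $\preccurlyeq_\abs$-closure of $Z$; and (ii) $\preccurlyeq_\abs$ is a time-abstract simulation, meaning that whenever $\val \preccurlyeq_\abs \val'$ and $(q,\val')\xra{\d',t}(q',\val'')$ is a step allowed by $\Aa$, there exist $\d$ and $\val'''$ with $(q,\val)\xra{\d,t}(q',\val''')$ and $\val''' \preccurlyeq_\abs \val''$. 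For $\ExtraM$ and $\ExtraMp$ the relation is the classical $M$-bound equivalence: $\val \preccurlyeq_M \val'$ iff for every clock $x$ either $\val(x) = \val'(x)$ or both $\val(x), \val'(x)$ exceed $M(x)$. For $\ExtraLU$ and $\ExtraLUp$ one uses the finer $\preLU$ of Behrmann et al., whose asymmetric clauses allow $\val(x)$ and $\val'(x)$ to be unrelated above $L(x)$ and allow $\val'(x)$ to be arbitrarily above $U(x)$ while $\val(x)$ stays larger. Once (i) and (ii) are in place, soundness follows by an induction on the length of an abstract path: start from any $\val_0' \in \abs(Z_0)$, pick $\val_0 \in Z_0$ with $\val_0 \preccurlyeq_\abs \val_0'$ via (i), and at each step use (ii) to convert the abstract delay-then-action step into a concrete one, landing in the next abstract zone again by (i).

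The main obstacle is the verification of (ii) for $\preLU$: the asymmetric treatment of lower and upper bounds forces a clock-by-clock construction of the matching delay $\d$, and one must simultaneously verify that the guard $g$ still holds at $\val+\d$ and that applying the reset $R$ preserves $\preLU$. Verifying (i) for $\ExtraLU$ and $\ExtraLUp$ is also nontrivial: it requires unfolding the DBM-level definition and checking entry by entry that the relaxed DBM coincides with the $\preccurlyeq_\abs$-saturation of $Z$, with particular care for $\ExtraLUp$ where the relaxation of $Z_{ij}$ depends on the diagonal entries $c_{0i}$ and $c_{0j}$. Once these technical lemmas (which form the bulk of~\cite{Bouyer:FMSD:2004,Behrmann:STTT:2006}) are settled, the theorem assembles as sketched above.
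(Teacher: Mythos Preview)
The paper does not prove this theorem at all: it is stated as a citation result from~\cite{Bouyer:FMSD:2004,Behrmann:STTT:2006}, with no proof given in the text. Your sketch is essentially the argument found in those references, so there is nothing to compare against in the present paper.

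One technical remark on your property~(i): for $\ExtraLU$ and $\ExtraLUp$ what is actually established in~\cite{Behrmann:STTT:2006} is the inclusion $\abs(Z)\subseteq\{\val'\mid\exists\,\val\in Z,\ \val\preLU\val'\}$, not equality. Inclusion is all that is needed for soundness (combined with the simulation property~(ii)), and equality in fact fails in general since the $\preLU$-closure of a zone need not be a zone, whereas $\ExtraLU(Z)$ always is. This does not affect the structure of your argument, only the precise formulation of the lemma you would invoke.
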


Liveness properties require the existence of an infinite run
satisfying a given property. For instance, does $\Aa$ visit state $q$
infinitely often? Soundness and completeness of $\abs$ with respect to
infinite runs allow to solve such problems from $\SG^\abs(\Aa)$. It
has also been proved that the extrapolations mentioned above are also
sound and complete for infinite paths/runs.

\begin{theorem}[\cite{Tripakis:TOCL:2009,Li:FORMATS:2009}]
  \label{thm:extraLU-extraM-sound-complete-infinite}
  $\ExtraM$, $\ExtraMp$, $\ExtraLU$ and $\ExtraLUp$ are sound and
  complete for infinite sequences of transitions.
\end{theorem}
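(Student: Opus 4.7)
The plan is to split the statement into its two halves (soundness and completeness) and handle completeness by a short direct argument and soundness by a simulation-based argument that reduces the infinite case to the already-established finite case of Theorem~\ref{thm:extraLU-extraM-sound-complete-finite}.

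Completeness is essentially routine. Since $Z\subseteq\abs(Z)$ for each of the four abstractions, every step $(q,Z)\tto^t(q',Z')$ of $\SG(\Aa)$ lifts to a step $(q,\abs(Z))\tto^t_\abs(q',\abs(Z'))$ of $\SG^\abs(\Aa)$. Hence, given any infinite run $\rho$ of $\Aa$, one follows the (unique) path in $\SG^\abs(\Aa)$ whose action sequence matches that of $\rho$ and checks by induction that the valuations of $\rho$ stay inside the successive abstract zones. No compactness argument is needed for this direction.

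Soundness for infinite paths is the real work, and the standard strategy in~\cite{Bouyer:FMSD:2004,Behrmann:STTT:2006,Li:FORMATS:2009} is to introduce a \emph{time-abstract simulation} $\preccurlyeq_\abs$ on valuations (the $M$-simulation for $\ExtraM/\ExtraMp$ and the $LU$-simulation for $\ExtraLU/\ExtraLUp$) and prove two facts: (i) $v'\in\abs(Z)$ if and only if there exists $v\in Z$ with $v'\preccurlyeq_\abs v$; and (ii) $\preccurlyeq_\abs$ is a simulation, meaning that whenever $v'\preccurlyeq_\abs v$ and $(q,v)\xra{\d,t}(q'',w)$, there exist $\d'$ and $w'$ with $(q,v')\xra{\d',t}(q'',w')$ and $w'\preccurlyeq_\abs w$. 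Given these two facts, an infinite abstract path $(q_0,Z'_0)\tto^{t_0}_\abs(q_1,Z'_1)\tto^{t_1}_\abs\cdots$ can be instantiated as follows: pick any $v_0\in Z'_0$; by Theorem~\ref{thm:extraLU-extraM-sound-complete-finite} applied to each prefix and property (i), we obtain, for every $n$, a concrete finite run from some $v^{(n)}_0\in Z_0$ instantiating the first $n$ transitions. Using (ii) one instead \emph{directly} transfers the step-by-step instantiation back to $v_0$, inductively producing $\d_i, v_{i+1}$ with $v_{i+1}\in Z'_{i+1}$, which yields an infinite instance of the path.

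The only delicate point—the main obstacle—is checking property (i), namely that the algebraic definition of $\ExtraLU(Z)$ and $\ExtraLUp(Z)$ via DBMs really coincides with the downward closure of $Z$ under the $LU$-simulation. This is a zone-geometric computation: one fixes $v'\in\abs(Z)$, reads off from the DBM entries $Z^{LU}_{ij}$ (resp.\ $Z^{LU+}_{ij}$) which constraints of $Z$ have been relaxed, and then explicitly constructs a witness $v\in Z$ by raising the coordinates of $v'$ above the thresholds $L(x_i)$ or leaving them below $U(x_j)$ exactly where allowed. The analogous argument for $\ExtraM$ and $\ExtraMp$, where $L$ and $U$ coincide with $M$, is strictly easier and was carried out in~\cite{Bouyer:FMSD:2004}; the $LU$-case, which asymmetrically treats lower and upper bounds, is the technical core and is precisely the contribution of~\cite{Behrmann:STTT:2006,Li:FORMATS:2009} that the present theorem cites.
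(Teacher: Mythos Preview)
The paper does not give its own proof of this theorem: it is stated as a cited result from~\cite{Tripakis:TOCL:2009,Li:FORMATS:2009} and used as a black box. So there is no in-paper argument to compare against. That said, your sketch does track the strategy of those references---a simulation preorder underlying the extrapolation, combined with the finite case---so let me flag two genuine gaps in the sketch itself.

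First, your property~(ii) has the simulation oriented the wrong way. You write that $v'\preccurlyeq_\abs v$ and $(q,v)\xra{\d,t}(q'',w)$ imply $(q,v')\xra{\d',t}(q'',w')$ with $w'\preccurlyeq_\abs w$; together with~(i) this would make $\abs(Z)$ the set of valuations \emph{at least as capable} as something in $Z$, which is exactly what would \emph{break} soundness. Concretely, take a single clock $x$ with $L(x)=-\infty$, $U(x)=5$, and $Z=\{x=3\}$; then $\ExtraLU(Z)=\{x\ge 3\}$, so $v'=(x=7)\in\abs(Z)$ while $v=(x=3)\in Z$. The guard $x\le 3$ is legal (since $3\le U(x)$), $v$ can cross it, and $v'$ cannot, so your~(ii) fails. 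The correct clause is the mirror image: $v'\preccurlyeq_\abs v$ and $(q,v')\xra{\d',t}(q'',w')$ imply $(q,v)\xra{\d,t}(q'',w)$ with $w'\preccurlyeq_\abs w$. With that orientation one gets $\text{Post}_t(\abs(W))\subseteq \abs(\text{Post}_t(W))$, hence the abstract zones along the path coincide with $\abs$ applied to the \emph{unabstracted} zone sequence $W_0,W_1,\ldots$, and all $W_i$ are nonempty.

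Second, even after fixing the direction, your ``direct'' step-by-step construction does not go through as written: having $v_i\in Z'_i$ (or even $v_i\in W_i$) does not by itself guarantee that $v_i$ can fire $t_i$, only that \emph{some} valuation in the zone can. The references close this gap by a compactness argument you omit: each $W_i$ is a finite union of regions, pre-stability of $\text{Post}$ gives every region in $W_{i+1}$ a predecessor region in $W_i$, and K\"onig's lemma on the resulting finitely-branching tree yields an infinite region path, which is then instantiated as a concrete run starting from~$\mathbf{0}$. Your sentence about ``transferring back to $v_0$'' via~(ii) does not replace this step.
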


Thanks to Theorem~\ref{thm:extraLU-extraM-sound-complete-infinite}, we
know that every infinite path $\pi$ in $\SG^\abs(\Aa)$ can be
instantiated to a run of $\Aa$. However, soundness is not sufficient
to know if $\pi$ can be instantiated to a \emph{non-Zeno}
run. Additionally, it is also interesting to know when this path can
be instantiated to a \emph{Zeno} run.  In the sequel, we consider the
following questions, given an automaton $\Aa$ and an abstract zone
graph $ZG^\abs(\Aa)$.

\bigskip
\begin{tabular}{l l l}
  \textsc{Input} & \hspace{1em} & $\Aa$ and $\SG^\abs(\Aa)$ \\
  \textsc{Non-Zenoness problem}~($\NZP^\abs$) & & Does $\Aa$ have a
  non-Zeno run?\\
  \textsc{Zenoness problem}~($\ZP^\abs$) & & Does $\Aa$ have a Zeno
  run?\\
\end{tabular}

\bigskip

\noindent Observe that solving $\ZP^\abs$ does not solve $\NZP^\abs$ and
vice-versa: one is not the negation of the other. Note that the
coarser the abstraction, the lesser is the information maintained
about the structure of a zone. Let us motivate by an example. 

\begin{figure}[!t]
  \centering \begin{tikzpicture}[font=\footnotesize]
  \begin{scope}
    \tikzstyle{every node}=[circle,draw,minimum size=5mm]
    \node (s0) at (0,0) {$q_0$};
    \node (s1) at (-2,0) {$q_1$};
    \node (s2) at (2,0) {$q_2$};
  \end{scope}
  \begin{scope}[->,line width=1pt]
    \draw (0,0.7) -- (s0);
    \draw (s0) to [bend right=15] node[above] {$\{x_1\}$} (s1);
    \draw (s0) to [bend left=15] node[above] {$\{x_2\}$} (s2);
    \draw (s1) to [bend right=15] node[below] {$x_2\leq 0$} (s0);
    \draw (s2) to [bend left=15] node[below] {$x_1\leq 0$} (s0);
  \end{scope}
  \draw (0, -1) node {Automaton $\Aa_{zeno}$.} (0.5, -1);
\end{tikzpicture}

\vspace{0.5cm}

\begin{tikzpicture}[font=\footnotesize]
  \begin{scope}
    \node (z0_0) at (0,0) {$(q_0,\ 0=x_1=x_2)$};
    \node (z1_0) at (3.2,0) {$(q_1,\ 0=x_1\leq x_2)$};
    \node (z0_1) at (6.4,0) {$(q_0,\ 0=x_1=x_2)$};
    \node (z2_0) at (9.6,0) {$(q_2,\ 0=x_2\leq x_1)$};
  \end{scope}
  \begin{scope}[->,line width=1pt]
    \draw (z0_0) -- (z1_0);
    \draw (z1_0) -- (z0_1);
    \draw (z0_1) -- (z2_0);
    \draw[dashed] (z2_0) -- (12,0);
  \end{scope}
  \draw (6,-0.7) node {A path in the abstract zone graph of
    $\Aa_{zeno}$ with abstraction $\ExtraM$.};
\end{tikzpicture}

\vspace{0.7cm}

\begin{tikzpicture}[font=\footnotesize]
  \begin{scope}
    \node (z0_0) at (0,0) {$(q_0,\top)$};
    \node (z1_0) at (2.8,0) {$(q_1,\top)$};
    \node (z0_1) at (5.6,0) {$(q_0,\top)$};
    \node (z2_0) at (8.4,0) {$(q_2,\top)$};
  \end{scope}
  \begin{scope}[->,line width=1pt]
    \draw (z0_0) -- (z1_0);
    \draw (z1_0) -- (z0_1);
    \draw (z0_1) -- (z2_0);
    \draw[dashed] (z2_0) -- (10,0);
  \end{scope}
  \draw (6,-0.7) node {A path in the abstract zone graph of
    $\Aa_{zeno}$ with abstraction $\ExtraLU$.};
\end{tikzpicture}

  \caption{Zenoness/non-Zenoness from abstract paths.}
  \label{fig:zeno-abstract-paths}
\end{figure}

Figure~\ref{fig:zeno-abstract-paths} shows an automaton $\Aa_{zeno}$
which has all runs Zeno. As we can see, the coarser the abstraction
used, the lesser is the information in the simulation graph that one
could tap to detect non-Zenoness or Zenoness.

In this paper, we focus on the complexity of deciding $\ZP^\abs$ and
$\NZP^\abs$ for different abstractions $\abs$. We denote $\NZP^M$ and
$\ZP^M$ when the $M$-extrapolations are considered. We similarly
define $\NZP^{LU}$ and $\ZP^{LU}$ for the $LU$-extrapolations.

The non-Zenoness problem is known to be solvable in polynomial time
when abstraction $\ExtraM$ is
considered~\cite{Herbreteau:FMSD:2012}. This is not true for
abstraction $\ExtraLU$: in Section~\ref{sec:nz-npc-lu} we show that
$\NZP^{LU}$ is \NP-hard. As the $LU$-extrapolations are coarser, one
might expect the non-Zenoness question to be tougher to infer. But, it
is surprising that the difficulty rises to the extent of leading to
NP-hardness as opposed to a low polynomial complexity for
$M$-extrapolations. The same asymmetry appears in the Zenoness problem
as well, which is shown in Section~\ref{sec:zenoness-problem}.

In addition to these complexity results, in Section~\ref{sec:gzg-lu},
we generalize the construction for non-Zenoness given
in~\cite{Herbreteau:FMSD:2012} to an arbitrary finite abstraction
operator and describe the class of abstractions for which $\NZP^\abs$
stays polynomial. The $M$-extrapolations satisfy this criteria. We
show that a small weakening of the $LU$-extrapolations that preserves
an ordering property between clocks also satisfies this criterion. In
Section~\ref{sec:zenoness-problem}, we give an algorithm for
$\ZP^\abs$ and describe the class of abstractions that give a
polynomial complexity. Yet again, the $M$-extrapolations satisfy these
criteria. We will see that a weakening of the $LU$-extrapolations that
maintains some lower-bound information also satisfies this criterion.

\section{Non-Zenoness is \nph{} for \texorpdfstring{$LU$}{LU}-extrapolations}
\label{sec:nz-npc-lu}

We give a reduction from the 3SAT problem: given a 3CNF formula
$\phi$, we build an automaton $\ANZ_\phi$ that has a non-Zeno run iff
$\phi$ is satisfiable. The size of the automaton will be linear in the
size of $\phi$. We will then show that the abstract zone graph
$\SGLU(\ANZ_\phi)$ is isomorphic to $\ANZ_\phi$, thus
completing the polynomial reduction from 3SAT to $\NZP^{LU}$.

\subsection*{Automaton \texorpdfstring{$\ANZ_\phi$}{ANZphi}}

Let $P=\{p_1,\dots,p_k\}$ be a set of propositional variables and let
$\phi = C_1 \land \dots \land C_n$ be a 3CNF formula with $n$ clauses.
We define the timed automaton $\ANZ_\phi$ as follows. Its set of
clocks $X$ equals $\{x_1,\bar{x_1},\dots,x_k,\bar{x_k}\}$. For a
literal $\l$, let $cl(\l)$ denote the clock $x_i$ when $\l=p_i$ and
the clock $\bar{x_i}$ when $\l=\neg p_i$. The set of states of
$\ANZ_\phi$ is $\{q_0,\dots,q_k,r_0,\dots,r_n\}$ with $q_0$ being the
initial state. The transitions are as follows:
\begin{iteMize}{$\bullet$}
\item for each proposition $p_i$ we have transitions
  $q_{i-1}\xra{\{x_i\}} q_i$ and $q_{i-1}\xra{\{\bar{x_i}\}} q_i$,
\item for each clause $C_m=\l_1^m\vee \l_2^m\vee \l_3^m$, $m = 1,
  \dots, n$, there are three transitions $r_{m-1}\xra{cl(\l) \le 0}
  r_m$ for $\l \in \{\l_1^m, \l_2^m, \l_3^m\}$,
\item transitions $q_k\xra{} r_0$ and $r_n\xra{} q_0$ with no guards
  and resets.
\end{iteMize}

\begin{figure}[t]
  \centering
  \begin{tikzpicture}[shorten >=1pt,node distance=2cm,on
    grid,auto,font=\footnotesize]
  
    \node[state, initial, initial text={}] (q0) {$q_0$};
    \node[state] (q1) [right=of q0] {$q_1$};
    \node[state] (q2) [right=of q1] {$q_2$};
    \node[state] (q3) [right=of q2] {$q_3$};
    \node[state] (r0) [right=of q3] {$r_0$};
    \node[state,node distance=2.5cm] (r1) [right=of r0] {$r_1$};
    \node[state,node distance=2.5cm] (r2) [right=of r1] {$r_2$};
    
    \draw[->] (q0) edge [bend left] node {$\{x_1\}$}
    (q1); 
    \draw[->] (q0) edge [bend right] node[anchor=north]
    {$\{\bar{x_1}\}$} (q1);
   
    \draw[->] (q1) edge [bend left] node {$\{x_2\}$}
    (q2);
    \draw[->] (q1) edge [bend right] node[anchor=north]
    {$\{\bar{x_2}\}$} (q2);

    \draw[->] (q2) edge [bend left] node {$\{x_3\}$}
    (q3); 
    \draw[->] (q2) edge [bend right] node[anchor=north]
    {$\{\bar{x_3}\}$} (q3);

    \draw[->] (q3) -- (r0);

    \draw[->] (r0) edge [bend left=45] node[anchor=south] {$x_1\le
      0$} (r1);
    \draw[->] (r0) edge node[anchor=south] {$\bar{x_2}\le 0$}
    (r1);
    \draw[->] (r0) edge [bend right] node[anchor=north] {$x_3\le 0$} (r1);

    \draw[->] (r1) edge [bend left=45] node[anchor=south]
    {$\bar{x_1}\le 0$} (r2);
    \draw[->] (r1) edge node[anchor=south] {$x_2\le 0$} (r2);
    \draw[->] (r1) edge [bend right] node[anchor=north] {$x_3\le 0$}
    (r2);

    \draw[->] (r2) -- ++(0cm,-1.5cm) -- ([yshift=-1.5cm] q0.center) -- (q0);
 \end{tikzpicture}
 \caption{$\ANZ_\phi$ for $\phi = (p_1 \vee \neg p_2 \vee p_3) \wedge
   (\neg p_1 \vee p_2 \vee p_3)$}
\label{fig:nonzeno-eg}
\end{figure}


Figure~\ref{fig:nonzeno-eg} shows the automaton for the formula $(p_1
\vee \neg p_2 \vee p_3) \wedge (\neg p_1 \vee p_2 \vee p_3)$. The part
from $q_0$ to $q_3$ encodes an assignment with the following
convention: a reset of $x_i$ represents $p_i\mapsto true$ and a reset
of $\bar{x_i}$ means $p_i\mapsto false$.  Then, from $r_0$ to $r_2$ we
check if the formula is satisfied by this guessed assignment.

The above formula is satisfied by every assignment that maps $p_3$ to
$true$. Any path that encodes this assignment using the convention
mentioned above should pick
the transition $q_2\xra{\{x_3\}} q_3$. Then, it has the possibility to
follow transitions $r_0\xra{x_3\leq 0} r_1$ and $r_1\xra{x_3\leq 0}
r_2$. On any cycle containing these three transition, time can elapse
(for instance in state $q_0$) since $x_3$ is reset before being
checked for zero. Therefore, this assignment that makes the formula
true corresponds to a non-Zeno run of $\ANZ_\phi$.

Conversely, consider the assignment $p_1\mapsto
false$, $p_2\mapsto true$ and $p_3\mapsto false$ that does not satisfy
the formula. Take a cycle that resets $\bar{x_1}$, $x_2$ and
$\bar{x_3}$ according to the encoding of assignments. Then none of the
clocks that are checked for zero on the transitions from $r_0$ to
$r_1$ has been reset. Notice that these transitions come from the
first clause in the formula that evaluates to $false$ according to the
assignment. To take a transition from $r_0$, one of $x_1$, $\bar{x_2}$
and $x_3$ must be zero and hence time cannot elapse in the path
corresponding to this assignment.

Lemma~\ref{lem:red-3SAT-non-Zenoness} below states that if the formula
is satisfiable, there exists a sequence of resets that allows time
elapse in every loop. Conversely, if the formula is unsatisfiable, in
every iteration of the loop, there is a zero-check that prevents time
from elapsing.

\begin{lemma}
  \label{lem:red-3SAT-non-Zenoness}
  A 3CNF formula $\phi$ is satisfiable iff $\ANZ_\phi$ has a non-Zeno
  run.
\end{lemma}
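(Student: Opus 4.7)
The plan is to prove both implications separately. The forward direction is a direct construction of a run; the converse requires analysing the constraints imposed on an iteration by the fact that some positive delay must occur in it.

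\textbf{Forward direction.} Assume $\phi$ is satisfied by an assignment $\alpha$. I will build a non-Zeno run that traverses the unique loop $q_0\to\cdots\to q_k\to r_0\to\cdots\to r_n\to q_0$ of $\ANZ_\phi$ infinitely often. Each iteration begins with a $1$-time-unit delay at $q_0$; then, in the $q$-part, I fire $q_{i-1}\to q_i$ resetting $x_i$ whenever $\alpha(p_i)=true$ and $\bar{x_i}$ otherwise, without further delay; in the $r$-part, for each clause $C_m$ I pick a literal $\ell_m\in C_m$ satisfied by $\alpha$ and fire $r_{m-1}\xra{cl(\ell_m)\le 0} r_m$. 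Since $cl(\ell_m)$ has just been reset with no elapsed time in between, its value is $0$ and the guard holds. The accumulated delay diverges, so the run is non-Zeno.

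\textbf{Converse direction.} Assume $\ANZ_\phi$ has a non-Zeno run $\rho$. The only infinite behaviour of $\ANZ_\phi$ traverses the loop, so $\rho$ cycles infinitely, and divergence of time forces the \emph{first} positive delay $\delta>0$ in $\rho$ to exist; say it occurs in iteration $\iota$ at some state $s$. A first positive delay at $q_k$ or at any $r_m$ with $m<n$ is immediately ruled out, because every subsequent zero-check $cl(\ell)\le 0$ in $\iota$ would then be unsatisfiable (all clocks become strictly positive and no reset follows). So $s$ is either some $q_j$ with $0\le j<k$, or $r_n$.

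Before the first positive delay all clocks equal $0$, while immediately after $\delta$ elapses exactly the clocks freshly reset, and not yet expired by a further delay, are still at $0$. If $s=q_j$, then at the start of $\iota$'s $r$-part only the clocks reset at steps $j+1,\dots,k$ of $\iota$ are at $0$. Hence for every clause $C_m$ the literal $\ell_m$ chosen by the zero-check of $\iota$ has $cl(\ell_m)$ among those resets, yielding a partial assignment on $\{p_{j+1},\dots,p_k\}$ satisfying every clause, which extends arbitrarily to a model of $\phi$. If instead $s=r_n$, then at the start of iteration $\iota+1$ every clock equals $\delta>0$; applying the same case analysis to $\iota+1$ (its first pre-$r_n$ positive delay, if any, must again lie at some $q_j$ with $j<k$), I obtain either a total satisfying assignment from $\iota+1$'s $q$-part resets, or a partial one as above. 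Either way $\phi$ is satisfiable.

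\textbf{Main obstacle.} The delicate point is to forbid the scenario in which a zero-check of the analysed iteration is satisfied by a clock that was reset in some \emph{earlier} iteration and happens to still be $0$. Selecting the earliest iteration containing a positive delay is precisely what removes this ambiguity: everything before it is vacuously $0$ and carries no information, while everything surviving at $0$ afterwards must have been reset freshly in $\iota$ (or in $\iota+1$ in the $r_n$-subcase), from which a satisfying assignment can unambiguously be read off.
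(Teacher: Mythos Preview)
Your proof is correct. The forward direction matches the paper's (you elapse time at $q_0$, the paper at $r_n$; the choice is immaterial). For the converse, both arguments rest on the same observation---a positive delay preceding a zero-check on $cl(\ell)$ forces $cl(\ell)$ to have been freshly reset---but the paper organises it more directly. Instead of locating the \emph{first} positive delay in the run and splitting on its position, the paper picks any segment between two consecutive visits to $r_n$ in which some delay is positive (infinitely many such segments exist by non-Zenoness), defines a \emph{total} assignment from all $k$ resets in that segment's $q$-part, and then shows for each clause $C_m$ that no delay can fall between the step resetting $x_{j_m}$ or $\bar{x}_{j_m}$ and the zero-check at $r_{m-1}$; hence the delay falls before that step, forcing $cl(\ell_m)$ (not $cl(\bar{\ell}_m)$) to be the clock reset there. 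This avoids both your $r_n$-subcase and your partial-assignment extension. What actually dissolves the ambiguity you flag in the Main Obstacle paragraph is not the choice of the \emph{earliest} delay but simply that \emph{some} delay lies before the relevant reset step in the segment under consideration---which is exactly the fact the paper isolates.
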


\begin{proof}
  Let $\phi$ be a conjunction of $n$ clauses $C_1, \dots, C_n$.
  Assume that $\phi$ is satisfiable. Then, there exists a variable
  assignment $\chi: P \mapsto \{true, false\}$ that evaluates $\phi$
  to true. This entails that in every clause $C_m$ there is a literal
  $\l_m$ that evaluates to true with $\chi$.

  We will now build a non-Zeno run $\r$ of $\ANZ_\phi$ using this
  variable assignment $\chi$.  Clearly, it should have
  the following sequence of states repeated infinitely often:

  \begin{equation*}
    q_0 \xra{} \dots q_k \xra{} r_0 \xra{} r_1 \xra{} \dots r_n
  \end{equation*}

  Additionally, $\r$ satisfies the following conditions:

  \begin{iteMize}{$\bullet$}
  \item from each configuration $(q_{i-1},\val)$ for $i\in[1;k]$,
    $\rho$ takes the transition $q_{i-1}\xra{\{x_i\}} q_i$ when
    $\chi(p_i)=true$ and the transition $q_{i-1}\xra{\{\bar{x_i}\}}
    q_i$ otherwise;
  \item from each configuration $(r_{m-1},\val)$ for $m\in[1;n]$,
    $\rho$ takes a transition $r_{m-1}\xra{cl(\l_m)\leq 0} r_m$ where
    $\l_m$ is the literal evaluating to $true$ with respect to $\chi$
    in $C_m$;
  \item and $\r$ lets $1$ time unit elapse from each configuration
    with state $r_n$ and moves to the state $q_0$; in all other
    states, there is no time elapse.\smallskip
  \end{iteMize}
  
  \noindent Note that as $r_n$ occurs infinitely often, the run $\r$ is
  non-Zeno. It remains to prove that $\r$ is indeed a valid run of
  $\ANZ_\phi$. For this, we need to prove that all zero-checked
  transitions can be crossed regardless of the unit time
  elapse. Consider the part of $\r$ between two successive
  configurations with state $r_n$.
  \begin{equation*}
    \cdots
    (r_n,\val)\xra{1}
    \cdots
    \xra{\{cl(\l_m)\}}
    \cdots
    (r_{m-1},\val'')\xra{cl(\l_m)\leq 0} (r_m,\val'')
    \cdots
    (r_n,\val')\xra{1}
    \cdots
  \end{equation*}
  By definition of $\r$, $\l_m$ is a literal that evaluates to $true$
  according to $\chi$. Hence, clock $cl(\l_m)$ is reset in the
  corresponding $q_{j-1} \xra{} q_{j}$ transition, before being
  checked for zero.  As $cl(\l_m)$ is reset and since $\r$ does not
  elapse time in states other than $r_n$, we have
  $\val''(cl(\l_j^m))=0$. This permits the transition from $r_{m-1}$
  to $r_m$ for all $m \in [1;n]$ and shows that the run $\r$ exists.

  \smallskip

  For the other direction, consider a non-Zeno run $\r$ of
  $\ANZ_\phi$. Since $\r$ is non-Zeno, time elapses on infinitely many
  transitions in the run. Every infinite run of $\ANZ_\phi$ visits a
  configuration with state $r_n$ infinitely often. Consider two
  consecutive occurences of $r_n$ in $\r$ such that time elapses on
  some transition in the segment in between:
  \begin{equation*}
    \cdots 
    (r_n,\val)\xra{}
    \cdots
    (q_k,\val')\xra{}
    \cdots
    (r_{m-1},\val'')\xra{cl(\l_m)\le 0}
    (r_m,\val'')
    \cdots
    \xra{}
    (r_n,\val'')
    \cdots
  \end{equation*}
  By construction, for each $i\in[1;k]$ either $x_i$ or $\bar{x_i}$ is
  reset on the segment from $(r_n,\val)$ to $(q_k,\val')$. Let $\chi$
  be the variable assignment that associates $true$ to $p_i$ when
  $x_i$ is reset, and $false$ otherwise, that is when $\bar{x_i}$ is
  reset. We prove that $\chi$ satisfies $\phi$.

  Consider the transition $(r_{m-1},\val'')\xra{cl(\l_m)\leq 0}
  (r_m,\val'')$. For the transition to be enabled, we need to have
  $\val''(cl(\l_m))=0$. Let $(q_{j-1}, v_{j-1}) \xra{} (q_j, v_j)$ be
  the transition that resets either $cl(\l_m)$ or
  $cl(\bar{\l_m})$. Notice that time cannot elapse between $(q_j,
  v_j)$ and $(r_{m-1}, \val'')$. So the time elapse should have
  occured between $(r_n, \val)$ to $(q_{j-1}, v_{j-1})$. Thus it
  should be clock $cl(\l_m)$ that is reset in the transition
  $(q_{j-1}, v_{j-1}) \xra{} (q_j, v_j)$. From the above definition of
  $\chi$, we have $\l_m$ evaluating to $true$ with $\chi$ and hence
  $C_m$ evaluates to true with $\chi$ too. This holds for all the
  clauses.  This shows that $\phi$ is satisfiable with $\chi$ being
  the satisfying assignment.

\end{proof}

The \NP-hardness of $\NZP^{LU}$ then follows due to the small size of
$\SGLU(\ANZ_\phi)$.

\begin{theorem}
  \label{thm:non-Zenoness-NPC-LU-abstract-graphs}
  The abstract zone graph $\SGLU(\ANZ_\phi)$ is isomorphic to
  $\ANZ_\phi$. The non-Zenoness problem is \nph\ for abstractions
  $\ExtraLU$ and $\ExtraLUp$.
\end{theorem}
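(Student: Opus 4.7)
The plan is to show that the LU-extrapolation of every reachable zone of $\ANZ_\phi$ collapses to the full non-negative orthant $\top=\Rpos^X$. Once this is established, $\SGLU(\ANZ_\phi)$ has exactly one node per control state, its transitions coincide with those of $\ANZ_\phi$, and the two graphs are therefore isomorphic. Combined with Lemma~\ref{lem:red-3SAT-non-Zenoness}, this turns the construction of $\ANZ_\phi$ into a polynomial-time reduction from 3SAT to $\NZP^{LU}$, giving the \nph{}ness claim.

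The key observation is that every guard in $\ANZ_\phi$ has the form $cl(\ell)\le 0$. Reading off the bound maps, $L(x)=-\infty$ for every clock $x\neq x_0$, and $U(x)\in\{0,-\infty\}$. Plugging this into the definition of $\ExtraLU$, every DBM entry $Z_{ij}$ with $i\neq 0$ satisfies $c_{ij}>L(x_i)=-\infty$ and is therefore replaced by $(\infty,<)$. For row $0$, the first case $c_{0j}>L(x_0)=0$ fails whenever $c_{0j}\le 0$, which is automatic since $x_j\ge 0$; the second case fires only when the zone strictly forces $x_j>0$; and when $c_{0j}=(0,\le)$ the entry is simply preserved.

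I would then prove by induction on the length of paths in $\SGLU(\ANZ_\phi)$ that every reachable abstract zone equals $\top$. The base case is $\ExtraLU(Z_0)=\top$, since $Z_0=\{\vali+\d:\d\in\Rpos\}$ has row $0$ equal to $(0,\le)$ in every column. For the inductive step, there are only two kinds of transitions in $\ANZ_\phi$: pure resets (the edges $q_{i-1}\xra{\{x\}}q_i$ and $q_k\to r_0$, $r_n\to q_0$) and zero-check guards $cl(\ell)\le 0$ without resets. In both cases, a direct computation of the zone-graph successor $Z'$ starting from $\top$ shows that the null valuation $\vali$ still lies in $Z'$ (obtained by firing the transition at time $0$ and letting $0$ time elapse), so the row-$0$ entries of $Z'$ remain $(0,\le)$ in every column. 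Applying $\ExtraLU$ then wipes all non-row-$0$ entries to $(\infty,<)$ and preserves row $0$, giving $\top$ again. The analogous statement for $\ExtraLUp$ is then immediate from Theorem~\ref{thm:inclusion-abstractions}: $\top=\ExtraLU(Z)\subseteq\ExtraLUp(Z)\subseteq\Rpos^X=\top$.

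The only delicate point I anticipate is the asymmetric treatment of strict versus non-strict lower bounds in the definition of $\ExtraLU$, and verifying carefully for each transition type that row $0$ of the successor zone never acquires a strict $<$. Since every reachable zone still contains the all-zero valuation, this is clear once unwound. With the isomorphism in hand, the reduction $\phi\mapsto(\ANZ_\phi,\SGLU(\ANZ_\phi))$ is polynomial, and Lemma~\ref{lem:red-3SAT-non-Zenoness} delivers \nph{}ness of $\NZP^{LU}$ for both $\ExtraLU$ and $\ExtraLUp$.
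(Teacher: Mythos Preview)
Your proposal is correct and follows essentially the same approach as the paper: both arguments hinge on the observation that $L(x)=-\infty$ forces every non-row-$0$ DBM entry to $(\infty,<)$, while row~$0$ stays at $(0,\leq)$ because the all-zero valuation $\vali$ lies in every reachable zone, so $\ExtraLU$ collapses everything to $\Rpos^X$. The paper states this in one shot for any reachable zone of $\SG(\ANZ_\phi)$, whereas you phrase it as an induction along $\SGLU(\ANZ_\phi)$ and track $\vali$ explicitly; the content is the same, and the transfer to $\ExtraLUp$ via Theorem~\ref{thm:inclusion-abstractions} and the appeal to Lemma~\ref{lem:red-3SAT-non-Zenoness} are identical.
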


\begin{proof}
  We first prove that $\SGLU(\ANZ_\phi)$ is isomorphic to
  $\ANZ_\phi$. For every clock $x$, $L(x)=-\infty$, hence $\ExtraLU$
  abstracts all the constraints $x_i-x_j\preccurlyeq_{ij} c_{ij}$ to
  $x_i-x_j<\infty$ except those of the form $x_0-x_i\preccurlyeq_{0i}
  c_{0i}$ that are kept unchanged. Due to the guards in $\ANZ_\phi$,
  for every reachable zone $Z$ in $\SG(\ANZ_\phi)$ we have
  $x_0-x_i\leq 0$ (i.e. $x_i\geq 0$). Therefore $\ExtraLU(Z)$ is the
  zone defined by $\bigwedge_{x\in X} x\geq 0$ which is $\Rpos^X$. For
  each state of $\ANZ_\phi$, the zone $\Rpos^X$ is the only reachable
  zone in $\SGLU(\ANZ_\phi)$, hence showing the isomorphism.

  \NP-hardness then follows from
  Lemma~\ref{lem:red-3SAT-non-Zenoness}. The result transfers to
  $\ExtraLUp$ thanks to Theorem~\ref{thm:inclusion-abstractions}.
\end{proof}

Notice that the type of zero checks in $\ANZ_\phi$ is crucial to
Theorem~\ref{thm:non-Zenoness-NPC-LU-abstract-graphs}. Replacing
zero-checks of the form $x \leq 0$ by $x=0$ does not modify the
semantics of $\ANZ_\phi$. However, this yields $L(x)=0$ for every
clock $x$. Hence, the constraints of the form $x_i-x_j\leq 0$ are not
abstracted: $\ExtraLU$ then preserves the ordering among the
clocks. Each sequence of clock resets leading from $q_0$ to $q_k$
yields a distinct ordering on the clocks. Thus, there are
exponentially many LU-abstracted zones with state $q_k$. As a
consequence, the polynomial reduction from 3SAT is lost. We indeed
provide in Section~\ref{sec:gzg-lu} below an algorithm for detecting
non-Zeno runs from $\SGLU(\Aa)$ that runs in polynomial time when
$L(x) \geq 0$ for all clocks $x$. On the other hand, notice that
changing $x=0$ to $x \le 0$ reduces the size of the abstract zone
graph, in some cases, by an exponential amount. We will see in
Section~\ref{sec:discussion} how this has led to an improvement in the
reachability analysis for timed automata.

\section{Finding non-Zeno runs}
\label{sec:gzg-lu}

Recall the non-Zenoness problem ($\NZP^\abs$):

\begin{quote}
  Given an automaton $\Aa$ and its abstract zone graph
  $\SG^\abs(\Aa)$, decide if $\Aa$ has a non-Zeno run.
\end{quote}
A standard solution to this problem involves adding one auxiliary
clock to $\Aa$ to detect non-Zenoness~\cite{Tripakis:TOCL:2009}. This
solution was shown to cause an exponential blowup
in~\cite{Herbreteau:FMSD:2012}. In the same paper, a polynomial
method has been proposed in the case of the $\ExtraM$ abstraction. We
briefly recall this construction below.

An infinite run of the timed automaton could be Zeno due to two
factors:
\begin{iteMize}{$\bullet$}
\item \emph{blocking clocks}: these are clocks bounded from above
  (i.e. $x\le c$ for some $c>0$) infinitely often in the run, but are
  reset only finitely many times,
\item \emph{zero checks}: these are guards of the form $x \le 0$ or
  $x=0$ that occur infinitely often in a manner that prevents time elapse in the
  run.
\end{iteMize}
To solve $\NZP^\abs$, the task is to find if there exists an infinite
run in $\SG^\abs(\Aa)$ that neither has blocking clocks nor
zero-checks that prevent time-elapse.  The method
in~\cite{Herbreteau:FMSD:2012} tackles these two problems as
follows.  Blocking clocks are handled by first detecting a maximal
strongly connected component (SCC) of the zone graph and repeatedly
discarding the transitions that bound some blocking clock until a
non-trivial SCC with no such clocks is obtained. This algorithm runs
in time polynomial for every abstraction.  For zero checks, a
\emph{guessing zone graph} construction has been introduced to detect
nodes where time can elapse.

\subsection{Guessing zone graph \texorpdfstring{$\GZG^\abs(\Aa)$}{GZGabs(Aa)}}
\label{sec:rgzg}

The necessary and sufficient condition for time elapse in a node
inspite of zero-checks is to have every reachable zero-check from that
node preceded by a corresponding reset
(cf. Figure~\ref{fig:gzg-intuition}).

\begin{figure}[!h]
  \begin{tikzpicture}[line width=1pt,node distance=1mm]
    \begin{scope}
      \tikzstyle{every node}=[circle,fill=white,inner sep=0mm]
      \node (s0) at (0,0) {$\bullet$};         
      \node (tick) [above=of s0] {$\surd$};
      \node (s1) at (1,0) {$\bullet$};
      \node (s2) at (2,0) {$\bullet$};
      \node (s3) at (3,0) {$\bullet$};
      \node (s4) at (4,0) {$\bullet$};
    \end{scope}
    \begin{scope}
      \draw[->] (-0.5,0) -- (s0);
      \draw[dashed] (s0) -- (s1);
      \draw[->] (s1) -- node[above] {\tiny $\{x\}$} (s2);
      \draw[dashed] (s2) -- (s3);
      \draw[->] (s3) -- node[above] {\tiny $x=0$} (s4);
      \draw[dashed] (s4) -- (5,0);
    \end{scope}
  \end{tikzpicture}
  \caption{Time can elapse in the node $\surd$}
  \label{fig:gzg-intuition}
\end{figure}

Therefore, the aim is to check if there exists a node $(q,Z)$ in
$\SG^\abs(\Aa)$ such that there is a path from $(q,Z)$ back to itself
in which every zero-check is preceded by a corresponding reset. This
would instantiate to an infinite run of $\Aa$ that can elapse time
despite the zero-checks.

This is what the guessing zone graph construction achieves. The nodes
of the guessing zone graph are triples $(q,Z,Y)$ where $Y \incl X$ is
a set of clocks. The sets $Y$ are called the \emph{guess
  sets}. Whenever a clock is reset, it is added to the guess set of
the resulting node. A transition with a zero-check can be crossed only
if the clock that is checked for zero is already present in the guess
set, that is, if it was reset somewhere in the past. The guess set $Y$
in a node $(q,Z,Y)$ therefore gives the set of clocks that can
potentially be checked for zero before being reset in a path starting
from $(q,Z,Y)$. In particular, clocks that are not in $Y$ cannot be
checked for zero in the future before being reset. Hence, on a path
from a node with an empty guess set, all the zero checks are preceded
by the corresponding reset, and time can elapse in that node.

For a valuation $v$, we write $v \models (X - Y > 0)$ for the
constraint saying that all the variables in $X - Y$ are greater than
$0$ in $v$, that is: $v \sat \left( \bigwedge_{x \in (X - Y)} v(x) > 0
\right)$ . For every transition $t = (q, g,R,q')$ of $\Aa$,
$\GZG^\abs(\Aa)$ has a transition $(q,Z,Y) \tto^t_\abs (q',Z',Y')$
only if:
\begin{iteMize}{$\bullet$}
\item there is a transition $(q,Z) \tto^t_\abs (q',Z')$ in
  $\SG^\abs(\Aa)$;
\item there is a valuation $\val \in Z$ such that $v \sat (X - Y > 0)$ and $v \sat g$;
\item and $Y' = Y \cup R$.
\end{iteMize}
Observe that if the guess set is empty in a node, then the following
transition can be taken by a valuation that has all clocks greater
than zero. This shows that if there is a path from a node $(q,Z,
\es)$, zero-checks do not hinder time-elapse in this node. When a
clock is reset, this is remembered in $Y'$. This in turn allows
the clock to be checked for zero from $(q',Z',Y')$.

The guessing zone graph also contains special transitions:
\begin{iteMize}{$\bullet$}
\item $(q,Z,Y) \tto^{\tau}_{\abs} (q,Z,Y')$ with $Y' = \emptyset$ or
  $Y' = Y$.
\end{iteMize}
Hence, from any node $(q,Z,Y)$, by taking a $\tau$ transition that
leads to $(q,Z, \emptyset)$, one can non-deterministically check if
there is a path from that node where every zero-check is preceded by a
corresponding reset. 

Figure~\ref{fig:A1_GZG} depicts a timed automaton $\Aa_1$ along with
its zone graph $ZG^\abs(\Aa_1)$ and the reachable part of its guessing
zone graph $GZG^\abs(\Aa_1)$ where $\tau$-loops have been omitted. The
loop that checks $x$ for zero is disabled from node $(1, x=z,
\emptyset)$ since $x$ does not belong to the guess set. This indicates
that it is not possible to let time elapse and then take this
transition. Time can elapse in every node with an empty guess set
(nodes with $\emptyset$ as a third component) since, by construction,
every zero check must be preceded by the corresponding reset. In
particular, the cycle $(2, x-z \ge 1, \emptyset) \tto_\abs (3, x-z \ge
1, \{z\}) \tto_\abs (2, x-z \ge 1, \{z\}) \tto^\tau_\abs (2, x-z \ge
1, \emptyset)$ is the suffix of a non-Zeno run.


\begin{figure}[ht]
\begin{center}
    \begin{tikzpicture}[font=\footnotesize] 
      \begin{scope}
        \tikzstyle{every node}=[circle,draw]
        \node (1) at (0,0) {$1$};
        \node (2) at (2,0) {$2$};
        \node (3) at (4,0) {$3$};
      \end{scope}
      \begin{scope}[->,line width=1pt]
        \draw (-1,0) -- (1);
        \draw (1) edge[loop=above] node[above]{$x=0,\{x\}$} (1);
        \draw (1) edge node[above]{$\{y\}$} (2);
        \draw (2) edge[bend left] node[above]{$x \ge 1, \{z\}$} (3);
        \draw (3) edge[bend left] node[below]{$z=0$} (2);
      \end{scope}
    \end{tikzpicture}

    \bigskip

    \begin{tikzpicture}[font=\footnotesize] 
      \begin{scope}
        \tikzstyle{every node}=[rectangle,draw]
        \node (1xeqz) at (0,0) {$1,x=z$};
        \node (2xeqz) at (2.5,0) {$2,x=z$};
        \node (3xmzge1) at (6,0) {$3,x-z \ge 1$};
        \node (2xmzge1) at (9,0) {$2,x-z \ge 1$};
      \end{scope}
      \begin{scope}[->,line width=1pt]
        \draw (-1.5,0) -- (1xeqz);
        \draw (1xeqz) edge[loop=above]
        node[above]{$x=0,\{x\}$} (1xeqz);
        \draw (1xeqz) edge node[above] {$\{y\}$} (2xeqz);
        \draw (2xeqz) edge node[above] {$x \ge 1, \{z\}$}
        (3xmzge1);
        \draw (3xmzge1) edge[bend left] node[above]{$z=0$}
        (2xmzge1);
        \draw (2xmzge1) edge[bend left] node[below]{$x \ge 1,
          \{z\}$} (3xmzge1);
      \end{scope}
    \end{tikzpicture}

    \bigskip
    \bigskip

    \begin{tikzpicture}[font=\footnotesize] 
      \begin{scope}
        \tikzstyle{every node}=[rectangle,draw]
        \node (1_xeqz_xyz) at (0,0) {$1, x=z, \{x,y,z\}$};
        \node (1_xeqz_clear) at (12,0) {$1, x=z, \emptyset$};
        \node (2_xeqz_xyz) at (0,-2) {$2, x=z, \{x,y,z\}$};
        \node (2_xeqz_y) at (4,-2) {$2, x=z, \{y\}$};
        \node (2_xeqz_clear) at (12,-2) {$2, x=z, \emptyset$};
        \node (3_xmzge1_xyz) at (0,-4) {$3,x-z \ge 1, \{x,y,z\}$};
        \node (3_xmzge1_yz) at (4,-4) {$3,x-z \ge 1, \{y,z\}$};
        \node (3_xmzge1_z) at (8,-4) {$3,x-z \ge 1, \{z\}$};
        \node (3_xmzge1_clear) at (12,-4) {$3,x-z \ge 1, \emptyset$};
        \node (2_xmzge1_xyz) at (0,-6) {$2,x-z \ge 1, \{x,y,z\}$};
        \node (2_xmzge1_yz) at (4,-6) {$2,x-z \ge 1, \{y,z\}$};
        \node (2_xmzge1_z) at (8,-6) {$2,x-z \ge 1, \{z\}$};
        \node (2_xmzge1_clear) at (12,-6) {$2,x-z \ge 1, \emptyset$};
      \end{scope}
      \begin{scope}[->,line width=1pt]
        \draw (-1.8,0) -- (1_xeqz_xyz);
        \draw (1_xeqz_xyz) edge[loop=above] node[above] {$x=0,
          \{x\}$} (1_xeqz_xyz);
        \draw (1_xeqz_xyz) edge node[right]{$\{y\}$}
        (2_xeqz_xyz);
        \draw (1_xeqz_clear) edge node[below]{$\{y\}$}
        (2_xeqz_y);
        \draw (2_xeqz_xyz) edge node[left]{$x \ge 1,\{z\}$}
        (3_xmzge1_xyz);
        \draw (2_xeqz_y) edge node[left]{$x \ge 1,\{z\}$}
        (3_xmzge1_yz);
        \draw (2_xeqz_clear) edge node[right]{$x \ge 1,\{z\}$}
        (3_xmzge1_z);
        \draw (3_xmzge1_xyz) edge[bend left] node[right]{$z=0$}
        (2_xmzge1_xyz);
        \draw (3_xmzge1_yz) edge[bend left] node[right]{$z=0$}
        (2_xmzge1_yz);
        \draw (3_xmzge1_z) edge[bend left] node[right]{$z=0$}
        (2_xmzge1_z);
        \draw (2_xmzge1_xyz) edge[bend left] node[left]{$x \ge
          1, \{z\}$} (3_xmzge1_xyz);
        \draw (2_xmzge1_yz) edge[bend left] node[left]{$x \ge
          1, \{z\}$} (3_xmzge1_yz);
        \draw (2_xmzge1_z) edge[bend left] node[left]{$x \ge
          1, \{z\}$} (3_xmzge1_z);
        \draw (2_xmzge1_clear) edge node[right]{$x \ge
          1, \{z\}$} (3_xmzge1_z);
        \begin{scope}[dashed]
          \draw (1_xeqz_xyz) edge node[above] {$\tau$} (1_xeqz_clear);
          \draw (2_xeqz_xyz) edge[bend left=20] node[above] {$\tau$}
          (2_xeqz_clear);
          \draw (2_xeqz_y) edge node[above]{$\tau$} (2_xeqz_clear);
          \draw (3_xmzge1_xyz) edge[bend left=14] node[above]{$\tau$}
          (3_xmzge1_clear);
          \draw (3_xmzge1_yz) edge[bend left=10] node[above]{$\tau$}
          (3_xmzge1_clear);
          \draw (3_xmzge1_z) edge node[above]{$\tau$} (3_xmzge1_clear);
          \draw (2_xmzge1_xyz) edge[bend right=15] node[below]{$\tau$}
          (2_xmzge1_clear);
          \draw (2_xmzge1_yz) edge[bend right=10] node[below]{$\tau$}
          (2_xmzge1_clear);
          \draw (2_xmzge1_z) edge node[above]{$\tau$} (2_xmzge1_clear);
        \end{scope}
      \end{scope}
    \end{tikzpicture}
  \end{center}
  \caption{A timed automaton $\Aa_1$ (top), its zone graph
    $\SG^\abs(\Aa_1)$ (middle) and the reachable part of the guessing
    zone graph $GZG^\abs(\Aa_1)$ (bottom) with $\tau$ self-loops
    omitted for clarity.}
  \label{fig:A1_GZG}
\end{figure}


It has been shown in~\cite{Herbreteau:FMSD:2012} that the number
of guess sets for every node $(q,Z)$ reachable in $\SG^\abs(\Aa)$ is
bound by $|X| + 1$ when the abstraction $\abs$ is $\ExtraM$. The case
of other abstractions was not considered. The same construction does
not give polynomial complexity even for $\ExtraMp$. We first optimize
this construction by considering an arbitrary abstraction.

\subsection{Reduced guessing zone graph \texorpdfstring{$\RGZG^\abs(\Aa)$}{RGZGabs(Aa)}}

The reduced guessing zone graph is a slight modification that
restricts the guess sets to a subset of the set of clocks.  A clock
that is never checked for zero need not be remembered in sets $Y$. We
restrict $Y$ sets to only contain clocks that can indeed be checked
for zero and we show that this is sound and complete for non-Zenoness.

We say that a clock $x$ is \emph{relevant} if there exists a guard $x
\le 0$ or $x=0$ in the automaton. We denote the set of relevant clocks
by $\Rl(\Aa)$. For a zone $Z$, let $\Cc_0(Z)$ denote the set of clocks
$x$ such that there exists a valuation $\val \in Z$ with
$\val(x)=0$. The clocks that can be checked for zero before being
reset in a path from $(q,Z)$, lie in $\Rl(\Aa)\cap \Cc_0(Z)$.

\begin{definition}[Reduced guessing zone graph]
  \label{defn:rgzg}
  Let $\Aa$ be a timed automaton with clocks $X$. The \emph{reduced
    guessing zone graph} $\RGZG^\abs(\Aa)$ has nodes of the form
  $(q,Z,Y)$ where $(q,Z)$ is a node in $\SG^\abs(\Aa)$ and $Y\incl
  \Rl(\Aa)\cap \Cc_0(Z)$. The initial node is $(q_0,Z_0,\Rl(\Aa))$,
  with $(q_0,Z_0)$ the initial node of $\SG^\abs(\Aa)$. The
  transitions are as follows:

  \medskip

 \begin{iteMize}{$\bullet$}

 \item  For $t=(q, g,
  R, q')$, there is a transition $(q,Z,Y)\tto^t_\abs (q',Z',Y')$ with:
  \begin{align*}
  Y'=(Y\cup R)\cap \Rl(\Aa)\cap \Cc_0(Z')
  \end{align*}
  if there is
  $(q,Z)\tto^t_\abs (q',Z')$ in $\SG^\abs(\Aa)$ and some valuation
  $\val\in Z$ such that $\val\sat (\Rl(\Aa)-Y)>0$ and $\val\sat g$.

\medskip

 \item  A
  new auxiliary letter $\t$ is introduced that adds transitions
  $(q,Z,Y)\tto^\t_\abs (q,Z,Y')$ for $Y'=\es$ or $Y'=Y$.
\end{iteMize}
\end{definition}

Observe that similar to the case of the guessing zone graph, 
 we require $\val\sat (\Rl(\Aa)-Y)>0$ and $\val\sat g$
for some $\val\in Z$, a transition that checks $x\le 0$ (or $x=0$) is
allowed from a node $(q,Z,Y)$ only if $x\in Y$. Thus, from a node
$(q,Z,\es)$ every reachable zero-check $x=0$ should be preceded by a
transition that resets $x$, and hence adds it to the guess set. Such a
node is called \emph{clear}. The presence of such nodes would ensure a
time-elapse even in the presence of zero-checks.


\begin{figure}[tp]
  \begin{center}
    \begin{tikzpicture}[font=\footnotesize]
      \begin{scope}
        \tikzstyle{every node}=[rectangle,draw]
        \node (1_xeqz_xz) at (0,0) {$1, x=z, \{x,z\}$};
        \node (1_xeqz_clear) at (12,0) {$1, x=z, \emptyset$};
        \node (2_xeqz_xz) at (0,-2) {$2, x=z, \{x,z\}$};
        \node (2_xeqz_clear) at (12,-2) {$2, x=z, \emptyset$};
        \node (3_xmzge1_z) at (6,-4) {$3,x-z \ge 1, \{z\}$};
        \node (3_xmzge1_clear) at (12,-4) {$3,x-z \ge 1, \emptyset$};
        \node (2_xmzge1_z) at (6,-6) {$2,x-z \ge 1, \{z\}$};
        \node (2_xmzge1_clear) at (12,-6) {$2,x-z \ge 1, \emptyset$};
      \end{scope}
      \begin{scope}[->,line width=1pt]
        \draw (-1.8,0) -- (1_xeqz_xz);
        \draw (1_xeqz_xz) edge[loop=above] node[above] {$x=0,
          \{x\}$} (1_xeqz_xz);
        \draw (1_xeqz_xz) edge node[right]{$\{y\}$}
        (2_xeqz_xz);
        \draw (1_xeqz_clear) edge node[left]{$\{y\}$}
        (2_xeqz_clear);
        \draw (2_xeqz_clear) edge node[right]{$x \ge 1,\{z\}$}
        (3_xmzge1_z);
        \draw (2_xeqz_xz) edge node[left]{$x \ge 1,\{z\}$}
        (3_xmzge1_z);
        \draw (3_xmzge1_z) edge[bend left] node[right]{$z=0$}
        (2_xmzge1_z);
        \draw (2_xmzge1_z) edge[bend left] node[left]{$x \ge 1,
          \{z\}$} (3_xmzge1_z);
        \draw (2_xmzge1_clear) edge node[right]{$x \ge 1, \{z\}$}
        (3_xmzge1_z);
        \begin{scope}[dashed]
          \draw (1_xeqz_xz) edge node[above]{$\tau$}
          (1_xeqz_clear);
          \draw (2_xeqz_xz) edge node[above]{$\tau$}
          (2_xeqz_clear);
          \draw (3_xmzge1_z) edge node[above]{$\tau$}
          (3_xmzge1_clear);
          \draw (2_xmzge1_z) edge node[above]{$\tau$}
          (2_xmzge1_clear);
        \end{scope}
      \end{scope}
    \end{tikzpicture}
  \end{center}
  \caption{The reachable part of the reduced guessing zone graph
    $\RGZG^\abs(\Aa_1)$ of automaton $\Aa_1$ in
    Figure~\ref{fig:A1_GZG} (with $\tau$ self-loops omitted for
    clarity).}
  \label{fig:rgzg}
\end{figure}


Figure~\ref{fig:rgzg} shows the reachable part of the reduced guessing
zone graph $\RGZG^\abs(\Aa_1)$ for the automaton $\Aa_1$ in
Figure~\ref{fig:A1_GZG}. Notice that since $y \not\in \Rl(\Aa_1)$, the
clock $y$ is not added to the sets $Y$ when it is reset. Observe also
that clock $x$ cannot belong to the guess sets in the nodes $(3, x-z
\ge 1)$ and $(2, x-z \ge 1)$ as $x > 0$. The resulting reduced
guessing zone graph $\RGZG^\abs(\Aa_1)$ is a lot smaller than the
guessing zone graph $\GZG^\abs(\Aa_1)$ in Figure~\ref{fig:A1_GZG}. It
still contains all the information needed to detect non-Zeno runs.

Before we prove our result about the reduced guessing zone graph, we
define some notions.

\begin{definition}
 A node $(q,Z,Y)$ of $\RGZG^\abs(\Aa)$ is called \emph{clear} if the
 third component is empty: $Y = \emptyset$. A
variable $x$ is \emph{bounded} in a transition of $\RGZG^\abs$ if the
guard of the transition implies $x \le c$ for some constant $c$. A
path of $\RGZG^\abs$ is said to be \emph{blocked} if there is a
variable that is bounded infinitely often and reset only finitely
often by the transitions on the path. Otherwise the path is called
\emph{unblocked}.
\end{definition}

 An unblocked path says that there are no blocking
clocks to bound time and clear nodes suggest that inspite of
zero-checks that might possibly occur in the future, time can still
elapse. We get the following theorem.

\begin{prop}
  \label{prop:rgzg-nz}
  A timed automaton $\Aa$ has a non-Zeno run iff there exists an
  unblocked path in $\RGZG^\abs(\Aa)$ visiting a clear node infinitely
  often.
\end{prop}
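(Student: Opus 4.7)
The plan is to prove the two directions separately, using the abstract zone graph's soundness and completeness for infinite runs (Theorem~\ref{thm:extraLU-extraM-sound-complete-infinite}) together with the key structural property of the reduced guessing zone graph: the transition rule forbids a zero-check on clock $x$ from a node $(q,Z,Y)$ unless $x\in Y$, and $Y$ is populated only by resets occurring since the last $\tau$-jump to the empty guess set. Thus, on every segment of the path between two clear nodes, each zero-check is preceded by a reset of the tested clock.

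For the $(\Leftarrow)$ direction, I would take an unblocked path $\pi$ in $\RGZG^\abs(\Aa)$ that visits a clear node infinitely often and build a non-Zeno run. First, project $\pi$ onto $\SG^\abs(\Aa)$ by deleting the $\tau$-transitions; by Theorem~\ref{thm:extraLU-extraM-sound-complete-infinite} this projected path can be instantiated as an infinite run $\rho$ of $\Aa$. The task is to choose the delays $\delta_i$ in this instantiation so that $\sum_i \delta_i=\infty$. Between two consecutive clear nodes, the guess-set discipline guarantees that each clock checked for zero has been reset in between; hence time may elapse freely after such a reset without invalidating the subsequent zero-check. Since the path is unblocked, for every clock $x$ that is bounded infinitely often along $\pi$, $x$ is also reset infinitely often, so no clock forces the delays to shrink to zero forever. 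I would exploit this to insert a positive delay (say, of size $1$) on a transition leaving each clear node by arguing that such a delay is compatible with the successor zone: leaving a clear node requires a valuation with all relevant clocks strictly positive, which is preserved by additional time elapse up to the next binding guard, and the unblocked hypothesis guarantees that a reset of any binding clock occurs before that clock can block one unit of delay indefinitely. Having infinitely many such unit delays yields $\sum_i\delta_i=\infty$.

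For the $(\Rightarrow)$ direction, start from a non-Zeno run $\rho$ of $\Aa$ and the corresponding infinite path $\pi$ in $\SG^\abs(\Aa)$. I would lift $\pi$ to a path in $\RGZG^\abs(\Aa)$ by defining the guess component at step $i$ to be $Y_i=(Y_{i-1}\cup R_i)\cap \Rl(\Aa)\cap\Cc_0(Z_i)$ with $Y_0=\Rl(\Aa)$; the transition rule is then satisfied because any clock zero-checked at step $i$ in $\rho$ must have been reset since the beginning (else $v_i(x)>0$), so it belongs to $Y_{i-1}$. To create clear nodes, I would exploit $\sum_i\delta_i=\infty$: since time diverges, at infinitely many positions $j$ every relevant clock $x$ is either reset again strictly after $j$ before its next zero-check (so it need not be in $Y_j$) or is never zero-checked after $j$ (because its value tends to infinity). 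At each such position, I would insert the $\tau$-transition $(q_j,Z_j,Y_j)\tto^\tau(q_j,Z_j,\emptyset)$ and update the subsequent guess sets accordingly, producing infinitely many clear nodes. Finally, unblockedness is immediate: if some clock $x$ were bounded infinitely often and reset only finitely often along $\pi$, then after its last reset the clock value would tend to infinity while the guards would force it bounded, contradicting non-Zenoness.

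The main obstacle is the delicate instantiation argument in the soundness direction: one must select positive delays at clear-node segments that are simultaneously large enough to make time diverge and small enough to respect every upper-bound guard met before the next reset of each blocking clock. I would handle this by a round-robin schedule that, on each return to a clear node, pays down one unit of delay toward the clock that is currently closest to its upper bound, relying on the unblocked hypothesis to guarantee that every such clock is reset before this unit is exhausted.
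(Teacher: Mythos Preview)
Your overall decomposition (two directions, using soundness/completeness of $\abs$) matches the paper, but the $(\Leftarrow)$ direction has a genuine gap, and the $(\Rightarrow)$ direction is more convoluted than necessary.

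For $(\Rightarrow)$, the paper's construction is much simpler than yours: it inserts a $\tau$-jump to $\emptyset$ exactly at those indices $i$ where $\delta_i>0$ in the given non-Zeno run. This immediately yields infinitely many clear nodes, and the enabling condition $\val\models(\Rl(\Aa)-Y')>0$ is witnessed by $\val_i+\delta_i$ itself (all clocks are strictly positive after a positive delay). Your criterion---positions where every relevant clock is either reset before its next zero-check or never zero-checked again---is harder to verify and you do not actually check the enabling condition $\val\models(\Rl(\Aa)-\emptyset)>0$ for the transition leaving a newly-cleared node.

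For $(\Leftarrow)$, your ``round-robin schedule'' is not a proof, and the difficulty you identify is real: inserting one time unit at a clear node can violate an upper-bound guard $x\le c$ that appears before the next reset of $x$, and ``paying down delay toward the clock closest to its bound'' is not meaningful since all clocks advance together. The paper avoids scheduling entirely. It takes an \emph{arbitrary} instantiation $\rho$ of the projected path; if $\rho$ happens to be Zeno, it passes to a suffix on which less than $1/2$ time unit elapses (so every infinitely-often-reset clock stays below $1/2$), picks consecutive clear indices $i<j$ between which every such clock is reset, and \emph{shifts valuations} on the segment $[i,j]$: clocks never reset again get $+\,\zeta+1/2$ (harmless because unblockedness means they face only lower-bound guards), and clocks in $X^r$ not yet reset on the segment get $+\,1/2$. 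The key point you are missing is that guard constants are integers, so a clock value in $[0,1/2)$ and one in $[1/2,1)$ satisfy exactly the same non-zero-check constraints; combined with the clear-node invariant (no zero-check on a clock before its reset), the shifted segment is still a valid run fragment, and splicing these shifts in yields a non-Zeno run.
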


The proof of Proposition~\ref{prop:rgzg-nz} is in the same lines as
for the guessing zone graph in~\cite{Herbreteau:FMSD:2012}. It follows
from the following two lemmas.

\begin{lemma}
  \label{lem:TBA-to-rgzg}
  If $\Aa$ has a non-Zeno run, then in $\RGZG^\abs(\Aa)$ there is an
  unblocked path visiting a clear node infinitely often.
\end{lemma}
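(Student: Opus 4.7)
The plan is to take a non-Zeno run $\rho=(q_0,v_0)\xra{\delta_0,t_0}(q_1,v_1)\cdots$ of $\Aa$, appeal to completeness of $\abs$ to obtain a symbolic path $\pi=(q_0,Z_0)\tto^{t_0}_\abs(q_1,Z_1)\tto^{t_1}_\abs\cdots$ of $\SG^\abs(\Aa)$ of which $\rho$ is an instance (so $v_i, v_i+\delta_i \in Z_i$ for every $i$), and then lift $\pi$ to a path in $\RGZG^\abs(\Aa)$ by inductively computing guess sets $Y_i$ via the update rule $Y_{i+1}=(Y_i\cup R_i)\cap \Rl(\Aa)\cap \Cc_0(Z_{i+1})$ starting from $Y_0=\Rl(\Aa)$. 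Whenever $\delta_i>0$ I will insert a $\tau$-transition at position $i$ to drop the current guess set to $\emptyset$, thereby producing a clear node just before $t_i$ is taken.

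The main technical step will be to verify that every transition in the lifted path is legitimate in $\RGZG^\abs(\Aa)$. The key invariant, which I plan to prove by induction on $i$, is that $Y_i \supseteq \{x\in \Rl(\Aa) : v_i(x)=0\}$ holds along the path: the inductive step for an ordinary transition is routine (a relevant clock with $v_{i+1}(x)=0$ is either in $R_i$, or else has $v_i(x)=0$ and therefore lies in $Y_i$ by induction), and for a $\tau$-insertion it uses exactly the hypothesis $\delta_i>0$ to rule out non-reset clocks from reaching value $0$ at position $i+1$. From the invariant, every $x\in \Rl(\Aa)-Y_i$ satisfies $v_i(x)>0$, so the valuation $v=v_i+\delta_i\in Z_i$ witnesses both $v\sat g_i$ and $v\sat (\Rl(\Aa)-Y_i)>0$, which is precisely the side condition the reduced guessing zone graph imposes on a $t_i$-transition.

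The remaining two obligations will then be straightforward. Clear nodes will be visited infinitely often because non-Zenoness of $\rho$ forces $\sum_i \delta_i=\infty$, so infinitely many $i$ satisfy $\delta_i>0$ and thus carry an inserted $\tau$-step to $\emptyset$. The lifted path will be unblocked because any clock $x$ bounded above by some $x\le c$ guard infinitely often but reset only finitely often would, after its last reset, accumulate a divergent partial sum of the $\delta_k$'s and eventually exceed $c$, contradicting that the bounding guards remain satisfied along $\rho$. The main subtlety I expect is the tension between the rigid update rule for $Y$ and the $\tau$-insertions: one has to check that dropping $Y$ to $\emptyset$ does not break the invariant at the very next step, and this works out precisely because the insertions are performed only at positions of strictly positive delay.
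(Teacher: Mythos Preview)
Your proposal is correct and follows essentially the same route as the paper: lift the symbolic path to $\RGZG^\abs(\Aa)$ with $Y_0=\Rl(\Aa)$, insert a $\tau$-step to $\emptyset$ precisely at indices with $\delta_i>0$, and verify the side condition by induction. Your invariant $Y_i\supseteq\{x\in\Rl(\Aa):v_i(x)=0\}$ is the contrapositive of the paper's formulation $v_i+\delta_i\models(\Rl(\Aa)-Y_i')>0$; the only cosmetic point to tidy is that the side condition for the $t_i$-edge is relative to the post-$\tau$ guess set $Y_i'$, not $Y_i$, so when $\delta_i>0$ you should observe explicitly that $v_i+\delta_i$ has \emph{all} clocks strictly positive (which you clearly have in mind in your closing remark).
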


\begin{proof}
  Let $\r$ be a non-Zeno run of $\mathcal{A}$:
  \begin{equation*}
    (q_0,\val_0)\xra{\d_0,t_0}
    (q_1,\val_1)\xra{\d_1,t_1}
    \cdots
  \end{equation*}
  Since $\abs$ is complete, $\r$ is an instantiation of a path $\pi$
  in $\SG^\abs(\Aa)$:
  \begin{equation*}
    (q_0,Z_0)\tto^{t_0}_\abs
    (q_1,Z_1)\tto^{t_1}_\abs
    \cdots
  \end{equation*}

  Let $\s$ be the following sequence of transitions:
  \begin{equation*}
    (q_0,Z_0,Y_0)\tto^\t_\abs
    (q_0,Z_0,Y_0')\tto^{t_0}_\abs
    (q_1,Z_1,Y_1)\tto^\t_\abs
    (q_1,Z_1,Y_1')\tto^{t_1}_\abs
    \cdots
  \end{equation*}
  where $Y_0=Rl(\Aa)$, $Y_i$ is determined by the transition relation
  in $\RGZG^\abs(\Aa)$, and $Y'_i=Y_i$ unless $\d_i>0$ when we put
  $Y'_i=\es$.

  Since $\rho$ is non-Zeno, there are infinitely many $i$ such that
  $\d_i > 0$, hence $\s$ contains infinitely many clear nodes with
  $Y'_i=\es$. From the non-Zenoness of $\rho$, we also get that $\s$
  is unblocked.

  Now, it remains to show that $\s$ is indeed a path in
  $\RGZG^\abs(\Aa)$. For this we need to see that every transition
  $(q_i,Z_i,Y_i')\tto^{t_i}_\abs (q_{i+1},Z_{i+1},Y_{i+1})$ is
  realizable from a valuation $\val\in Z_i$ such that both $\val \sat
  (Rl(\Aa)-Y'_i)>0$ and $\val \sat g_i$ where $g_i$ is the guard of
  $t_i$. We prove this by an induction on the run. As by the
  definition of $\r$, $\val_i+\d_i\sat g_i$ for all $i\geq 0$, we only
  need to prove that $\val_i+\d_i\sat (\Rl(\Aa)-Y_i')>0$. This is
  clearly true for valuation $\val_0+\d_0 \in Z_0$.

  Assume that $\val_i + \d_i \sat (\Rl(\Aa) - Y_i') > 0$. We now prove
  that $\val_{i+1} + \d_{i+1} \sat (\Rl(\Aa) - Y'_{i+1}) >
  0$. Firstly, observe that $Y_{i+1} = (Y_i' \cup R_i) \cap
  \Cc_0(Z_{i+1}) \cap \Rl(\Aa)$. Therefore a clock $x \in \Rl(\Aa) -
  Y_{i+1}$ either belongs to $\Rl(\Aa) - Y_i'$ in which case it is
  greater than $0$ by induction hypothesis, or otherwise we have $x
  \in Y_i'$ but $x \notin \Cc_0(Z_{i+1})$. By the definition of
  $\Cc_0(Z_{i+1})$, all valuations $\val \in Z_{i+1}$ satisfy $\val(x)
  > 0$ and so in particular, $\val_{i+1}(x) >0$. This leads to
  $\val_{i+1} \sat (\Rl(\Aa) - Y_{i+1}) > 0$ which easily extends to
  $\val_{i+1} + \d_{i+1} \sat (\Rl(\Aa) - Y'_{i+1}) > 0$.
\end{proof}

\begin{lemma}
  \label{lem:rgzg-to-TBA}
  Suppose $\RGZG^\abs(\Aa)$ has an unblocked path visiting 
  a clear node infinitely often then $\Aa$ has a non-Zeno run.
\end{lemma}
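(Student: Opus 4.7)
The plan is to construct a non-Zeno run of $\Aa$ that instantiates a given unblocked path
\[
\pi : (q_0, Z_0, Y_0) \tto^{a_0}_\abs (q_1, Z_1, Y_1) \tto^{a_1}_\abs \cdots
\]
of $\RGZG^\abs(\Aa)$ visiting a clear node infinitely often. Dropping the $\tau$-transitions and forgetting the third components of the nodes yields a path $\pi'$ in $\SG^\abs(\Aa)$. Soundness of $\abs$ for infinite paths (Theorem~\ref{thm:extraLU-extraM-sound-complete-infinite}) already gives \emph{some} instantiation of $\pi'$ as a run of $\Aa$, so the real work is to pick delays so that the resulting run has diverging total time.

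My strategy would be to use clear nodes as time-elapse opportunities. Fix a clear node $(q_k, Z_k, \emptyset)$ along $\pi$. By definition of $\RGZG^\abs(\Aa)$, after $(q_k, Z_k, \emptyset)$ any transition whose guard implies $x\leq 0$ or $x=0$ must be preceded by a reset of $x$, since $x$ was not in the guess set and can re-enter only through a reset. Therefore inserting a small positive delay $\varepsilon_k$ just before leaving $(q_k, Z_k, \emptyset)$ cannot falsify any future zero-check: the offending clock is reset between the delay and the check. The only guards that can restrict $\varepsilon_k$ are upper-bound constraints $x\leq c$ occurring before the next reset of $x$, and for each of these the currently reachable valuation in $Z_k$ leaves strictly positive slack. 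I would build the delays inductively by visiting the clear nodes in order, choosing at the $k$-th clear node some $\varepsilon_k>0$ respecting the finitely many relevant upper bounds and setting all other delays to $0$; a standard projection argument, applied to each finite prefix as in the soundness proof of $\abs$, then shows that these choices extend to a legitimate instance $\rho$ of $\pi'$.

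The main obstacle is proving that $\sum_k \varepsilon_k = +\infty$, i.e.\ that the $\varepsilon_k$ do not collapse to zero. Here the \emph{unblocked} hypothesis on $\pi$ is essential: if $\varepsilon_k \to 0$, then arbitrarily late some clock $x$ would have to constrain $\varepsilon_k$, meaning that on every arbitrarily late segment between a clear node and the next reset of $x$ an upper-bound guard on $x$ must appear with vanishing slack. This forces $x$ to be bounded infinitely often while being reset only finitely often between those bounds, contradicting unblockedness. More concretely, I would show that for each clock $x$ there is a threshold past which $x$ is either never bounded again on $\pi$ or reset between any two successive bounding transitions, which yields a uniform positive lower bound on $\varepsilon_k$ from some $k$ onwards and hence time divergence. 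Combined with the observation that $\rho$ faithfully instantiates $\pi$ on the action transitions, this produces the required non-Zeno run of $\Aa$.
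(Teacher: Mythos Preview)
Your approach has a genuine gap. Setting all delays to zero except at clear nodes will not in general produce a valid run of $\Aa$: lower-bound guards of the form $x\geq c$ with $c\geq 1$ may fail. If $x$ is reset after some clear node and then checked for $x\geq 1$ before the next clear node, your $\varepsilon_k$ was elapsed \emph{before} the reset, so $x=0$ at the check and the guard is violated. The ``standard projection argument'' you invoke cannot repair this, because no choice of delay placed only at clear-node positions will satisfy such a guard.

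There is a second gap in the divergence argument. Your claim that ``for each clock $x$ there is a threshold past which $x$ is either never bounded again on $\pi$ or reset between any two successive bounding transitions'' does not follow from unblockedness. Unblockedness only guarantees that a clock bounded infinitely often is also reset infinitely often; it permits patterns like reset--bound--bound--reset--bound--bound--$\cdots$ indefinitely. So the uniform positive lower bound on the $\varepsilon_k$ is not established.

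The paper avoids both difficulties by working \emph{additively} from an existing run rather than building delays from scratch. It first takes any instance $\rho$ of $\pi$ via soundness; if $\rho$ is already non-Zeno, done. Otherwise some suffix of $\rho$ elapses less than $1/2$ time unit in total. Between two clear indices $i<j$ (chosen so that every clock of $X^r$, the clocks reset infinitely often, is reset in between), the paper shifts valuations upward: clocks outside $X^r$ by $\zeta+1/2$, clocks in $X^r$ not yet reset by $1/2$, and clocks already reset not at all. Lower-bound guards are preserved because values only increase. Upper bounds on clocks outside $X^r$ are absent on this suffix by unblockedness. For a clock $x\in X^r$ not yet reset, the old value lies in $[0,1/2)$ and the new one in $[1/2,1)$; the clear-node mechanism ensures $g\wedge(x>0)$ is consistent, and since guard constants are integers both values satisfy the same constraints on $x$. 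Repeating this on infinitely many $i$--$j$ segments produces a non-Zeno run. The key point you are missing is that starting from a valid (Zeno) run and \emph{adding} time makes lower bounds free; your from-scratch construction has to fight them directly.
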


\begin{proof}
  The proof follows the same lines as the proof of Lemma 6 in
  \cite{Herbreteau:FMSD:2012} with the additional information that
  for all clocks $x$ that do not belong to $\Rl(\Aa)$, we have $g
  \wedge (x >0)$ consistent for every guard $g$.

  Let $\pi: (q_0, Z_0, Y_0) \tto^{t_0}_\abs \dots$ be the unblocked
  path of $\RGZG^\abs(\Aa)$ that visits a clear node infinitely
  often. Since $\abs$ is sound, take an instantiation $\rho: (q_0,
  \val_0) \xra{\d_0, t_0} \dots$ of $\Aa$. If $\rho$ is non-Zeno, we
  are done.

  Assume that $\rho$ is Zeno. It has a suffix where less than $1/2$
  time unit elapses. Let $X^r$ denote the set of clocks that are reset
  infinitely often on $\rho$. We can thus find an index $m$ such that
  $\val_n(x) < 1/2$ for all $x \in X^r$ and for all $n \ge m$. Take
  indices $i, j \ge m$ such that $Y_i = Y_j = \emptyset$ and all
  clocks in $X^r$ are reset between $i$ and $j$. We look at the
  sequence $(q_i, \val_i) \xra{\d_i, t_i} \dots (q_j,\val_j)$ and
  claim that every sequence of the form
  \begin{equation*}
    (q_i,\val'_i)\xra{\d_i,t_i}
    (q_{i+1},\val'_{i+1})\xra{\d_{i+1},t_{i+1}} \dots(q_j,\val'_j)
  \end{equation*}
  is a part of a run of $\Aa$ provided there is $\zeta\in \Rpos$ such
  that the following three conditions hold for all $k=i,\dots,j$:
  \begin{enumerate}[(1)]
  \item $\nu'_k(x)= \nu_k(x)+\zeta+1/2$ for all $x\not\in X^r$,

  \item $\nu'_k(x)=\nu_k(x)+1/2$ if $x\in X^r$ and $x$ has not been
    reset between $i$ and $k$.
  \item $\nu'_k(x)=\nu_k(x)$ otherwise, i.e., when $x\in X^r$ and $x$
    has been reset between $i$ and $k$.
  \end{enumerate}

  It is easy to see that the run obtained by replacing every such
  $i-j$ interval of $\rho$ by the above sequence gives a non-Zeno run,
  since a $1/2$ time unit has been elapsed infinitely often.

  We now show that the above is indeed a valid run of $\Aa$. For this
  we need to first show that $\val'_k + \d_k$ satisfies the guard in
  $t_k$. Let $g$ be the guard.

  For $x \not \in X^r$, from the assumption that $\rho$ is unblocked,
  we know that $g$ could only be of the form $x > c$ or $x \ge c$. So
  $\val'_k(x)$ clearly satisfies $g$. If $x \in X^r$ and is reset
  between $i$ and $k$, $\val'_k(x) = \val_k(x)$ and so we are
  done. Consider the case when $x \in X^r$ and is not reset between
  $i$ and $k$. Observe that $x \not \in Y_k$.  This is because
  $Y_i=\es$, and then only variables that are reset are added to
  $Y$. Since $x$ is not reset between $i$ and $k$, it cannot be in
  $Y_k$. By definition of transitions in $\RGZG^\abs(\Aa)$, if $x \in
  \Rl(\Aa)$ this means that $g\land (x>0)$ is consistent. But for $x
  \not \in \Rl(\Aa)$ by definition, $g \land (x >0 )$ is
  consistent. We have that $0\leq (\val_k+\d_k)(x)< 1/2$ and $1/2\leq
  (\val_k'+\d_k)(x)< 1$. So $\val_k'+\d_k$ satisfies all the
  constraints in $g$ concerning $x$ as $\val_k+\d_k$ does.

  It can also be seen that the valuation obtained from $\val'_k$ by
  resetting the clocks in transition $t_k$ is the valuation
  $\val'_{k+1}$.
\end{proof}

\subsection{Polynomial algorithms for \texorpdfstring{$\NZP^\abs$}{NZPabs}}
\label{sec:polynomial-nzp}

Since we have a node in $\RGZG^\abs(\Aa)$ for every $(q,Z)$ in
$\SG^\abs(\Aa)$ and every subset $Y$ of $\Rl(\Aa)$, it can in
principle be exponentially bigger than $\SG^\abs(\Aa)$. Below, we see
that depending on abstraction $\abs$, not all subsets $Y$ need to be
considered. Let us first define the notion of a zone ordering clocks.

\begin{definition}
Let $X'$ be a subset of $X$. We say that a zone $Z$ \emph{orders the
  clocks in $X'$} if for all clocks $x,y\in X'$, $Z$ implies that at
least one of $x\le y$ or $y\le x$ hold, that is either all valuations $v \in Z$
  satisfy $v(x) \le v(y)$ or all valuations $v \in Z$ satisfy $v(y)
  \le v(x)$. 
\end{definition}

\begin{definition}[Weakly order-preserving abstractions]
  An abstraction $\abs$ \emph{weakly preserves orders} if for all
  clocks $x,y\in \Rl(\Aa)\cap \Cc_0(Z)$, $Z\sat x\leq y$ iff
  $\abs(Z)\sat x\leq y$.
\end{definition}

It has been observed in~\cite{Herbreteau:FMSD:2012} that all the
zones that are reachable in the unabstracted zone graph $\SG(\Aa)$
order the entire set of clocks $X$. Assume that $\abs$ weakly
preserves orders, then for every reachable node $(q,Z,Y)$ in
$\RGZG^\abs(\Aa)$, the zone $Z$ orders the clocks in $\Rl(\Aa)\cap
\Cc_0(Z)$. We now show that $Y$ is downward closed with respect to
this order given by $Z$: for clocks $x, y \in \Rl(\Aa) \cap \Cc_0(Z)$,
if $Z\sat x\leq y$ and $y\in Y$, then $x\in Y$. This entails that
there are at most $|\Rl(\Aa)|+1$ downward closed sets to consider,
thus giving a polynomial complexity.

\begin{prop}
  \label{prop:order-preserving-yield-polynomial}
  Let $\Aa$ be a timed automaton. If $\abs$ weakly preserves orders,
  then the reachable part of $\RGZG^\abs(\Aa)$ is $\Oo(|\Rl(\Aa)|)$
  bigger than the reachable part of $\SG^\abs(\Aa)$.
\end{prop}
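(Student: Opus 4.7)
The plan is to establish a structural invariant on reachable nodes of $\RGZG^\abs(\Aa)$ and then count guess sets. The invariant I want is: for every reachable $(q,Z,Y)$, the zone $Z$ totally orders the clocks in $\Rl(\Aa) \cap \Cc_0(Z)$, and $Y$ is downward closed in this order, meaning $y \in Y$, $x \in \Rl(\Aa) \cap \Cc_0(Z)$, and $Z \sat x \le y$ together force $x \in Y$. Granting the invariant, a totally ordered set of size $k$ admits exactly $k+1$ downward closed subsets, so each reachable $(q,Z)$ of $\SG^\abs(\Aa)$ lifts to at most $|\Rl(\Aa) \cap \Cc_0(Z)| + 1 \le |\Rl(\Aa)| + 1$ nodes of $\RGZG^\abs(\Aa)$, yielding the announced $\Oo(|\Rl(\Aa)|)$ bound.

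The ordering clause of the invariant is inherited from the known fact~\cite{Herbreteau:FMSD:2012} that every zone reachable in the unabstracted zone graph $\SG(\Aa)$ totally orders the full clock set $X$. Weak order preservation of $\abs$ then transfers this ordering to the abstract zone on the restricted set $\Rl(\Aa) \cap \Cc_0(Z)$, so the order on which downward closure is measured is always well defined.

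The downward-closure clause I prove by induction on the length of a path in $\RGZG^\abs(\Aa)$. The base case is trivial since $Y_0 = \Rl(\Aa)$ is the top element, and a $\tau$-transition yields $Y' \in \{\emptyset, Y\}$, both downward closed. For an action transition $(q,Z,Y) \tto^t_\abs (q',Z',Y')$ with $Y' = (Y \cup R) \cap \Rl(\Aa) \cap \Cc_0(Z')$, I pick $y' \in Y'$ and $x' \in \Rl(\Aa) \cap \Cc_0(Z')$ with $Z' \sat x' \le y'$ and must show $x' \in Y'$. The case $x' \in R$ is immediate. Otherwise, a zero witness of $x'$ in $Z'$ must arise from some $\val \in Z \cap [g]$ with $\val(x') = 0$ and zero delay, so $x' \in \Cc_0(Z)$; and since clocks outside $R$ have their pairwise differences preserved by reset-then-delay, $Z' \sat x' \le y'$ transfers to $\val(x') \le \val(y')$ for every $\val \in Z \cap [g]$.

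The remaining step, showing $x' \in Y$, splits on how $y'$ enters $Y'$. If $y' \in R$, then $\val(x') + \delta \le \delta$ forces $\val(x') = 0$ on every $\val \in Z \cap [g]$, so the enabling witness $\val_0 \in Z \cap [g]$ satisfying $\val_0(w) > 0$ for $w \in \Rl(\Aa) - Y$ rules out $x' \in \Rl(\Aa) - Y$, giving $x' \in Y$. If $y' \in Y \setminus R$, the symmetric witness argument yields $y' \in \Cc_0(Z)$; I want $Z \sat x' \le y'$ so as to invoke the induction hypothesis. Either this holds directly, or $Z \sat y' \le x'$ strictly, in which case $Z \cap [g]$ would have to satisfy $x' = y'$; but since guards only constrain individual clocks, $Z \cap [g]$ can intersect the diagonal $x' = y'$ in more than one point only when $Z$ itself satisfies $x' = y'$, contradicting strictness, so $Z \cap [g]$ would collapse to a single point — which cannot simultaneously host a zero witness of $x'$ and a strictly positive enabling witness $\val_0$. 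Hence $Z \sat x' \le y'$, and the induction hypothesis delivers $x' \in Y$. The main obstacle is precisely this latter sub-case, where a careful convexity-and-guard-structure argument is needed to rule out the opposite ordering in $Z$ from the one observed in $Z'$.
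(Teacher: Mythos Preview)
Your approach is the same as the paper's: prove by induction on transitions that every reachable guess set $Y$ is downward closed for the order that $Z$ induces on $\Rl(\Aa)\cap\Cc_0(Z)$, then count downward-closed sets. The paper's inductive step is two one-line cases (``$y\in R$ implies $x\in R$'' and ``$y\notin R$ implies $Z\sat x\le y$''), whereas you expand this into a genuine argument; in particular your use of the enabling witness $\val_0\sat(\Rl(\Aa)-Y)>0$ to force $x'\in Y$ in the sub-case $y'\in R$, $x'\notin R$ is exactly what the paper's terse ``then $x$ is also in $R$'' glosses over, and your version is the correct one.

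There is, however, one step that does not go through as written. When $x'\notin R$ you claim that ``a zero witness of $x'$ in $Z'$ must arise from some $\val\in Z\cap[g]$ with $\val(x')=0$ and zero delay''. This would be fine if $Z'$ were the unabstracted successor $Z''$, but in $\RGZG^\abs(\Aa)$ the node carries $Z'=\abs(Z'')$, and a valuation $\val'\in Z'$ with $\val'(x')=0$ need not lie in $Z''$ at all: weak order preservation says nothing about whether $\abs$ can introduce new zero-valued relevant clocks, i.e.\ whether $\Cc_0(\abs(Z''))\cap\Rl(\Aa)=\Cc_0(Z'')\cap\Rl(\Aa)$. Your later convexity argument in the sub-case $y'\in Y\setminus R$ explicitly relies on this zero witness living in $Z\cap[g]$, so the gap is load-bearing there. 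For the concrete abstractions actually used in the paper ($\ExtraM$, $\ExtraMp$, $\elbu$) this property does hold (relevant clocks have nonnegative max-bound, so the lower-bound row $Z_{0x}$ is not relaxed past~$0$), which is presumably why the paper's own proof is silent on the point; but as a proof of the proposition for an arbitrary weakly order-preserving $\abs$, the step needs either that extra hypothesis or a different argument.
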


\begin{proof}
  We prove by induction on the transitions in $\RGZG^\abs(\Aa)$ that
  for every reachable node $(q,Z,Y)$ the set $Y$ is downward closed
  with respect to the order on the clocks in $\Rl(\Aa)\cap \Cc_0(Z)$
  implied by $Z$. This is true for the initial node $(q_0,Z_0,\Rl(\Aa))$.

  Now, assume that this is true for $(q,Z,Y)$. Take a transition
  $(q,Z,Y)\tto^{t}_\abs (q',Z',Y')$ with $t=(q,g,R,q')$. By definition,
  $Y'=(Y\cup R)\cap \Rl(\Aa)\cap \Cc_0(Z')$. Suppose $Z'\sat x\le y$
  for some $x,y\in \Rl(\Aa)\cap \Cc_0(Z')$ and suppose $y\in Y'$. This
  could mean $y\in Y$ or $y\in R$. If $y\in R$, then $x$ is also in
  $R$ since $Z'\sat x\leq y$. If $y\notin R$ then we get $y\in Y$ and
  $Z\sat x\leq y$. By hypothesis that $Y$ is downward closed, $x\in
  Y$. In both cases $x\in Y'$.
\end{proof}

The following lemma shows that the $M$-extrapolations weakly
preserve orders. Hence, $\RGZGM(\Aa)$ yields a polynomial algorithm
for $\NZP^M$. Thanks to the reduction of the guessing zone graph to
the relevant clocks, this algorithm is more efficient than the
algorithm in~\cite{Herbreteau:FMSD:2012} even while using the same
abstraction.

\begin{theorem}
  \label{thm:ExtraM-order-preserving}
  The abstractions $\ExtraM$ and $\ExtraMp$ weakly preserve
  orders. The non-Zenoness problem is solved in polynomial time for
  $\ExtraM$ and $\ExtraMp$.
\end{theorem}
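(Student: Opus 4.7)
The plan is to verify the order-preservation property directly from the DBM definitions of $\ExtraM$ and $\ExtraMp$, then invoke Proposition~\ref{prop:order-preserving-yield-polynomial} together with a standard SCC-based search to obtain a polynomial algorithm.

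First I would record the elementary observation that drives everything: if $x\in \Rl(\Aa)$ then a guard of the form $x\leq 0$ or $x=0$ occurs in $\Aa$, so $U(x)\geq 0$, and therefore $M(x)=\max(L(x),U(x))\geq 0$. Next, assume $Z$ is in canonical form and $Z\sat x_i\leq x_j$ for two clocks $x_i,x_j\in \Rl(\Aa)\cap\Cc_0(Z)$. By canonicity this means $Z_{ij}=(c_{ij},\preccurlyeq_{ij})\leq (0,\leq)$, and because $x_i,x_j\in\Cc_0(Z)$ the entries $Z_{0i},Z_{0j}$ satisfy $c_{0i},c_{0j}\geq 0$. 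One easy direction is immediate: since $Z\subseteq \abs(Z)$, if $\abs(Z)\sat x_i\leq x_j$ then $Z\sat x_i\leq x_j$. So only the forward direction requires a case analysis.

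For $\ExtraM$, I would inspect the three clauses of the definition applied to entry $(i,j)$. The clause ``$c_{ij}>M(x_i)$'' cannot trigger since $c_{ij}\leq 0\leq M(x_i)$. The clause ``$-c_{ij}>M(x_j)$'' may trigger, but then the entry is replaced by $(-M(x_j),<)$, which still forces $x_i-x_j<-M(x_j)\leq 0$. In the remaining case the entry is untouched. In every case $\ExtraM(Z)$ contains the constraint $x_i-x_j\leq 0$ as a conjunct of its DBM, so $\ExtraM(Z)\sat x_i\leq x_j$. For $\ExtraMp$ I would run through the four modification clauses: the first fails for the same reason as above; the second requires $-c_{0i}>M(x_i)$, but $c_{0i}\geq 0$ (by $x_i\in\Cc_0(Z)$) and $M(x_i)\geq 0$, so it fails; the third fails analogously using $x_j\in\Cc_0(Z)$; the fourth applies only when $i=0$, hence not here. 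So $\ExtraMp(Z)_{ij}=Z_{ij}\leq (0,\leq)$, and again $x_i-x_j\leq 0$ is a conjunct of the DBM of $\ExtraMp(Z)$. The main subtlety is this use of $\Cc_0(Z)$: the hypothesis is needed precisely to defeat the two ``diagonal'' clauses of $\ExtraMp$ that could otherwise reset an entry to $(\infty,<)$.

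Having established that $\ExtraM$ and $\ExtraMp$ weakly preserve orders, Proposition~\ref{prop:order-preserving-yield-polynomial} yields that $\RGZGM(\Aa)$ and the analogous graph for $\ExtraMp$ are at most $\Oo(|\Rl(\Aa)|)$ times larger than $\SGM(\Aa)$ and $\SG^{\ExtraMp}(\Aa)$ respectively, hence polynomial in the size of the input.

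Finally, to get the polynomial algorithm for $\NZP^\abs$ I would combine Proposition~\ref{prop:rgzg-nz} with a standard SCC computation on $\RGZG^\abs(\Aa)$: iteratively compute the maximal SCCs and, inside each, remove transitions that bound a clock never reset in that SCC (the blocking-clock elimination from~\cite{Herbreteau:FMSD:2012}) until either the SCC disappears or it becomes unblocked, then check whether some remaining non-trivial SCC contains a clear node. Each phase runs in time polynomial in the size of $\RGZG^\abs(\Aa)$, so by the preceding paragraph the whole procedure is polynomial in the input $\Aa$ and $\SG^\abs(\Aa)$.
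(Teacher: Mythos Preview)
Your proposal is correct and follows essentially the same approach as the paper: both rely on the two key observations that $M(x)\geq 0$ for $x\in\Rl(\Aa)$ and that $x\in\Cc_0(Z)$ forces $-c_{0x}\leq 0$, which together disable the DBM clauses that could destroy the $x_i\leq x_j$ constraint, and then both conclude via Propositions~\ref{prop:rgzg-nz} and~\ref{prop:order-preserving-yield-polynomial}. The only difference is presentational: the paper cites~\cite{Herbreteau:FMSD:2012} for $\ExtraM$ and then argues that $\ExtraMp$ coincides with $\ExtraM$ on $\Rl(\Aa)\cap\Cc_0(Z)$, whereas you carry out the DBM case analysis directly and separately for each operator---your version is more self-contained, theirs slightly more conceptual.
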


\begin{proof}
  It has been proved in \cite{Herbreteau:FMSD:2012} that $\ExtraM$
  weakly preserves orders. We now prove this for $\ExtraMp$. Firstly
  note that for a clock $x$ in $\Rl(\Aa)$ we have $M(x) \ge
  0$. Moreover if $x\in \Cc_0(Z)$ we have that $Z$ is consistent with
  $x\le 0$. Hence, for a clock $x \in \Rl(\Aa) \cap \Cc_0(Z)$, $Z$ is
  consistent with $x \le M(x)$.  Therefore, by definition,
  $\ExtraMp(Z)$ restricted to clocks in $\Rl(\Aa) \cap \Cc_0(Z)$ is
  identical to $\ExtraM(Z)$ restricted to the same set of
  clocks. Since $\ExtraM$ weakly preserves orders, we get that
  $\ExtraMp$ weakly preserves orders too.

  The algorithm in Proposition~\ref{prop:rgzg-nz} is thus polynomial
  for $\ExtraM$ and $\ExtraMp$ by
  Proposition~\ref{prop:order-preserving-yield-polynomial}.
\end{proof}

However, the polynomial complexity is not preserved by the coarser
$LU$-extrapolations.

\begin{thm}
  \label{thm:nzp-np-extralu}
  The abstractions $\ExtraLU$ and $\ExtraLUp$ do not weakly preserve
  orders. The non-Zenoness problem is \npc{} for $\ExtraLU$ and
  $\ExtraLUp$.
\end{thm}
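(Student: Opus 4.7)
The plan is to prove the two assertions in turn.

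To see that $\ExtraLU$ and $\ExtraLUp$ do not weakly preserve orders, I will exhibit a small explicit counterexample. Suppose $\Aa$ has two clocks $x_1, x_2$ that occur only in guards of the form $x_i \le 0$, so that $L(x_1) = L(x_2) = -\infty$, $U(x_1) = U(x_2) = 0$, and both clocks are relevant. Consider the zone $Z = \{x_1 = x_2 \ge 0\}$. Both clocks lie in $\Rl(\Aa) \cap \Cc_0(Z)$ (witnessed by the valuation $v(x_1) = v(x_2) = 0 \in Z$), and $Z \sat x_1 \le x_2$. The DBM entry encoding $x_1 - x_2 \le 0$ has $c_{12} = 0 > L(x_1) = -\infty$, so the definition of $\ExtraLU$ replaces it by $(\infty, <)$, destroying the order; hence $\ExtraLU(Z)$ contains valuations with $v(x_1) > v(x_2)$ and fails to satisfy $x_1 \le x_2$. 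Since $\ExtraLU(Z) \subseteq \ExtraLUp(Z)$ by Theorem~\ref{thm:inclusion-abstractions}, the same conclusion transfers to $\ExtraLUp$.

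Theorem~\ref{thm:non-Zenoness-NPC-LU-abstract-graphs} already gives \NP-hardness of the non-Zenoness problem for both $\ExtraLU$ and $\ExtraLUp$, so it remains to establish \NP-membership. By Proposition~\ref{prop:rgzg-nz}, a non-Zeno run exists iff $\RGZG^{LU}(\Aa)$ has an unblocked path visiting a clear node infinitely often. Although $\RGZG^{LU}(\Aa)$ can be exponentially larger than $\SG^{LU}(\Aa)$, the underlying combinatorial witness can be distilled onto $\SG^{LU}(\Aa)$, which is part of the input and thus of polynomial size. I propose to certify a YES-instance by a reachable node $(q^*, Z^*) \in \SG^{LU}(\Aa)$ together with a set $T$ of transitions of $\SG^{LU}(\Aa)$ whose endpoints form a strongly connected subgraph $S$ containing $(q^*, Z^*)$, and which satisfies: (i) every clock bounded by some transition of $T$ is reset by some transition of $T$, and (ii) every clock zero-checked by some transition of $T$ is reset by some transition of $T$. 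This certificate is polynomial in size, and its three conditions (reachability, strong connectivity, and the two closure properties) are all verifiable in polynomial time.

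For soundness, I would reach $(q^*, Z^*)$ from the initial node via a finite path in $\SG^{LU}(\Aa)$ and then take a $\tau$-transition to arrive at the clear node $(q^*, Z^*, \emptyset)$; from there, I would construct an infinite path in $\RGZG^{LU}(\Aa)$ by navigating using transitions in $T$, inserting for each upcoming zero-check on a clock $x$ a short detour to a resetting transition guaranteed by (ii), and reapplying $\tau$ at each return to $(q^*, Z^*)$ so that the clear node is visited infinitely often. Property (i) precludes any clock from being blocked. Completeness is a straightforward projection: given a witness path in $\RGZG^{LU}(\Aa)$ supplied by Proposition~\ref{prop:rgzg-nz}, the transitions used infinitely often satisfy (i) and (ii) and induce a strongly connected subgraph. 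The main obstacle I anticipate lies in the soundness step: the update rule $Y' = (Y \cup R) \cap \Rl(\Aa) \cap \Cc_0(Z')$ may evict a clock from the guess set via the $\Cc_0$-intersection, so each detour must be scheduled carefully so that the reset clock remains in $Y$ until its zero-check is consumed. Handling this bookkeeping is the technical heart of the \NP-membership argument.
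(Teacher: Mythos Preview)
Your argument that $\ExtraLU$ and $\ExtraLUp$ do not weakly preserve orders is fine and matches the paper's reasoning. Your appeal to Theorem~\ref{thm:non-Zenoness-NPC-LU-abstract-graphs} for \NP-hardness is also correct.

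The \NP-membership argument, however, has a genuine gap: your proposed certificate is \emph{unsound}. Conditions (i) and (ii) on a strongly connected set $T$ of transitions of $\SG^{LU}(\Aa)$ do not guarantee a non-Zeno run. Here is a concrete counterexample. Take two clocks $x,y$ and two states $q_0,q_1$ with transitions $q_0\xra{x\le 0,\ \{y\}}q_1$ and $q_1\xra{y\le 0,\ \{x\}}q_0$. Then $L(x)=L(y)=-\infty$ and $U(x)=U(y)=0$, so $\SG^{LU}(\Aa)$ collapses to the two nodes $(q_0,\Rpos^X)$ and $(q_1,\Rpos^X)$ connected by these two transitions. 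Taking $T$ to be both transitions, the subgraph is strongly connected, every bounded clock is reset, and every zero-checked clock is reset, so (i) and (ii) hold. Yet every run of $\Aa$ must have all delays equal to $0$ and is therefore Zeno. Your detour strategy cannot repair this: from $(q_0,\Rpos^X,\emptyset)$ the only way to reach a reset of $x$ inside $T$ is via the transition $q_0\to q_1$, which itself carries the zero-check $x\le 0$ you were trying to prepare for. The obstruction is not the $\Cc_0$-eviction you flagged, but a circular dependency between zero-checks and resets that your certificate does not rule out.

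The paper takes a different route for \NP-membership that sidesteps this issue. Rather than certifying with a \emph{set} of transitions, it certifies with a \emph{path}: it works in the non-reduced guessing zone graph (i.e.\ with $\Rl(\Aa)=X$ and $\Cc_0(Z)=X$ for every $Z$), where guess sets grow monotonically along $\tau$-free paths. One guesses a reachable node $(q,Z)$, starts from $(q,Z,\emptyset)$, and non-deterministically simulates a $\tau$-free path back to some $(q,Z,Y)$ while accumulating the set $U$ of clocks bounded along the way; acceptance is $U\subseteq Y$. Monotonicity of $Y$ bounds the path length by $N\cdot(|X|+1)$, where $N$ is the number of nodes of $\SG^{LU}(\Aa)$, so the certificate is polynomial. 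In the counterexample above this algorithm correctly rejects, since from $(q_0,\Rpos^X,\emptyset)$ no transition is enabled. If you want to salvage a set-based certificate, you would need an additional ordering condition on $T$ ensuring that resets can be scheduled ahead of their corresponding zero-checks; the paper's path-based formulation encodes such an ordering implicitly.
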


\begin{proof} 
  From the definition of $\ExtraLU$ we have that all constraints of
  the form $x_i - x_j \fleq_{ij} c_{ij}$ are abstracted to $x_i - x_j
  < \infty$ when $L(x_i) = - \infty$. Thus, information about the
  relative ordering between $x_i$ and $x_j$ is lost. This shows that
  $\ExtraLU$ does not weakly preserve orders when $L(x) = - \infty$
  for relevant clocks. This also holds for $\ExtraLUp$ by
  Theorem~\ref{thm:inclusion-abstractions}.

  \NP\ hardness is proven in
  Theorem~\ref{thm:non-Zenoness-NPC-LU-abstract-graphs}. It remains to
  discuss \NP\ membership. Let $N$ be the number of nodes in
  $\SGLU(\Aa)$. Let us non-deterministically choose a node $(q,Z)$. We
  assume that $(q,Z)$ is reachable as this can be checked in
  polynomial time on $\SGLU(\Aa)$.

  We augment $(q,Z)$ with an empty guess set of clocks. From the node
  $(q,Z,\es)$, we non-deterministically simulate a path $\pi$ of the
  (non-reduced) guessing zone graph~\cite{Herbreteau:FMSD:2012}
  obtained from Definition~\ref{defn:rgzg} with $\Rl(\Aa)=X$ and
  $\Cc_0(Z)=X$ for every zone $Z$. We avoid taking $\t$ transitions on
  this path.  This ensures that the guess sets accumulate all the
  resets on $\pi$. During the simulation, we also keep track of a
  separate set $U$ containing all the clocks that are bounded from
  above on a transition in $\pi$.

  We write $\tto^*_\abs$ to denote the transitive closure of
  $\tto_\abs$. If during the simulation one reaches a node $(q,Z,Y)$
  such that $U\subseteq Y$, then we have a cycle
  $(q,Z,\emptyset)\tto_\abs^* (q,Z,Y)\tto^\t_\abs (q,Z,\emptyset)$
  that is unblocked and that visits a clear node infinitely
  often. Also, since $(q,Z)$ is reachable in $\SGLU(\Aa)$, $(q,Z,X)$
  is reachable in the guessing zone graph. Then $(q,Z,\es)$ is
  reachable from $(q,Z,X)$ with a $\t$
  transition. From~\cite{Herbreteau:FMSD:2012} and from the fact
  that $\ExtraLU$ and $\ExtraLUp$ are sound and
  complete~\cite{Behrmann:STTT:2006} we get a non-Zeno run of $\Aa$.

  Notice that it is sufficient to simulate $N\times (|X|+1)$
  transitions since we can avoid visiting a node $(q',Z',Y')$ twice in
  $\pi$.
\end{proof}

\subsection{Modified \texorpdfstring{$LU$}{LU}-extrapolations for polynomial \texorpdfstring{$\NZP^{\abs}$}{NZPabs}}

The $LU$-extrapolations do not weakly preserve orders in zones
due to relevant clocks with $L(x)=-\infty$ and $U(x)\geq 0$.  We show
that this is the only reason for \NP-hardness. We slightly modify
$\ExtraLU$ to get an abstraction $\elbu$ that is coarser than
$\ExtraM$, but it still weakly preserves orders.

\begin{definition}[Weak $L$ bounds]
  \label{defn:weak-L-bounds}
  Let $\Aa$ be a timed automaton. Given the bounds $L(x)$ and $U(x)$
  for every clock $x\in X$, the \emph{weak lower bound} $\bar{L}$ is
  given by: $\bar{L}(x)=0$ if $x \in \Rl(\Aa)$, $L(x)=-\infty$ and
  $U(x)\geq 0$, and $\bar{L}(x)=L(x)$ otherwise.
\end{definition}

We denote $\elbu$ the $\ExtraLU$ abstraction obtained by choosing
$\bar{L}$ instead of $L$. Notice that $\elbu$ and $\ExtraLU$ coincide
when zero-checks are written $x=0$ instead of $x\leq 0$ in the
automaton. By definition of $\ExtraLU$ and
Proposition~\ref{prop:order-preserving-yield-polynomial}, we get the
following.

\begin{theorem}
  \label{thm:elbu-order-preserving}
  The abstraction $\elbu$ weakly preserves orders. The non-Zenoness
  problem is solved in polynomial time for $\elbu$.
\end{theorem}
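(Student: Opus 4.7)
The plan is to verify that $\elbu$ weakly preserves orders, and then invoke Proposition~\ref{prop:order-preserving-yield-polynomial} together with Proposition~\ref{prop:rgzg-nz} to derive the polynomial-time algorithm for $\NZP^{\elbu}$. Weak order preservation requires showing that for every zone $Z$ and every pair $x_i, x_j \in \Rl(\Aa) \cap \Cc_0(Z)$, $Z \sat x_i \le x_j$ holds iff $\elbu(Z) \sat x_i \le x_j$ does. The direction from $\elbu(Z)$ to $Z$ is immediate since $Z \subseteq \elbu(Z)$, so the work lies entirely in the converse.

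The cornerstone observation I would first establish is that $\bar{L}(x) \geq 0$ for every $x \in \Rl(\Aa)$. Indeed, such an $x$ appears in some guard of the form $x \le 0$ or $x = 0$, forcing $U(x) \geq 0$. If $L(x) = -\infty$, then by Definition~\ref{defn:weak-L-bounds} we set $\bar{L}(x) = 0$; otherwise $\bar{L}(x) = L(x) \in \Nat$. In either case $\bar{L}(x) \geq 0$. Next, working with the canonical DBM representation of $Z$, the assumption $Z \sat x_i \le x_j$ amounts to $Z_{ij} = (c_{ij}, \preccurlyeq_{ij})$ with $c_{ij} \le 0$. I would then perform a case analysis on the three cases of the $\elbu$ definition applied to the $(i,j)$ entry. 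The first case ($c_{ij} > \bar{L}(x_i)$) is vacuous since $c_{ij} \le 0 \le \bar{L}(x_i)$. The second case ($-c_{ij} > U(x_j)$) yields the entry $(-U(x_j), <)$, which still forces $x_i - x_j < -U(x_j) \le 0$ because $x_j \in \Rl(\Aa)$ gives $U(x_j) \geq 0$. The third case leaves $Z_{ij}$ unchanged. In every case $\elbu(Z) \sat x_i \le x_j$, establishing weak order preservation.

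With this in hand, Proposition~\ref{prop:order-preserving-yield-polynomial} yields that the reachable part of $\RGZG^{\elbu}(\Aa)$ is only $\Oo(|\Rl(\Aa)|)$ larger than that of $\SG^{\elbu}(\Aa)$. Proposition~\ref{prop:rgzg-nz} reduces $\NZP^{\elbu}$ to searching, in $\RGZG^{\elbu}(\Aa)$, for an unblocked path visiting a clear node infinitely often, which can be done in polynomial time by standard SCC-based techniques (as already mentioned in the discussion preceding Definition~\ref{defn:rgzg} for handling blocking clocks). No step should present a genuine obstacle; the only subtlety is the simultaneous use of relevance of both $x_i$ and $x_j$ to obtain $\bar{L}(x_i) \geq 0$ and $U(x_j) \geq 0$ at once, which is precisely what motivates the weakening in Definition~\ref{defn:weak-L-bounds}.
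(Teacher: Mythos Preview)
Your proposal is correct and follows the same approach the paper intends: the paper states the theorem with only the justification ``By definition of $\ExtraLU$ and Proposition~\ref{prop:order-preserving-yield-polynomial}'', and your argument simply spells out that justification by observing $\bar{L}(x)\geq 0$ and $U(x)\geq 0$ for relevant clocks and then checking the three DBM cases. Nothing further is needed.
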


$\elbu$ coincides with $\ExtraLU$ for a wide class of automata. For
instance, when the automaton does not have a zero-check, $\elbu$ is
exactly $\ExtraLU$, and the existence of a non-Zeno run can be decided
in polynomial time. For some automata however, the zone graph obtained
with $\elbu$ is exponentially bigger than the zone graph obtained with
$\ExtraLU$. This is for instance the case for the automaton
$\ANZ_{\phi}$ used to prove \NP-hardness of $\NZP^{LU}$ in
Section~\ref{sec:nz-npc-lu}. Similar to $\elbu$ we can define $\elbup$
which again weakly preserves orders and yield a polynomial algorithm
to solve the non-Zenoness problem.

\section{The Zenoness problem}
\label{sec:zenoness-problem}

In this section we consider the Zenoness problem ($\ZP^\abs$):
\begin{quote}
  Given an automaton $\Aa$ and its abstract zone graph $ZG^\abs(\Aa)$,
  decide if $\Aa$ has a Zeno run.
\end{quote}
As in the case of non-Zenoness, this problem turns out to be
\NP-complete when the abstraction operator $\abs$ is $\ExtraLU$. We
subsequently give the hardness proof by providing a reduction
from 3SAT.

\subsection{Reducing 3SAT to \texorpdfstring{$\ZP^\abs$}{ZPabs} with abstraction \texorpdfstring{$\ExtraLU$}{ExtraLU}}

Let $P = \{ p_1, \dots, p_k\}$ be a set of propositional
variables. Let $\phi = C_1 \wedge \dots \wedge C_n$ be a 3CNF formula
with $n$ clauses. Each clause $C_m$, $m=1,2,\dots,n$ is a disjunction
of three literals $\l_1^m, \l_2^m$ and $\l_3^m$. We construct in
polynomial time an automaton $\AZ_\phi$ and its zone graph
$\zlu(\AZ_\phi)$ such that $\AZ_\phi$ has a Zeno run iff $\phi$ is
satisfiable, thus proving the \NP-hardness.

The automaton $\AZ_\phi$ has clocks $\{x_1, \bar{x_1}, \dots, x_k,
\bar{x_k} \}$ with $x_i$ and $\bar{x_i}$ corresponding to the literals
$p_i$ and $\neg p_i$ respectively. We denote the clock associated to a
literal $\l$ by $cl(\l)$.  The set of states of $\AZ_\phi$ is given by
$\{q_0, q_1, \dots, q_k\} \cup \{r_0, r_1, r_2, \dots, r_n\}$ with
$q_0$ being the initial state.  The transitions are as follows:

\begin{iteMize}{$\bullet$}
\item transitions $q_{i-1} \xra{\{x_i \}} q_i$ and $q_{i-1}
  \xra{\{\bar{x_i}\}} q_i$ for $i=1,2,\dots,k$,
\item a transition $q_k \xra{} r_0$ with no guards and resets,
\item for each clause $C_m$ there are three transitions $r_{m-1}
  \xra{cl(\neg \l) \ge 1} r_m$ for each literal $\l \in \{\l^m_1,
  \l^m_2, \l^m_3\}$,
\item a transition $r_n \xra{} q_0$ with no guards and resets. This
  transition creates a cycle in $\AZ_\phi$.
\end{iteMize}

\begin{figure}[t]
  \centering
  \begin{tikzpicture}[shorten >=1pt,node distance=2cm,on
    grid,auto,font=\footnotesize]
  
    \node[state, initial, initial text={}] (q0) {$q_0$};
    \node[state] (q1) [right=of q0] {$q_1$};
    \node[state] (q2) [right=of q1] {$q_2$};
    \node[state] (q3) [right=of q2] {$q_3$};
    \node[state] (r0) [right=of q3] {$r_0$};
    \node[state,node distance=2.5cm] (r1) [right=of r0] {$r_1$};
    \node[state,node distance=2.5cm] (r2) [right=of r1] {$r_2$};
    
    \draw[->] (q0) edge [bend left] node {$\{x_1\}$}
    (q1); 
    \draw[->] (q0) edge [bend right] node[anchor=north]
    {$\{\bar{x_1}\}$} (q1);
   
    \draw[->] (q1) edge [bend left] node {$\{x_2\}$}
    (q2);
    \draw[->] (q1) edge [bend right] node[anchor=north]
    {$\{\bar{x_2}\}$} (q2);

    \draw[->] (q2) edge [bend left] node {$\{x_3\}$}
    (q3); 
    \draw[->] (q2) edge [bend right] node[anchor=north]
    {$\{\bar{x_3}\}$} (q3);

    \draw[->] (q3) -- (r0);

    \draw[->] (r0) edge [bend left=45] node[anchor=south] {$\bar{x}_1\ge
      1$} (r1);
    \draw[->] (r0) edge node[anchor=south] {$x_2\ge 1$}
    (r1);
    \draw[->] (r0) edge [bend right] node[anchor=north] {$\bar{x}_3\ge
      1$} (r1);

    \draw[->] (r1) edge [bend left=45] node[anchor=south] {$x_1\ge 1$}
    (r2);
    \draw[->] (r1) edge node[anchor=south] {$\bar{x}_2\ge 1$} (r2);
    \draw[->] (r1) edge [bend right] node[anchor=north] {$\bar{x}_3\ge
      1$} (r2);

    \draw[->] (r2) -- ++(0cm,-1.5cm) -- ([yshift=-1.5cm] q0.center) --
    (q0);
 \end{tikzpicture}
 \caption{$\AZ_\phi$ for $\phi = (p_1 \vee \neg p_2 \vee p_3) \wedge
   (\neg p_1 \vee p_2 \vee p_3)$}
\label{fig:zeno-eg}
\end{figure}

As an example, Figure~\ref{fig:zeno-eg} shows the automaton for the
formula $(p_1 \vee \neg p_2 \vee p_3) \wedge (\neg p_1 \vee p_2 \vee
p_3)$. Observe that the transitions $r_{m-1} \xra{cl(\neg \l) \ge 1}
r_m$ check if the clock corresponding to the \emph{negation} of $\l$
is greater than $1$. That is, $cl(\neg \l)=\bar{cl(\l)}$.

Clearly, $\AZ_\phi$ can be constructed from $\phi$ in $\Oo(|\phi|)$
time. We now show that $\phi$ is satisfiable iff $\AZ_\phi$ has a Zeno
run.

\begin{lemma}\label{lem:formula-automaton}
  A 3CNF formula $\phi$ is satisfiable iff $\AZ_\phi$ has a Zeno run.
\end{lemma}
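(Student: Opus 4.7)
The proof will mirror the one for Lemma~\ref{lem:red-3SAT-non-Zenoness}. Each traversal of the unique loop $q_0\to\cdots\to q_k\to r_0\to\cdots\to r_n\to q_0$ encodes an assignment $\chi$ in its $q$-part (``reset $x_i$'' meaning $p_i\mapsto true$, ``reset $\bar{x_i}$'' meaning $p_i\mapsto false$) and a choice of one literal per clause in its $r$-part. The guard $cl(\neg\l)\ge 1$ now asserts that the clock associated to the negation of the chosen literal must have accumulated at least one time unit since its last reset; this is the mechanism through which Zenoness interacts with satisfiability.

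For the forward direction, fix a satisfying assignment $\chi$ of $\phi$ and, for every clause $C_m$, a literal $\l_m$ made true by $\chi$. I construct a Zeno run $\rho$ whose total elapsed time is exactly $1$: delay $1$ once at $q_0$ before the very first reset, and use delay $0$ everywhere else. In each iteration $\rho$ resets $x_i$ or $\bar{x_i}$ according to $\chi$ in the $q$-part, and in the $r$-part takes the transition tagged with $\l_m$ for clause $C_m$. Because $\l_m$ is true under $\chi$, the clock $cl(\l_m)$ is the one reset in every iteration, so its counterpart $cl(\neg\l_m)$ is never reset anywhere along $\rho$ and retains the value $1$ accumulated during the initial delay. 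Hence every guard $cl(\neg\l_m)\ge 1$ is satisfied at $r_{m-1}$ in every iteration, and $\rho$ is a valid infinite run with $\sum\d_i=1<\infty$, hence Zeno.

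For the reverse direction, let $\rho$ be a Zeno run with $\sum_i\d_i\le c$. Since $\AZ_\phi$ has a single cycle, $\rho$ decomposes into infinitely many iterations, each determining a pair $(\chi_k,\sigma_k)$ where $\chi_k$ is the assignment encoded in the $q$-part and $\sigma_k(m)$ is the literal used for the $r$-transition of clause $C_m$. There are only finitely many such pairs, so by pigeonhole some $(\chi^*,\sigma^*)$ occurs in an infinite set $K$ of iterations. I claim $\chi^*$ satisfies $\phi$: otherwise some clause $C_m$ admits $\l=\sigma^*(m)$ false under $\chi^*$. A direct inspection of the two cases $\l=p_i$ and $\l=\neg p_i$ then shows that $cl(\neg\l)$ is reset in the $q$-part of every iteration in $K$. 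Since the guard $cl(\neg\l)\ge 1$ must still be met on the transition $r_{m-1}\to r_m$ of the same iteration, at least one time unit must elapse between that reset and that guard in each iteration of $K$. Summing these disjoint contributions yields $\sum_i\d_i=\infty$, contradicting Zenoness. Hence $\chi^*$ satisfies every clause of $\phi$.

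The crux, and only real obstacle, is the reverse direction: one must combine a pigeonhole over $(\chi_k,\sigma_k)$ with the observation that an unsatisfied clause under a recurring assignment forces a fresh unit of delay on every recurrence. Once that pattern is in place, the arithmetic contradiction with $\sum\d_i<\infty$ is immediate, and the forward direction is a straightforward explicit construction.
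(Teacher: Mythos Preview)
Your proof is correct. The forward direction is identical to the paper's: delay one time unit at the start and zero thereafter, choosing transitions according to a satisfying assignment so that no reset clock is ever checked for $\ge 1$.

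For the reverse direction you take a genuinely different route. The paper argues more directly: since $\rho$ is Zeno, it has a suffix $\rho^s$ in which the set of clocks reset and the set of clocks checked for $\ge 1$ are disjoint (otherwise a clock both reset and checked infinitely often would force one time unit per occurrence). Then \emph{any} single iteration of $\rho^s$ yields a satisfying assignment, because on every transition $r_{m-1}\xra{cl(\neg\l)\ge 1}r_m$ the clock $cl(\neg\l)$ cannot have been reset, hence $cl(\l)$ was, hence $\l$ is true under the encoded assignment. Your argument instead pigeonholes on the finitely many pairs $(\chi,\sigma)$ to locate one occurring infinitely often, and derives a contradiction from an unsatisfied clause by exhibiting infinitely many disjoint unit-length delays. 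Both work; the paper's suffix argument is slightly more economical since it avoids the pigeonhole step and needs only one iteration rather than an infinite family, while your approach makes the contradiction with Zenoness more explicit and arguably easier to verify line by line.
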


\begin{proof}
  For the left-to-right direction, suppose that $\phi$ is
  satisfiable. Then there exists a variable assignment $\chi: P
  \mapsto \{true, false\}$ that evaluates $\phi$ to true. We now build
  the Zeno run of $\AZ_\phi$ using $\chi$.

  Pick an infinite run $\rho$ of $\AZ_\phi$. Clearly, it should have
  the following sequence of states repeated infinitely often:

  \begin{equation}\label{eqn:sequence}
    q_0 \xra{} \dots q_k \xra{} r_0 \xra{} r_1 \xra{} \dots r_n
  \end{equation}

  We choose the transitions for $\rho$ that allow time elapse only by
  a finite amount. If $\chi(p_i) = true$, then we put $q_{i-1}
  \xra{\{x_i\}} q_i$ wherever $q_{i-1} \xra{} q_i$ occurs in
  $\rho$. Otherwise $\chi(p_i) = false$ and we put $q_{i-1}
  \xra{\{\bar{x_i}\}} q_i$. We now need to choose the transitions
  $r_{m-1} \xra{} r_{m}$ for $m = 1, \dots, n$.  Since $\chi$ is a
  satisfying assignment, every clause $C_m$ has a literal $\l$ that
  evaluates to true with $\chi$. We choose the corresponding
  transition $r_{m-1} \xra{cl(\neg \l) \ge 1} r_m$. Observe that if
  $\l$ evaluates to true, it implies that $cl(\l)$ was reset in one of
  the $q_i \xra{} q_{i+1}$ transitions but not $cl(\neg \l)$.

  Therefore, the above construction yields a sequence of transitions
  with the property that all clocks that are reset are never checked
  for greater than 1. This sequence can be taken by elapsing 1 time
  unit in the very first state, and then subsequently elapsing no time
  at all, thus giving a Zeno run in $\AZ_\phi$.

  \bigskip

  We now prove the right-to-left direction. Let $\rho$ be an infinite
  Zeno run of $\AZ_\phi$. An infinite run should repeat the sequence
  of states given in~(\ref{eqn:sequence}). Since $\rho$ is Zeno, it
  has a suffix $\rho^s$ such that for every clock $x$ that is reset in
  $\rho^s$, $x \ge 1$ never occurs in the transitions of
  $\rho^s$. This is because if every suffix of $\rho$ contains a clock
  that is both reset and checked for greater than 1, this would mean
  that there is a time elapse of one time unit occurring infinitely
  often, contradicting the hypothesis that $\rho$ is Zeno.

  Consider a segment $S = q_0 \xra{} \dots q_n \xra{} r_0 \xra{} r_1
  \xra{} \dots r_k$ in $\rho^s$. We construct a satisfying assignment
  $\chi: P \mapsto \{true, false\}$ for $\phi$ from $S$.

  \begin{iteMize}{$\bullet$}
  \item if $S$ contains $q_{i-1} \xra{\{x_i\}} q_i$ then set
    $\chi(p_i) = true$
  \item otherwise, it implies that $S$ contains $q_{i-1}
    \xra{\{\bar{x_i}\}} q_i$ in which case we set $\chi(p_i) = false$.
  \end{iteMize}
  This shows that for a literal $\l$, if $cl(\l)$ is reset in $S$,
  then $\chi(\l) = true$.  From the property of $\rho^s$ that no clock
  that is reset is checked in a guard, for every transition $r_{m-1}
  \xra{cl(\neg \l) \ge 1} r_m$ in $S$, it is clock $cl(\l)$ that is reset
  and hence $\chi(\l) = true$. By construction of $\AZ_\phi$, $\l$ is
  a literal in $C_m$.  Therefore, we get a literal that is true in
  every clause evaluating $\phi$ to true.
\end{proof}

It remains to show that $\zlu(\AZ_\phi)$ can also be calculated in
polynomial time from $\AZ_\phi$. We indeed note that the size of the
$\SG^{LU}(\AZ_\phi)$ is the same as that of the automaton. That will
conclude the proof that a polynomial algorithm for $\ZP^{LU}$ yields a
polynomial algorithm for the 3SAT problem.

\begin{prop}
  \label{prop:zeno-hard}
  The zone graph $\zlu(\AZ_\phi)$ is isomorphic to $\AZ_\phi$. The
  Zenoness problem is \nph{} for $\ExtraLU$ and $\ExtraLUp$.
\end{prop}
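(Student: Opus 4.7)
The plan is to follow the pattern of Theorem~\ref{thm:non-Zenoness-NPC-LU-abstract-graphs}: first show that $\SGLU(\AZ_\phi)$ collapses to a graph isomorphic to the automaton $\AZ_\phi$ itself, and then combine this with Lemma~\ref{lem:formula-automaton} to obtain a polynomial-time reduction from 3SAT to $\ZP^{LU}$.

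The key observation is the asymmetric shape of the bounds in $\AZ_\phi$: since every guard has the form $cl(\neg\l)\ge 1$, the upper-bound map satisfies $U(x)=-\infty$ for every clock $x$, while $L(x)\in\{1,-\infty\}$. I would then unfold the definition of $\ExtraLU$ on a reachable zone $Z$. For any DBM entry $Z_{ij}=(c_{ij},\preccurlyeq_{ij})$ with $j\neq 0$, the second abstraction rule applies whenever $-c_{ij}>U(x_j)=-\infty$, which holds for every finite $c_{ij}$; hence $Z^{LU}_{ij}=(-U(x_j),<)=(\infty,<)$. For entries $Z_{i0}$ (which record upper bounds on $x_i$), note that $\AZ_\phi$ contains no upper-bound guards and clock resets never introduce upper bounds, so no reachable zone of $\SG(\AZ_\phi)$ carries a finite bound on any clock, and $Z_{i0}=(\infty,<)$ already. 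Together these give $\ExtraLU(Z)=\Rpos^X$ for every reachable zone $Z$.

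It then follows that $\SGLU(\AZ_\phi)$ has exactly one reachable node $(q,\Rpos^X)$ per state $q$ of $\AZ_\phi$, and each guard $x\ge 1$ is satisfied by some valuation in $\Rpos^X$, so every transition of $\AZ_\phi$ lifts unambiguously to a transition of $\SGLU(\AZ_\phi)$, yielding the isomorphism. The pair $(\AZ_\phi,\SGLU(\AZ_\phi))$ is then computable from $\phi$ in polynomial time, and by Lemma~\ref{lem:formula-automaton} deciding $\ZP^{LU}$ on this input decides satisfiability of $\phi$, which gives \NP-hardness. The statement transfers to $\ExtraLUp$ via Theorem~\ref{thm:inclusion-abstractions}: since $\ExtraLU(Z)\subseteq\ExtraLUp(Z)\subseteq\Rpos^X$ and the leftmost set already equals $\Rpos^X$ on every reachable zone, $\ExtraLUp(Z)=\Rpos^X$ as well.

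I do not foresee a genuine obstacle. The most delicate step is the verification that reachable zones of $\SG(\AZ_\phi)$ never acquire finite upper bounds on clocks, since this is what lets the second $\ExtraLU$ rule handle all off-zero-column entries cleanly even in cases where the first rule fails (for instance when $L(x_i)=1$ and $c_{ij}\in\{0,1\}$); this invariant is straightforward because $\AZ_\phi$ has only lower-bound guards and resets. After that, the remainder is routine bookkeeping on the DBM definition of $\ExtraLU$.
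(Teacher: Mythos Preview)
Your proposal is correct and follows essentially the same approach as the paper: both establish that every reachable abstracted zone in $\SGLU(\AZ_\phi)$ equals $\Rpos^X$ by exploiting $U(x)=-\infty$, deduce the isomorphism, and then invoke Lemma~\ref{lem:formula-automaton} and Theorem~\ref{thm:inclusion-abstractions}. Your version unpacks the DBM definition of $\ExtraLU$ entry by entry, whereas the paper argues more directly by induction on the abstract zone graph (the initial abstracted zone is $\Rpos^X$; any successor of $\Rpos^X$ under a guard $x\ge 1$ or a reset, once abstracted, is again $\Rpos^X$), but the substance is identical.
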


\begin{proof}
  By looking at the guards in the transitions, we get that for each
  clock $x$, $L(x) = 1$ and $U(x) = -\infty$. The initial node of the
  zone graph $\zlu(\AZ_\phi)$ is $(q_0, \ExtraLU(Z_0))$ where $Z_0$ is
  the set of valuations given by $(x_1 \ge 0) \wedge (x_1 = \bar{x_1}
  = \dots = x_k = \bar{x_k})$. By definition, since for each clock
  $x$, $U(x) = -\infty$, we have $\ExtraLU(Z_0)=\Rpos^X$, the
  non-negative half-space.

  On taking a transition with a guard $x \ge 1$ from $\Rpos^X$, we
  come to a zone $\Rpos^X \wedge x\ge 1$. However, since $U(x) =
  -\infty$, $\ExtraLU(\Rpos^X \wedge x\ge 1)$ gives back
  $\Rpos^X$. Same for transitions that reset a clock. It follows that
  $\zlu(\AZ_\phi)$ is isomorphic to $\AZ_\phi$. This extends to
  $\ExtraLUp$ by Theorem~\ref{thm:inclusion-abstractions}. Then \nph{}
  immediately follows from
  Lemma~\ref{lem:formula-automaton}.
\end{proof}

In the next section, we provide an algorithm for the zenoness problem
$\ZP^\abs$ and give conditions on abstraction $\abs$ for the solution
to be polynomial.

\subsection{Finding Zeno paths}
\label{sec:zeno-P-algorithm}

We say that a transition is \emph{lifting} if it has a guard that
implies $x \ge 1$ for some clock $x$. The idea is to find if there
exists a run of an automaton $\Aa$ in which every clock $x$ that is
reset infinitely often is lifted only finitely many times, ensuring
that the run is Zeno. This amounts to checking if there exists a cycle
in $\SG(\Aa)$ where every clock that is reset is not lifted. Observe
that when $(q,Z) \overset{x \ge c}{\Longrightarrow} (q',Z')$ is a
transition of $\SG(\Aa)$, then $Z'$ remembers that $x$ has been lifted
to a value bigger than $c$, that is to say $Z'$ entails  $x \ge c$. Therefore,
if a node $(q,Z)$ is part of a cycle of our required form, then in
particular, all the clocks that are greater than 1 in $Z$ should not
be reset in the cycle.

Based on the above intuition, our solution begins with computing the
zone graph on-the-fly. At some node $(q,Z)$ the algorithm
non-deterministically guesses that this node is part of a cycle that
yields a zeno run. This node transits to what we call the \emph{slow
  mode}. In this mode, a reset of $x$ in a transition is allowed from
$(q',Z')$ only if $Z'$ is consistent with $x < 1$: there is at least
one valuation $v \in Z'$ that has $v(x) < 1$.

Before we define our construction formally, recall that we would be
working with the abstract zone graph $ZG^\abs(\Aa)$ and not
$ZG(\Aa)$. Therefore for our solution to work, the abstraction
operator $\abs$ should remember the fact that a clock has a value
greater than 1. For an automaton $\Aa$ over the set of clocks $X$, let
$\Lf(\Aa)$ denote the set of clocks that appear in a lifting
transition of $\Aa$.

\begin{definition}[Lift-safe abstractions]
  An abstraction $\abs$ is called \emph{lift-safe} if for every zone
  $Z$ and for every clock $x \in \Lf(\Aa)$, $Z\sat x \ge 1$ iff
  $\abs(Z)\sat x \ge 1$.
\end{definition}

We are now in a position to define our \emph{slow zone graph}
construction to decide if an automaton has a Zeno run.

\begin{definition}[Slow zone graph]
  Let $\Aa$ be a timed automaton over the set of clocks $X$. Let
  $\abs$ be a lift-safe abstraction. The \emph{slow zone graph} $\zzg$
  has nodes of the form $(q,Z, l)$ where $l = \{ \free, \slow \}$. The
  initial node is $(q_0,Z_0, \free)$ where $(q_0, Z_0)$ is the initial
  node of $ZG^\abs(\Aa)$. For every transition $(q,Z) \tto^{t}_\abs
  (q',Z')$ in $ZG^\abs(\Aa)$ with $t = (q, g, R, q')$, we have the
  following transitions in $\zzg$:

  \begin{iteMize}{$\bullet$}
  \item a transition $(q, Z, \free) \tto^{t}_\abs (q',Z', \free)$,
  \item a transition $(q,Z,\slow) \tto^{t}_\abs (q',Z', \slow)$ if for
    all clocks $x \in R$, $Z\wedge g$ is consistent with $x < 1$,
  \end{iteMize}
  A new letter $\tau$ is introduced that adds transitions $(q,Z,\free)
  \tto^{\t}_\abs (q,Z, \slow)$.
\end{definition}

A node of the form $(q,Z, \slow)$ is said to be a \emph{slow} node.  A
path of $\zzg$ is said to be \emph{slow} if it has a suffix consisting
entirely of slow nodes.  The $\tau$-transitions take a node $(q,Z)$
from the \emph{free} mode to the \emph{slow} mode. Note that the
transitions of the slow mode are constrained
further. Figure~\ref{fig:szg} shows an example of an automaton and
corresponding slow zone graph. The free mode is identical to the zone
graph of the automaton. However, in the slow mode, the transition $q_0
\xra{\{x\}} q_1$ is not allowed from node $(q_0, x \ge 1, \slow)$ since
$x$ has been lifted. Hence, the only infinite paths in the slow mode
instantiate the loop on state $q_0$ which correpond to the zeno runs
of the automaton. The $\tau$ transitions allow to non
deterministically guess a node which has a slow path.


\begin{figure}
  \begin{minipage}{.2\textwidth}
    \centering
    \begin{tikzpicture}[font=\footnotesize]
  \begin{scope}
    \tikzstyle{every node}=[circle,draw,minimum size=5mm]
    \node (s0) at (0,0) {$q_0$};
    \node (s1) at (0,-2) {$q_1$};
  \end{scope}
  \begin{scope}[->,line width=1pt]
    \draw (-0.7,0) -- (s0);
    \draw (s0) edge[in=60,out=120,loop] (s0);
    \draw (s0) edge[bend right=15] node[left] {$\{x\}$} (s1);
    \draw (s1) edge[bend right=15] node[right] {$x \geq 1$} (s0);
  \end{scope}
\end{tikzpicture}

  \end{minipage}
  \hfill
  \begin{minipage}{.75\textwidth}
    \centering
    \begin{tikzpicture}[font=\footnotesize]
  \begin{scope}
    \tikzstyle{every node}=[draw]
    \node (s0x0f) at (0,0) {$q_0, \ x\geq 0, \ \free$};
    \node (s1x0f) at (0,-1.5) {$q_1, \ x\geq 0, \ \free$};
    \node (s0x1f) at (0,-3) {$q_0, \ x\geq 1, \ \free$};
    \node (s0x0s) at (5,0) {$q_0, \ x\geq 0, \ \slow$};
    \node (s1x0s) at (5,-1.5) {$q_1, \ x\geq 0, \ \slow$};
    \node (s0x1s) at (5,-3) {$q_0, \ x\geq 1, \ \slow$};
  \end{scope}
  \begin{scope}[->,line width=1pt]
    \draw (0,0.7) -- (s0x0f);
    \draw (s0x0f) edge[loop left] (s0x0f);
    \draw (s0x0f) -- node[left] {$\{x\}$} (s1x0f);
    \draw (s1x0f) edge[bend right] node[left] {$x\geq 1$} (s0x1f);
    \draw (s0x1f) edge[loop left] (s0x1f);
    \draw (s0x1f) edge[bend right] node[right] {$\{x\}$} (s1x0f);
    \draw (s0x0s) edge[loop right] (s0x0s);
    \draw (s0x0s) -- node[left] {$\{x\}$} (s1x0s);
    \draw (s1x0s) edge[bend right] node[left] {$x\geq 1$} (s0x1s);
    \draw (s0x1s) edge[loop right] (s0x1s);
    \begin{scope}[dashed]
      \draw (s0x0f) edge node[above] {$\tau$} (s0x0s);
      \draw (s1x0f) edge node[above] {$\tau$} (s1x0s);
      \draw (s0x1f) edge node[above] {$\tau$} (s0x1s);
    \end{scope}
  \end{scope}
  \begin{scope}[dotted,rounded corners]
    \draw (-2.5,1.25) rectangle (1.75,-3.75);
    \draw (0,-3.75) node[below] {\textbf{free mode}};
    \draw (3.25,1.25) rectangle (7.5,-3.75);
    \draw (5,-3.75) node[below] {\textbf{slow mode}};
  \end{scope}
\end{tikzpicture}

  \end{minipage}
  \caption{A timed automaton (left) and corresponding slow zone graph
    (right).}
  \label{fig:szg}
\end{figure}

The correctness follows from the fact that there is a cycle in $\zzg$
consisting entirely of slow nodes iff $\Aa$ has a Zeno run. This is
detailed in the following two lemmas.

\begin{lemma}\label{lem:ta-szg}
  If $\Aa$ has a Zeno run, then there exists an infinite slow path in
  $\zzg$.
\end{lemma}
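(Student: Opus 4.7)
The plan is, starting from a Zeno run $\rho = (q_0, \val_0) \xra{\d_0, t_0} (q_1, \val_1) \xra{\d_1, t_1} \cdots$ of $\Aa$, to build an infinite path in $\zzg$ that follows the corresponding path $\pi$ of $\SG^\abs(\Aa)$ in $\free$ mode for a long enough prefix, then slips into $\slow$ mode via a $\tau$-transition and remains there forever. Theorem~\ref{thm:extraLU-extraM-sound-complete-infinite} provides an infinite path $\pi = (q_0, Z_0) \tto^{t_0}_\abs (q_1, Z_1) \tto^{t_1}_\abs \cdots$ in $\SG^\abs(\Aa)$ of which $\rho$ is an instance. Free-mode transitions and $\tau$-transitions are always admissible by definition, so the only thing to check is that, from some switching index $N$ onwards, every transition $t_i = (q_i, g_i, R_i, q_{i+1})$ of $\pi$ is enabled in $\slow$ mode, that is, for every $x \in R_i$ the zone $Z_i \wedge g_i$ must be consistent with $x < 1$.

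The main work is locating such an $N$, and this is where Zenoness bites. Partition the clocks of $\Aa$ into those reset only finitely often along $\rho$ and those reset infinitely often. For the first group, let $N_0$ be strictly larger than all their reset positions, so no such clock can belong to $R_i$ for $i \ge N_0$. For a clock $x$ of the second group, let $r^x_1 < r^x_2 < \cdots$ be its reset positions and $\Delta^x_j = \sum_{k = r^x_j + 1}^{r^x_{j+1}} \d_k$ the time elapsed between the $j$-th and $(j+1)$-th reset of $x$. Since $\rho$ is Zeno, $\sum_j \Delta^x_j \le \sum_i \d_i < \infty$, so $\Delta^x_j \to 0$ and there exists $K_x \ge 1$ with $\Delta^x_j < 1$ for all $j \ge K_x$. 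Choose $N$ strictly larger than $N_0$ and than every $r^x_{K_x}$ taken over the second group. Then for any $i \ge N$ with $x \in R_i$, the index $i$ is some $r^x_m$ with $m > K_x$, and hence $(\val_i + \d_i)(x) = \Delta^x_{m-1} < 1$.

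It remains to confirm that the slow-mode condition holds from index $N$ onwards. Since $\val_i \in Z_i$ by the instance relation and $Z_i$ is closed under time elapse, $\val_i + \d_i \in Z_i$; it also satisfies $g_i$ because $\rho$ fires $t_i$ from it. Thus $\val_i + \d_i$ is a valuation of $Z_i \wedge g_i$ with $(\val_i + \d_i)(x) < 1$ for each $x \in R_i$ whenever $i \ge N$, which is exactly the consistency of $Z_i \wedge g_i$ with $x < 1$ demanded by the slow-mode transition rule. The resulting path
$(q_0, Z_0, \free) \tto_\abs \cdots \tto_\abs (q_N, Z_N, \free) \tto^\tau_\abs (q_N, Z_N, \slow) \tto_\abs (q_{N+1}, Z_{N+1}, \slow) \tto_\abs \cdots$
is an infinite path of $\zzg$ whose suffix consists entirely of slow nodes, hence an infinite slow path as required. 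The only substantive obstacle is the construction of $N$; everything else unfolds directly from the definitions.
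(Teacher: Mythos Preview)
Your proof is correct and follows the same overall strategy as the paper: lift the Zeno run to a path in $\SG^\abs(\Aa)$ via completeness, locate a switching index, and move to $\slow$ mode from there. The one noteworthy difference is in how the switching index is justified. The paper argues via the structural observation that the set $X^l$ of infinitely-often-lifted clocks and the set $X^r$ of infinitely-often-reset clocks are disjoint (otherwise one time unit elapses infinitely often), and then asserts that beyond some index every $Z_j$ is consistent with $x<1$ for $x\in X^r$. You instead bypass lifting entirely and argue directly from Zenoness that the inter-reset delays $\Delta^x_j$ tend to zero, so $(\val_i+\d_i)(x)<1$ at every sufficiently late reset position; this witness lies in $Z_i\wedge g_i$, which is exactly what the $\slow$-mode rule demands. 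Your route is arguably cleaner here: it produces the witnessing valuation explicitly and checks the condition $Z\wedge g$ consistent with $x<1$ rather than just $Z$ consistent with $x<1$, which is what the definition actually requires. The paper's detour through $X^l\cap X^r=\emptyset$ is the observation that drives the converse Lemma~\ref{lem:szg-ta} but is not strictly needed in this direction.
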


\proof
  Let $\rho$ be a Zeno run of $\Aa$:

  \begin{equation*}
    (q_0, \val_0) \xra{\d_0, t_0} (q_1, \val_1) \xra{\d_1, t_1} \dots
  \end{equation*}\smallskip
  
  \noindent Let $\pi$ be the corresponding path in $ZG^\abs(\Aa)$:
  
  \begin{equation*}
    (q_0, Z_0) \tto^{t_0}_\abs
    (q_1, Z_1) \tto^{t_1}_\abs \dots
  \end{equation*}\smallskip

  \noindent We construct an infinite slow path in $\zzg$ from the path
  $\pi$. Let $X^l$ be the set of clocks that are lifted infinitely
  often in $\pi$ and let $X^r$ be the set of clocks that are reset
  infinitely often in $\pi$. Let $\pi^i$ denote the suffix of $\pi$
  starting from the position $i$.

  Clearly, there exists an index $m$ such that all the clocks that are
  lifted in $\pi^m$ belong to $X^l$ and the ones that are reset in
  $\pi^m$ belong to $X^r$.  Since $\rho$ is Zeno, we have $X^l \cap
  X^r = \emptyset$. This shows that all the clocks that are reset in
  $\pi^m$ are never lifted in its transitions. Therefore, there exists
  an index $k \ge m$ such that for all $j \ge k$, $Z_j$ is consistent
  with $x < 1$ for all clocks $x \in X^r$ and we get the following
  path of $\zzg$:
  \begin{equation*}
    (q_0, Z_0, \free) \tto^{t_0}_\abs
    \dots
    (q_j, Z_j, \free) \tto^{\t}_\abs
    (q_j, Z_j, \slow) \tto^{t_j}_\abs
    (q_{j+1}, Z_{j+1}, \slow) \tto^{t_{j+1}}_\abs
    \dots\eqno{\qEd}
  \end{equation*}

\begin{lemma}\label{lem:szg-ta}
  If $\zzg$ has an infinite slow path, then $\Aa$ has a Zeno run.
\end{lemma}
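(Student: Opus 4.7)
The strategy is dual to that of Lemma~\ref{lem:rgzg-to-TBA}: start from any instantiation of the slow path provided by soundness of $\abs$, identify the combinatorial constraints imposed by the slow transitions, and then modify the instantiation into a Zeno run of $\Aa$. Let the given slow suffix begin at a node $(q_k,Z_k,\slow)$, and let $X^r$ (resp.\ $X^l$) denote the set of clocks reset (resp.\ lifted) in infinitely many transitions of this suffix.

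The core combinatorial observation is that $X^r\cap X^l=\emptyset$. Suppose some clock $x$ is lifted in a slow transition $(q,Z,\slow)\tto^t_\abs(q',Z',\slow)$; then $x$ cannot be in the reset set of $t$ (otherwise $Z\wedge g$ would imply both $x<1$ and $x\ge 1$), and the exact successor zone satisfies $x\ge 1$. By the lift-safety of $\abs$, this constraint $x\ge 1$ survives the abstraction step and, by a straightforward induction along the slow suffix, is preserved by every subsequent abstract transition until $x$ is reset (time elapse only increases $x$, other transitions preserve $x\ge 1$ in the exact successor, and lift-safety carries this into the abstracted zone). But the slow transition relation forbids resetting $x$ from any zone inconsistent with $x<1$, so $x$ cannot be reset in the slow suffix, contradicting $x\in X^r$.

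Using soundness, let $\rho$ be an instantiation of the slow path and choose $k'\ge k$ so that beyond step $k'$ every reset in $\rho$ involves a clock in $X^r$ and every lift involves a clock in $X^l$. Since the $X^l$-clocks are never reset after $k'$, they grow monotonically along $\rho$, so there is an index $i^*\ge k'$ with $\val_{i^*}(x)\ge M$ for every $x\in X^l$, where $M$ is the largest integer constant appearing in guards of $\Aa$. The plan is then to construct a Zeno run $\rho'$ that coincides with $\rho$ up to index $i^*$ and, for $i>i^*$, chooses a delay $\delta'_i<1/2^{i-i^*}$ (strictly positive when the guard $g_i$ demands it, zero otherwise). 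Summing yields $\sum_{i> i^*}\delta'_i<1$, so $\rho'$ is Zeno.

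The main obstacle is to produce, for each $i>i^*$, a valuation $\val'_i\in Z_i$ compatible with these tiny delays and with the guards and resets of $g_i$ and $R_i$. The intuition is that after $i^*$ no new constraint really forces positive time to elapse: lifting guards involve only $X^l$-clocks whose values already exceed $M$ and are preserved by the (now shorter) delays; remaining guards on $X^r$-clocks are non-lifting and are thus either upper bounds (inherited from the larger delays of $\rho$) or strict positivity requirements (met by any positive delay); and clocks outside $X^r\cup X^l$, being neither reset nor lifted after $k'$, simply carry their $\rho$-values along. Turning this into a rigorous zone-theoretic argument---showing that each $\val'_i$ can indeed be chosen in $Z_i$ under these constraints---is the technical crux of the proof.
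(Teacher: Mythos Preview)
Your overall strategy matches the paper's: instantiate the slow path by soundness, establish $X^r\cap X^l=\emptyset$ via lift-safety, pass to a tail where the $X^l$-clocks already exceed the relevant constants, then shrink the delays to obtain a Zeno run. Your argument for $X^r\cap X^l=\emptyset$ is in fact more explicit than the paper's one-line appeal to the slow-mode semantics and lift-safety.

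The final step, however, is aimed in the wrong direction. You frame the remaining work as producing $\val'_i\in Z_i$ and call this a ``zone-theoretic'' crux; but the modified valuations $\val'_i$ need \emph{not} lie in the abstracted zones $Z_i$, and you should not try to show this. What is required is only that $\rho'$ be a run of $\Aa$, i.e.\ that each $\val'_i+\delta'_i$ satisfy the guard $g_i$. The paper carries this out by a direct case analysis on the atomic constraints in $g_i$ (upper bounds $x\le c$, lifting lower bounds $x\ge c$ with $c\ge 1$, and the residual case $x>0$), which is essentially the intuition you already sketch. One technical point: your delay rule (``strictly positive when the guard demands it, zero otherwise'') is brittle, since making a delay positive at step $j$ to satisfy some $y>0$ can later break an $x\le 0$ at step $i>j$ for a clock $x$ not reset in between. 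The paper instead sets $\delta'_i=\min(\delta_i,1/2^m)$ at the $m$-th index with $\delta_i>0$ and $\delta'_i=0$ otherwise, so that $\delta'_i>0\iff\delta_i>0$ and hence $\val'_i(x)+\delta'_i=0\iff \val_i(x)+\delta_i=0$; this makes all three guard cases go through uniformly with no zone reasoning required.
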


\begin{proof}
  Let $\pi$ be the slow path of $\zzg$:
  \begin{equation*}
    (q_0, Z_0, \free) \tto^{t_1}_\abs 
    \dots 
    (q_j, Z_j, \free) \tto^{\tau}_\abs
    (q_j, Z_j, \slow) \tto^{t_j}_\abs
    (q_{j+1}, Z_{j+1}, \slow) \tto^{t_{j+1}}_\abs
    \dots
  \end{equation*}
  Take the corresponding path in $ZG^\abs(\Aa)$ and an instance $\rho
  = (q_0, \val_0) \xra{\d_0, t_0} (q_1, \val_1) \dots$ which is a run
  of $\Aa$, as we have assumed that $\abs$ is a sound abstraction.
 
  Let $X^r$ be the set of clocks that are reset infinitely often and
  let $X^l$ be the set of clocks that are lifted infinitely often in
  $\rho$. By the semantics of the slow mode and from our hypothesis of
  $\abs$ being lift-safe, after the index $j$, all clocks that are
  lifted once can never be reset again. Therefore, there exists an
  index $k \ge j$ such that the following hold:

  \begin{iteMize}{$\bullet$}
  \item all clocks that are reset in $\rho^k$ belong to $X^r$ and all
    clocks that are lifted in a transition of $\rho^k$ belong to
    $X^l$,
  \item for all $x \in X^l$ and for all $i \ge k$, $\val_i(x) \ge c$
    where $c$ is the maximum constant appearing in a lifting
    transition of $\rho^k$.
  \end{iteMize}
 
  \noindent We now modify the time delays of $\rho^k$ to construct a run that
  elapses a bounded amount of time.  Pick the sequence of indices
  $i_1, i_2, \dots$ in $\rho^k$ such that $\d_{i_m} > 0$, for all $m
  \in \Nat$. Define the new delays $\d'_i$ for all $i \ge k$ as
  follows:
  \begin{equation*}
    \d'_i = 
    \begin{cases}
      min(\d_i, \frac{1}{2^{j}}) & \text{if}~i= i_j~\text{for some }j \\
      0 & \text{otherwise}
    \end{cases}
  \end{equation*}
  Consider the run $\rho'$ obtained by elapsing $\d_i'$ time units
  after the index $k$:
  \begin{equation*}
    (q_0, \val_0) \xra{\d_0, t_0} 
    \dots 
    \xra{\d_{k-1}, t_{k-1}}
    (q_k, \val_k) \xra{\d_k', t_k}
    (q_{k+1}, \val'_{k+1}) \xra{\d'_{k+1}, t_{k+1}}
    \dots
  \end{equation*}
  Clearly, $\rho'$ is Zeno. It remains to prove that $\rho'$ is a run
  of $\Aa$.  Denote $\val_k$ by $\val'_k$. We need to show that for
  all $i \ge k$, $\val'_i + \d'_i$ satisfies the guard in the
  transition $t_i$. Call this guard $g_i$. Clearly, since $\val'_i +
  \d'_i \le \val_i + \d_i$ by definition, if $g_i$ is of form $x < c$
  or $x \le c$ then it is satisfied by the new valuation. Let us now
  consider the case when $g_i$ is of the form $x \ge c$ or $x > c$. If
  $c \ge 1$, then we know that $x \in X^l$ from the assumption on
  $k$. But since $\val_k(x) \ge c$ and $x$ is not reset anywhere in
  $\rho^k$, $\val'_i(x) \ge c$ for all $i$ and hence the new valuation
  satisfies $g_i$. We are left with the case when $g_i$ is of the form
  $x > 0$. However this follows since by definition of the new
  $\d'_i$, $\val_i' + \d_i' = 0$ iff $\val_i + \d_i = 0$.
\end{proof}

From the definition of $\zzg$ it follows clearly that for each node
$(q,Z)$ of the zone graph there are two nodes in $\zzg$: $(q,Z,
\free)$ and $(q,Z, \slow)$. We thus get the following theorem.

\begin{prop}
  \label{prop:zeno-lift-safe}
  Let $\abs$ be a lift-safe abstraction. The automaton $\Aa$ has a
  Zeno run iff $\zzg$ has an infinite slow path. The number of
  reachable nodes of $\zzg$ is atmost twice the number of reachable
  nodes in $ZG^\abs(\Aa)$.
\end{prop}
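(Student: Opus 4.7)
The plan is to observe that the proposition is essentially a direct consequence of the two preceding lemmas together with the definition of $\zzg$, so no new technical machinery is needed.

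For the first claim, I would simply combine Lemma~\ref{lem:ta-szg} and Lemma~\ref{lem:szg-ta}. The left-to-right direction ($\Aa$ has a Zeno run implies $\zzg$ has an infinite slow path) is exactly Lemma~\ref{lem:ta-szg}, which uses lift-safety only implicitly via the fact that a Zeno run has a suffix in which the sets of lifted and reset clocks are disjoint; this allows the $\tau$-transition to be taken at some index $j$, after which every reset is consistent with $x<1$ in the zone $Z_j$. The right-to-left direction is Lemma~\ref{lem:szg-ta}, where lift-safety is used precisely to guarantee that if a clock $x$ satisfies $Z \sat x \ge 1$ in the underlying unabstracted zone, then the abstraction $\abs(Z)$ still witnesses this; hence in slow mode no lifted clock can ever be reset again, and the delays can be shrunk to a convergent series $\d'_i \le 1/2^j$ to produce a Zeno instance of the path.

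For the second claim on the size, I would argue directly from the definition of $\zzg$: every node of $\zzg$ is of the form $(q,Z,l)$ where $(q,Z)$ is a node of $ZG^\abs(\Aa)$ and $l \in \{\free,\slow\}$. Therefore each reachable node $(q,Z)$ of $ZG^\abs(\Aa)$ contributes at most the two nodes $(q,Z,\free)$ and $(q,Z,\slow)$ to $\zzg$, yielding the bound of twice the number of reachable nodes in $ZG^\abs(\Aa)$. Reachability of the $\slow$-copy follows whenever the $\free$-copy is reachable, via the $\tau$-transition, but this is not needed for the upper bound.

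There is no real obstacle here: the lemmas already carry the content. The only point that might look like a subtlety is making sure that lift-safety is actually invoked where needed (in the Zeno-run construction from a slow path), but this is handled cleanly in Lemma~\ref{lem:szg-ta}. So the entire proof reduces to one sentence citing the two lemmas for the iff, plus one sentence observing the doubling from the extra $\{\free,\slow\}$ component.
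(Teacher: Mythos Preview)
Your proposal is correct and mirrors the paper's approach exactly: the paper states that the equivalence follows from Lemmas~\ref{lem:ta-szg} and~\ref{lem:szg-ta}, and that the size bound is immediate from the definition of $\zzg$ since each node $(q,Z)$ of $ZG^\abs(\Aa)$ yields at most the two nodes $(q,Z,\free)$ and $(q,Z,\slow)$. Your additional remarks on where lift-safety is actually invoked are accurate and slightly more explicit than the paper, but the overall structure is identical.
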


We now turn our attention towards some of the abstractions existing in
the literature. We observe that both $\ExtraM$ and $\ExtraMp$ are
lift-safe and hence the Zenoness problem can be solved using the slow
zone graph construction. However, in accordance to the \NP-hardness of
the problem for $\ExtraLU$, we get that $\ExtraLU$ is not lift-safe.

\begin{thm}
  The abstractions $\ExtraM$ and $\ExtraMp$ are lift-safe. The
  Zenoness problem is solved in polynomial time for $\ExtraM$ and
  $\ExtraMp$.
\end{thm}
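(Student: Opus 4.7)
The plan is to verify lift-safety of $\ExtraM$ and $\ExtraMp$ by direct inspection of their DBM-level definitions, and then read off the polynomial-time complexity from Proposition~\ref{prop:zeno-lift-safe}. The starting point is the observation that any clock $x_i \in \Lf(\Aa)$ appears in a guard implying $x_i \ge 1$, so $L(x_i) \ge 1$ and hence $M(x_i) = \max(L(x_i), U(x_i)) \ge 1$.

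For lift-safety, I would argue on the canonical DBM representation of $Z$, so that the entailment $Z \models x_i \ge 1$ is captured precisely by the entry $Z_{0i} = (c_{0i}, \preccurlyeq_{0i})$ being at most $(-1, \le)$ in the standard DBM ordering (with the appropriate handling of strict versus non-strict). The direction $\abs(Z) \models x_i \ge 1 \Rightarrow Z \models x_i \ge 1$ is immediate from $Z \subseteq \abs(Z)$. For the converse I would case-split on the definitions of $\ExtraM$ and $\ExtraMp$ specialised to the index pair $(0, i)$: the branches producing $(\infty, <)$ are vacuous in this case because $c_{0i} \le 0 = M(x_0)$ and $c_{00} = 0$; the branch triggered when $-c_{0i} > M(x_i)$ outputs $(-M(x_i), <)$, which encodes $x_i > M(x_i) \ge 1$ and therefore still entails $x_i \ge 1$; and the ``otherwise'' branch leaves $Z_{0i}$ intact. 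In every case, the $(0, i)$ entry of the abstracted DBM still entails $x_i \ge 1$, hence $\abs(Z) \models x_i \ge 1$.

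With lift-safety in hand, Proposition~\ref{prop:zeno-lift-safe} yields both the polynomial size bound $|\zzg| \le 2\,|\SG^\abs(\Aa)|$ and the reduction of Zenoness to the existence of an infinite slow path in $\zzg$. Such a path exists iff the reachable slow-mode fragment of $\zzg$ contains a non-trivial strongly connected component, which is decidable in linear time via Tarjan's algorithm. Combining these observations gives a polynomial-time decision procedure for $\ZP^\abs$ for both $\abs = \ExtraM$ and $\abs = \ExtraMp$. The only real technical hurdle is the DBM case analysis itself, which is mechanical once one has noted $M(x_i) \ge 1$ for every lifted clock and identified $(0, i)$ as the DBM entry governing the constraint $x_i \ge 1$.
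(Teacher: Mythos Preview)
Your proposal is correct and follows essentially the same approach as the paper: both argue that $M(x)\ge 1$ for every lifted clock and that lift-safety then follows from the definitions, and both invoke Proposition~\ref{prop:zeno-lift-safe} for the polynomial complexity. The only difference is that you spell out the DBM case analysis on the $(0,i)$ entry and the SCC step explicitly, whereas the paper leaves these as ``direct from the definitions'' and ``easily obtained''.
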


\begin{proof}
  Observe that for every clock that is lifted, the bound $M$ is at
  least $1$. It is now direct from the definitions that $\ExtraM$ and
  $\ExtraMp$ are lift-safe. A polynomial algorithm is easily obtained
  from Proposition~\ref{prop:zeno-lift-safe}.
\end{proof}

\begin{thm}\label{thm:np-zeno}
  The abstractions $\ExtraLU$ and $\ExtraLUp$ are not lift-safe. The
  Zenoness problem for $\ExtraLU$ and $\ExtraLUp$ is \npc{}.
\end{thm}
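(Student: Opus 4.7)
The statement has three parts: the non-lift-safeness of $\ExtraLU$ and $\ExtraLUp$, \NP-hardness, and \NP-membership of $\ZP^{LU}$. The non-lift-safeness can be read off directly from the definition of $\ExtraLU$: the clause ``if $-c_{0j}>U(x_j)$ then $Z^{LU}_{0j}=(-U(x_j),<)$'' turns a constraint $x\geq 1$ (i.e.\ $x_0-x\leq -1$) into the trivial one whenever $U(x)=-\infty$. So any automaton containing a clock $x\in\Lf(\Aa)$ with $U(x)=-\infty$ witnesses the failure of the iff in the lift-safeness definition; the automaton $\AZ_\phi$ from Proposition~\ref{prop:zeno-hard} is of exactly this kind (every clock is only lower-bounded), and the result transfers to $\ExtraLUp$ via Theorem~\ref{thm:inclusion-abstractions}. \NP-hardness is then already furnished by Proposition~\ref{prop:zeno-hard} combined with Lemma~\ref{lem:formula-automaton}, since $\zlu(\AZ_\phi)$ is computable in polynomial time from $\phi$.

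The bulk of the work is \NP-membership, and my plan is to adapt the nondeterministic guessing argument used for non-Zenoness in Theorem~\ref{thm:nzp-np-extralu}. A polynomial-size witness for a Zeno run will consist of: (i) a subset $L\subseteq\Lf(\Aa)$; (ii) a node $(q,Z^{LU})$ of $\SGLU(\Aa)$ together with a simple path from the initial node certifying reachability; and (iii) a simple cycle from $(q,Z^{LU})$ back to itself in $\SGLU(\Aa)$ of length at most $|\SGLU(\Aa)|$ each of whose transitions $t=(q,g,R,q')$ satisfies $R\cap L=\emptyset$ and has $x\in L$ for every clock $x$ for which $g$ implies $x\geq c$ with $c\geq 1$. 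Every condition is locally checkable and the entire witness has polynomial size, which gives the \NP-upper bound provided the witness correctly characterises the existence of a Zeno run.

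Correctness splits into the two usual directions. If $\Aa$ has a Zeno run, completeness of $\ExtraLU$ for infinite runs (Theorem~\ref{thm:extraLU-extraM-sound-complete-infinite}) lifts it to an infinite path $\pi$ in $\SGLU(\Aa)$. Denote by $X^r$ and $X^l$ the sets of clocks respectively reset and lifted infinitely often along $\pi$; Zenoness forces $X^r\cap X^l=\emptyset$, so taking $L:=X^l$ and two consecutive occurrences of a node visited infinitely often in the suffix of $\pi$ where only clocks in $X^r\cup X^l$ are reset or lifted produces the required cycle. Conversely, given the witness, unroll the cycle into an infinite path of $\SGLU(\Aa)$ and instantiate it to a concrete run by soundness of $\ExtraLU$; then I adapt Lemma~\ref{lem:szg-ta}: first let finitely many preparatory iterations of the concrete run elapse so that each clock of $L$ (never reset on the cycle) accumulates a value exceeding every lifting constant that appears on the cycle, and from that index onwards shrink the $n$-th subsequent iteration to total duration $1/2^n$.

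The main obstacle is exactly this last adaptation: because $\ExtraLU$ is not lift-safe, the abstract zone $Z^{LU}$ carries no guarantee that concrete clocks in $L$ have values large enough to satisfy the lifting guards of the cycle, so one cannot shrink delays immediately as in Lemma~\ref{lem:szg-ta}. The workaround is to rely on soundness of $\ExtraLU$ for infinite runs to obtain \emph{some} concrete instantiation of the unrolled cycle, then use the preparatory phase above to ``charge'' the clocks in $L$ before delays are made small; after that point these clocks are never reset and hence remain large, while guards on reset clocks (which by the cycle conditions are never of the lifting form) remain satisfied under small delays by the same interpolation argument that underlies Lemma~\ref{lem:szg-ta}.
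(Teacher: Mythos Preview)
Your proposal is correct and follows essentially the same approach as the paper. Your guessed set $L$ plays exactly the role of the paper's set $W$ (clocks allowed to be lifted on the cycle), and your two local conditions on the cycle transitions---no reset of an $L$-clock, and every lifted clock is in $L$---are precisely the paper's conditions. Your correctness argument, in particular the ``preparatory phase'' that lets $L$-clocks grow large before shrinking delays, is more explicit than the paper's one-line justification and is the right way to handle the failure of lift-safeness.
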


\begin{proof}
  That $\ExtraLU$ and $\ExtraLUp$ are not lift-safe follows from the
  proof of Proposition~\ref{prop:zeno-hard}. We show the
  \NP-membership using a technique similar to the slow zone graph
  construction. Since $\ExtraLU$ is not lift-safe, the reachable zones
  in $\zlu(\Aa)$ do not maintain the information about the clocks that
  have been lifted. Therefore, at some reachable zone $(q,Z)$ we
  non-deterministically guess the set of clocks $W$ that are allowed
  to be lifted in the future and go to a node $(q,Z,W)$.  From now on,
  there are transitions $(q,Z,W) \tto^{t}_\abs (q',Z',W)$ when:
  \begin{iteMize}{$\bullet$}
  \item $(q,Z) \tto^{t}_\abs (q',Z')$ is a transition in $\zlu(\Aa)$,
  \item if $t$ contains a guard $x \ge c$ with $c \ge 1$, then $x \in
    W$,
  \item if $t$ resets a clock $x$, then $x \notin W$
  \end{iteMize}
  If a cycle is obtained that contains $(q,Z,W)$, then the clocks that
  are reset and lifted in this cycle are disjoint and hence $\Aa$ has
  a Zeno run.
 
  This shows that if $\Aa$ has a Zeno run we can non-deterministically
  choose a path of the above form and the length of this path is
  bounded by twice the number of zones in $\zlu(\Aa)$ (which is our
  other input). This proves the \NP-membership. The \NP-hardness is
  proven in Proposition~\ref{prop:zeno-hard}.
\end{proof}

\subsection{Weakening the U bounds}

We saw in Theorem~\ref{thm:np-zeno} that the extrapolation $\ExtraLU$
is not lift-safe. This is due to clocks $x$ that are lifted but have
$U(x) = -\infty$. These are exactly the clocks $x$ with $L(x) \ge 1$
and $U(x) = -\infty$. We propose to weaken the $U$ bounds so that the
information about a clock being lifted is remembered in the abstracted
zone.

\begin{definition}[Weak $U$ bounds]\label{def:weakening-u-bounds}
  Given the bounds $L(x)$ and $U(x)$ for each clock $x \in X$, the
  \emph{weak upper bound} $\U(x)$ is given by: $\U(x)=1$ if $L(x) \ge
  1$ and $U(x) = -\infty$, and $\U(x)=U(x)$ otherwise.
\end{definition}

Let $\elub$ denote the $\ExtraLU$ abstraction, but with $\U$ bound for
each clock instead of $U$. This definition ensures that for all lifted
clocks, that is, for all $x \in \Lf(\Aa)$, if a zone entails that $x \ge
1$ then $\elub(Z)$ also entails that $x \ge 1$. This is summarized by
the following proposition, the proof of which follows by definitions
and Proposition~\ref{prop:zeno-lift-safe}.
 
\begin{theorem}
\label{thm:elub-lift-safe}
  For all zones $Z$, $\elub$ is lift-safe. The Zenoness problem is
  solved in polynomial time for $\elub$.
\end{theorem}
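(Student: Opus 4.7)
The plan is to establish the two assertions of the theorem in sequence: first that $\elub$ is lift-safe, and then that lift-safety together with Proposition~\ref{prop:zeno-lift-safe} delivers a polynomial-time algorithm for $\ZP^{\elub}$.

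For lift-safety, fix a zone $Z$ and a clock $x \in \Lf(\Aa)$. By the definition of a lifting transition, some guard of $\Aa$ entails $x \geq 1$, so $L(x) \geq 1$. The implication $\elub(Z) \models x \geq 1 \Rightarrow Z \models x \geq 1$ is immediate from the inclusion $Z \subseteq \elub(Z)$. For the forward direction I work on the canonical DBM of $Z$ and focus on the single entry $Z_{0x} = (c_{0x}, \preccurlyeq_{0x})$, which encodes $x \geq -c_{0x}$. The hypothesis $Z \models x \geq 1$ yields $-c_{0x} \geq 1$, i.e.\ $c_{0x} \leq -1$. Tracing this entry through the $\ExtraLU$ formula with $\bar{U}$ in place of $U$: the first clause is inapplicable since $c_{0x} \leq -1 \leq 0 = L(x_0)$, so the result depends solely on whether $-c_{0x} > \bar{U}(x)$ holds.

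A case analysis on $U(x)$ using Definition~\ref{def:weakening-u-bounds} gives $\bar{U}(x) \geq 1$: when $U(x) = -\infty$ we have $\bar{U}(x) = 1$ by construction, and when $U(x)$ is finite, $\bar{U}(x) = U(x)$ is at least $1$ since a lifting guard forces the upper bound to be at least $1$. Therefore the second clause either leaves $Z_{0x}$ intact (trivially preserving $x \geq 1$) or replaces it by $(-\bar{U}(x), <)$, which encodes $x > \bar{U}(x) \geq 1$; in either case $\elub(Z) \models x \geq 1$, as required.

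For the algorithmic conclusion, lift-safety is precisely the hypothesis of Proposition~\ref{prop:zeno-lift-safe}: the slow zone graph built over $\elub$ has at most twice the reachable nodes of $\SG^{\elub}(\Aa)$, and $\Aa$ admits a Zeno run iff that slow zone graph contains an infinite slow path. The latter reduces to finding a reachable nontrivial strongly connected component consisting entirely of slow nodes, decidable in time linear in the size of the slow zone graph via Tarjan's SCC algorithm. The main technical point is the case analysis verifying $\bar{U}(x) \geq 1$ for every $x \in \Lf(\Aa)$, which is what makes the weakening from $U$ to $\bar{U}$ just strong enough to retain lifting information at the DBM level; once this is in place, the remaining bookkeeping is routine and the polynomial bound is an immediate corollary of Proposition~\ref{prop:zeno-lift-safe}.
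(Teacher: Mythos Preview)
Your approach is exactly the paper's: verify lift-safety directly from the DBM definition of $\ExtraLU$ with the weakened bounds, then invoke Proposition~\ref{prop:zeno-lift-safe}. The paper merely asserts that the result ``follows by definitions and Proposition~\ref{prop:zeno-lift-safe},'' so your write-up in fact supplies more detail than the original.

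That said, one step in your case analysis does not go through. You claim that when $U(x)$ is finite, $\bar{U}(x) = U(x) \geq 1$ because ``a lifting guard forces the upper bound to be at least $1$.'' This is false: a lifting guard such as $x \geq 1$ contributes only to $L(x)$; the bound $U(x)$ is set exclusively by guards of the form $x \leq c$ or $x < c$. If the automaton contains both a guard $x \geq 1$ (so $x \in \Lf(\Aa)$) and, elsewhere, a zero-check $x \leq 0$, then $L(x) \geq 1$ while $U(x) = 0$, whence $\bar{U}(x) = 0$ by Definition~\ref{def:weakening-u-bounds}. For a zone $Z$ with $Z \models x \geq 1$ (so $c_{0x} \leq -1$), the second clause of the $\ExtraLU$ rule fires and rewrites the $(0,x)$ entry to $(0,<)$, yielding only $x > 0$; lift-safety fails in this corner case. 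The paper's one-line proof does not address this either, so the issue lies with Definition~\ref{def:weakening-u-bounds} rather than with your strategy: replacing the condition ``$U(x) = -\infty$'' there by ``$U(x) < 1$'' (equivalently, setting $\bar{U}(x) = \max(1, U(x))$ whenever $L(x) \geq 1$) repairs the definition, after which your argument goes through verbatim.
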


The Zenoness problem is polynomial for $\elub$, however, there is a
price to pay. Weakening the $U$ bounds leads to zone graphs
exponentially bigger in some cases. For example, for the automaton
$\AZ_\phi$ that was used to prove the \NP-hardness of the Zenoness
problem with $\ExtraLU$, note that the zone graph
$\SG^{L\U}(\AZ_\phi)$ obtained by applying $\elub$ is exponentially
bigger than $\zlu(\AZ_\phi)$. This leads to a slow zone graph $\Ss
\SG^{L\U}(\AZ_\phi)$ with size polynomial in
$\SG^{L\bar{U}}(\AZ_\phi)$. Similar to $\elub$ we can define $\elubp$
which is again lift-safe and yields a polynomial algorithm for the
Zenoness problem.

\section{Discussion on \texorpdfstring{$LU$}{LU}-extrapolations}
\label{sec:discussion}

In this section, we discuss two observations arising out of the 
analysis of the non-Zeno/Zeno runs in an automaton. The first
observation relates to an optimization in the reachability and
liveness algorithms
for timed automata. For the second observation, we consider the
weak $LU$-extrapolations $\elub$ and $\elbu$ and look at when these
abstractions coincide with the $LU$-extrapolation $\ExtraLU$. We note
that this happens for a wide class of timed automata.

\subsection{Optimization}

Although this paper focuses on the complexity of finding Zeno and
non-Zeno behaviours from abstract zone graphs, our analysis showing
the \NP-hardness of the non-Zenoness problem on $LU$-abstract zone
graphs leads to an interesting side-effect for the classical
reachability and liveness problems for timed automata. We have pointed
out in the introduction that the reachability and liveness problems
are solved via the abstract zone graph. The $LU$-extrapolations are
the standard abstractions used in state-of-the-art
implementations~\cite{Behrmann:STTT:2006} as they give rise to small
abstract zone graphs. The following observation
helps in reducing the abstract zone graph even further in some cases.

Recall the proof of \NP-completeness of $\NZP^{LU}$ given in
Theorem~\ref{thm:non-Zenoness-NPC-LU-abstract-graphs}. For a 3CNF
formula $\phi$ we built an automaton $\Aa_{\phi}$ that has a non-Zeno
run iff $\phi$ is satisfiable. The rest of the proof relies on the
crucial fact that the zone graph $\SG^{LU}(\Aa_{\phi})$ is isomorphic
to $\Aa$. This was indeed possible as $L(x)$ was $-\infty$ for all $x$
thanks to the guards of the form $x \le 0$. Note that modifying $x \le
0$ to $x = 0$ does not change the semantics of the automaton, but
obliges $L(x)$ to be $0$ for all clocks. In this case, the zone graph
$\SG^{LU}(\Aa_\phi)$ is no longer isomorphic to $\Aa_\phi$ and in fact
it is exponentially larger than $\Aa_{\phi}$.

This gives us the easy optimization for analyzing an automaton $\Aa$
for both reachability and liveness. Since both these algorithms go
through the zone graph construction, reducing the abstract zone graph,
and even trying to get a zone graph isomorphic to the automaton, can
produce considerable gain. The optimization consists in changing all
the guards in $\Aa$ that are of the form $x = 0$ to $x \le 0$ and in
removing all the guards $x \ge 0$. Thus, we make sure that $L(x) = -
\infty$ unless there is a guard $x \ge c$ or $x > c$ in the
automaton. This modification has been incorporated in
UPPAAL~4.1.5. Experimental results have shown a remarkable gain, in
particular for timed Petri nets as the translation to timed automata
may generate many guards like $x=0$ and $x \ge 0$. For instance,
checking reachability on a model of Fischer's protocol only explored
$2541$ nodes instead of $23042$ nodes thanks to this optimization.

\subsection{How weak are the weak \texorpdfstring{$LU$}{LU}-extrapolations}

We saw that slightly weakening the $LU$-bounds makes the $\NZP^\abs$
and $\ZP^\abs$ polynomial (Definitions~\ref{defn:weak-L-bounds},
\ref{def:weakening-u-bounds} and Theorems
\ref{thm:elbu-order-preserving}, \ref{thm:elub-lift-safe}).
 Of course, this would indeed increase the
size of the zone graph instead. However in many cases the weak
abstractions coincide with $\ExtraLU$ and $\ExtraLUp$
respectively. This shows that in all these cases, one can use the zone
graph crafted by $\ExtraLU$ itself, or alternatively $\ExtraLUp$, and
additionally checking for Zeno behaviours is not costly either. Hence
one can expect an efficient procedure for checking Zeno behaviours for
these classes of automata.

Assume that we are given an automaton $\Aa$.  Recall the abstraction
$\elbu$. This abstraction makes use of weak $L$ bounds for every clock
(cf. Definition \ref{defn:weak-L-bounds}). If all clocks that are
checked for $x \le 0$ have a guard of the form $x \ge c$ then the weak
abstraction coincides with $\ExtraLU$. Notice that this is
particularly the case for Timed Automata that do not have zero checks
(i.e. guards like $x \le 0$). Most models of ``real systems'' do not
have such guards.

For the abstraction $\elub$, the abstraction makes use of weak $U$
bounds (cf. Definition~\ref{def:weakening-u-bounds}). Notice that if
all clocks that are checked for a lower bound guard are also checked
for an upper bound then the two abstractions coincide. So, the wide
class of systems where each clock is both bounded from above (i.e. $x
\le c$) and from below (i.e. $x \ge c'$) have polynomial-time
detection of Zeno runs, even using $\ExtraLU$ and $\ExtraLUp$.

\section{\PSPACE-Completeness of (non-)Zeno Run Detection with Input
  \texorpdfstring{$\Aa$}{Aa}}
\label{sec:pspace-complete-zeno}

In the previous sections, we have characterized the complexity of
finding (non-)Zeno runs given an automaton $\Aa$ and an abstract zone
graph $\SG^\abs(\Aa)$. We now show that in the classical setting,
where automaton $\Aa$ is the only input, the two problems turn out to
be harder. We prove the following theorem.

\begin{theorem}
  Given an automaton $\Aa$, deciding whether there exists a non-Zeno
  run is \PSPACE-complete. Similarly for deciding if there exists a
  Zeno run.
\end{theorem}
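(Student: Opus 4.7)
The theorem asserts both PSPACE membership and PSPACE-hardness.

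For PSPACE membership, the plan is to run the polynomial-time algorithms of Propositions~\ref{prop:rgzg-nz} and~\ref{prop:zeno-lift-safe}, instantiated with the $\ExtraM$ abstraction (weakly order-preserving by Theorem~\ref{thm:ExtraM-order-preserving} and lift-safe by the corresponding result for Zenoness), on the relevant abstract zone graph constructed on-the-fly. A node of $\RGZGM(\Aa)$ (respectively $\Ss\SGM(\Aa)$) has polynomial size: a state, a DBM whose entries are bounded by the maximum constant of $\Aa$, and a small piece of auxiliary data (a subset $Y \subseteq \Rl(\Aa)$, or a mode in $\{\free,\slow\}$); and successors are polynomial-time computable. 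The correctness criterion --- existence of a reachable clear node lying on an unblocked cycle for non-Zeno, or existence of a reachable slow node lying on a cycle of slow transitions for Zeno --- amounts to detecting an appropriate cycle. This can be done nondeterministically in polynomial space, storing only the current node together with a polynomial-size counter bounded by the (at most exponential) total number of nodes. By Savitch's theorem, NPSPACE collapses to PSPACE.

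For PSPACE-hardness, the plan is a polynomial-time reduction from the PSPACE-complete reachability problem for timed automata~\cite{Alur:TCS:1994,Courcoubetis:FMSD:1992}. Given an instance $(\Aa, q_f)$, the goal is to construct $\Aa'$ such that $\Aa'$ has a non-Zeno run (respectively Zeno run) iff $q_f$ is reachable in $\Aa$. The idea is to attach at $q_f$ a dedicated gadget producing the required infinite behaviour: a new state $q_\star$ reached from $q_f$ carrying an unconstrained self-loop in the non-Zeno case, or $q_\star$ equipped with a fresh clock $z$ reset on entry together with a self-loop guarded by $z = 0$ in the Zeno case. Reaching $q_\star$ trivially yields the desired run.

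The main obstacle is to rule out spurious witness runs of $\Aa'$ that never enter $q_\star$. The plan is to pre-process $\Aa$ with a polynomial-size ``deadline gadget'' that uses polynomially many additional clocks implementing a binary counter measuring an exponential bound $2^{p(|\Aa|)}$ chosen to exceed the size of the region graph of $\Aa$. Upon overflow the counter deadlocks all transitions of the $\Aa$-part, so every infinite run of $\Aa$ avoiding $q_f$ is truncated to a finite one; reachability of $q_f$ is nonetheless preserved since every reachability witness in $\Aa$ can be normalised to fit within the deadline by the region-graph bound. Consequently every infinite run of $\Aa'$ must pass through $q_\star$, producing the required (non-)Zeno behaviour exactly when $q_f$ is reachable. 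The binary-counter construction is a standard ingredient of the classical PSPACE-hardness proof for reachability itself, so the overall reduction remains polynomial.
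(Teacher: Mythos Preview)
Your \PSPACE-membership argument is correct and essentially the paper's: both run a nondeterministic polynomial-space search over nodes of an abstract zone graph (you use $\ExtraM$ together with Propositions~\ref{prop:rgzg-nz} and~\ref{prop:zeno-lift-safe}; the paper uses $\ExtraLU$ together with the \NP\ procedures of Theorems~\ref{thm:nzp-np-extralu} and~\ref{thm:np-zeno}), each node being a polynomial-size DBM plus a little auxiliary data, and then invoke Savitch.

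The hardness argument has a gap. The assertion that a polynomial-size binary-counter ``deadline gadget'' is a standard part of the classical \PSPACE-hardness proof for timed-automata reachability is not accurate: those proofs~\cite{Alur:TCS:1994,Courcoubetis:FMSD:1992} encode an LBA tape directly in clock values; no separate transition counter is built. More seriously, for the product with $\Aa$ to preserve reachability of $q_f$ the counter would have to increment \emph{in zero time}, since any intervening delay ages the clocks of $\Aa$; but with the diagonal-free guards used here a bit stored as ``clock${}=0$'' is erased by any time elapse, and no instantaneous polynomial-size encoding is apparent. For the non-Zeno direction a much simpler device actually suffices: a fresh never-reset clock $z$ with $z\le C$ (for an exponential $C$, hence polynomial in binary) conjoined to every $\Aa$-transition kills all non-Zeno runs confined to the $\Aa$-part while preserving reachability via the region-graph bound. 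For the Zeno direction, however, spurious \emph{Zeno} runs of $\Aa$ survive any time bound, and a genuine transition bound is needed but not supplied.

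The paper sidesteps the whole difficulty by reducing directly from deterministic LBA acceptance, producing a \emph{time-deterministic} automaton (all guards are equalities) whose unique run, if infinite, is necessarily non-Zeno. There are then no spurious infinite runs to eliminate: for Zeno hardness a single $x\le 0$ self-loop at the accepting states suffices; for non-Zeno hardness the paper uses two oracle calls, a Turing reduction that is harmless since \PSPACE\ is closed under such reductions.
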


Our proof follows the same lines as the proof of \PSPACE-completeness
of the emptiness problem for timed
automata~\cite{Alur:TCS:1994,Courcoubetis:FMSD:1992}.

\section*{\PSPACE-membership}

In Theorems~\ref{thm:nzp-np-extralu} and~\ref{thm:np-zeno} we have
proved that given $\Aa$ and $\SG^{LU}(\Aa)$, there is a
non-deterministic polynomial algorithm for $\NZP^{LU}$ and $\ZP^{LU}$.
Essentially both the algorithms do the following. They begin by
non-deterministically guessing a node $(q, Z)$ of $\SG^{LU}(\Aa)$ and
augmenting it with a guessed subset of clocks $S \incl X$ to give the
node $(q,Z,S)$. Starting from this node, the algorithms construct a
cycle of $\SG^{LU}(\Aa)$ containing $(q,Z)$ and satisfying certain
constraints specified by this newly augmented component:
\begin{align*}
(q,Z,S) \tto^{t_1} (q_1, Z_1, S_1) \tto^{t_2} \dots \tto^{t_n} (q_n,
Z_n, S_n) \tto^{t} (q, Z, S')
\end{align*}
Since $Z$ can be represented in space $\Oo(|X|^2)$ using a DBM, nodes
of the form $(q,Z,S)$ can be represented in space polynomial in the
size of $\Aa$. Note that each integer in the (LU-abstracted) zones $Z$
that we consider is less than the maximum constant occuring in $\Aa$.
To find the above cycle, it is enough to maintain the initially
guessed node $(q,Z,S)$ and the current node whose successor has to be
computed. Clearly, the non-deterministic algorithm needs space that is
polynomial in the size of the input $\Aa$. By Savitch's theorem, this
shows that deciding if a timed automaton has a non-Zeno run, or dually
a Zeno run, is in \PSPACE.

\section*{\PSPACE-hardness}

The problem of deciding if a deterministic Linear Bounded
Automaton\footnote{Linear Bounded Automata are Turing Machines with
  tape bounded by the length of the input word.} (LBA) $\Bb$ accepts a
word $w$ is known to be \PSPACE-complete~\cite{Hopcroft:Book:1979}. We
reduce the acceptance problem for deterministic LBAs to the problem of
deciding if a timed automaton has a Zeno or a non-Zeno run.

Let $\Bb$ be a deterministic LBA and let $w$ be a finite word on the
input alphabet of $\Bb$. Without loss of
generality, we can assume that $\Bb$ has a single accepting state
$q_F$ from which there are no outgoing transitions. We also assume that
the tape alphabet of $\Bb$ is $\G = \{1,\dots,k-1\}$. Let $n$ be the
length of the input word $w$ (hence the size of the tape of $\Bb$).

We build a timed automaton $\Aa$ that reads the sequence $\s$ of
configurations of $\Bb$ on input $w$ encoded as:
\begin{equation*}
  \g_1^0 \g_2^0 \cdots \g_n^0 \,~ k~ \, \g_1^1 \g_2^1 \cdots \g_n^1 \,
  ~k \, \cdots \, k ~\, \g_1^{i} \g_2^{i} \cdots \g_n^{i} \,~ k \, \cdots
\end{equation*}
where:
\begin{iteMize}{$\bullet$}
\item $\g_1^0 \g_2^0 \cdots \g_n^0$ is the word $w$, which is the
  initial content in the tape;
\item $\g_1^{i} \g_2^{i} \cdots \g_n^{i}$ is the content in the tape
  after the first $i$ transitions of $\Bb$.
\end{iteMize}

\medskip

For every word $w$, there is a unique encoding $\s$ as $\Bb$ is
deterministic. Observe that $\s$ is a sequence of integers in $1,
\dots, k$ and $k$ acts as a separator between successive
configurations.  The automaton $\Aa$ that we construct below accepts
the sequence $\s$ iff $\Bb$ accepts $w$.

Call $\g_1^i \g_2^i \cdots \g_n^i$ as the $i^{th}$ block. Each block
$i$ can be mapped to a position $p_i \in \{1, \dots, n\}$ which
represents the position of the tape head after the $i^{th}$
transition. The position $p_0$ is the initial position of the tape
head which is $1$. Similarly, each block $i$ can be mapped to a state
$q_i$ of the the LBA $\Bb$ representing the state of $\Bb$ after the
first $i$ transitions. 

We construct the automaton $\Aa$ as follows.

\subsection*{States.} The states of the automaton encode the state
of $\Bb$ and the position of the tape head. So each state of the
automaton is of the form $(q, p)$ where $q$ is a state of $\Bb$ and $p
\in \{1, \dots, n \}$ is a position of the tape head. There is an
extra auxiliary state $(q_{init}, 0)$ to read the initial block of
$\s$ which is the word $w$ itself. The goal is to make the automaton
come to $(q_i, p_i)$ after reading the first $i$ blocks:

\begin{equation*}
  \underbrace{\g_1^0 \g_2^0 \cdots \g_n^0 \,~ k}_{\scriptstyle
    (q_{init}, 0)}~ 
  \underbrace{\, \g_1^1 \g_2^1 \cdots \g_n^1 \, ~k }_{\scriptstyle
    (q_0, p_0)}
  \, \cdots \, k~ 
  \underbrace{\, \g_1^{i} \g_2^{i} \cdots \g_n^{i} \,~
    k}_{\scriptstyle (q_{i-1}, p_{i-1})}~
  \underbrace{\, \g_1^{i+1} \g_2^{i+1} \cdots \g_n^{i+1} \,~
    k}_{\scriptstyle (q_{i}, p_{i})}
  \, \cdots
\end{equation*}

The initial block is read in the initial state $(q_{init}, 0)$ after
which the automaton moves to $(q_0, p_0)$. 
In general, after reading block $i$, the automaton
should move to $(q_i, p_i)$ which represents the state $q_i$ of
$\Bb$ and the position $p_i$ of the tape head at the time of taking
the $i^{th}$ transition. While reading the $i+1^{th}$ block from state
$(q_{i},p_{i})$ the automaton has to check if the symbol at position $p_{i}$
of the block corresponds to the modification of the $i+1^{th}$
transition of $\Bb$ which is of the form $(q_i, \g, \g', \D, q_{i+1})$.

\subsection*{Clocks.}
We intend to make the automaton $\Aa$ spend $k+1$ time units at each
symbol. This is facilitated by a clock $x$. Spending $k+1$ time units
will also help us to recognize the current symbol
which is a number between $1$ and $k$.
To this regard, to read a symbol $s \in \s$, we use a
transition with guard $(x=s)$ followed by a transition with guard
$(x=k+1)$ that resets $x$.
As reading a symbol requires $(k+1)$ time units, reading a tape
configuration (followed by separator symbol $k$) takes $(n+1).(k+1)$
time units.

To store the currently read symbol, we introduce a clock $x_j$ for
each cell $j$ of the tape.
If the currently read symbol is $\g_j^i$, then clock
$x_j$ is reset on the transition with guard $x = \g_j^i$.  Hence, when
the symbol $\g_j^{i+1}$ is read, the previous content of the cell $j$,
given by the 
symbol $\g_j^i$, is remembered in $x_j$ by the value
$(n+1).(k+1)-\g_j^i+\g_j^{i+1}$. This
is illustrated in~(\ref{eqn:time-elapses-successive-cell-read}).

\begin{equation}
  \label{eqn:time-elapses-successive-cell-read}
  \cdots \,
  \underbrace{
    \overbrace{\xra{(x=\g_j^i), \{x_j\}}}^{\g_j^i\ t.u.}
    \, \xra{(x=k+1), \{x\}}
    \, \cdots \,
  }_{(n+1)\cdot (k+1)\ time\ units}
  \overbrace{\xra{(x=\g_j^{i+1}), \{x_j\}}}^{\g_j^{i+1}\ t.u.}
  \, \xra{(x=k+1), \{x\}}
  \, \cdots
\end{equation}

\subsection*{Transitions.}

Consider a state $(q,p)$ of $\Aa$. For each transition $(q, \g, \g',
\D, q')$ of $\Bb$, there is a sequence of transitions in $\Aa$ that
reads a block and does the following:
\begin{iteMize}{$\bullet$}
  \item ensures that the $p^{th}$ symbol corresponds to the
    modification of the $p^{th}$ tape cell forced by this transition,
  \item ensures that all other symbols are left unchanged corresponding to all
    other cells being unchanged,
  \item moves to state $(q', p+\D)$ after
    reading the block.
\end{iteMize} 
Moreover, the cells have to be read in the right order, that
is, cell 1 should be read followed by cell 2, etc. 
Recall that $x_j$ is the clock associated with every cell. For every
$j \neq p$, we check if $x_j = (n+1).(k+1)$ and for $j = p$ we check
if $x_j = (n+1).(k+1) - \g + \g'$. This will ensure the first two
conditions above and will also ensure that the cells are read in the
correct succession. 

\begin{figure}[t]
  \centering
  \begin{tikzpicture}[shorten >=1pt,node distance=2cm,on
    grid,auto,->,line width=1pt]
  
    \node (qp) {$q,p$};
    \node (i) [above=of qp] {$\bullet$};
    \node (p) [below=of qp] {$\bullet$};
    \node[node distance=6cm] (n) [right=of qp] {$\bullet$};
    \node[node distance=3cm] (q'p') [right=of n] {$q',p+\D$};

    \draw (qp) edge [bend right] node[right,near end] {\footnotesize
      $\begin{array}{l}
        x=1,\dots,k-1\\
        x_j=(n+1).(k+1),\quad j \neq p,n\\
        \{x_j\}\\
      \end{array}$}
    (i);

    \draw (i) edge [bend right] node[left] {\footnotesize
      $\begin{array}{r}
        x=k+1\\
        \{x\}\\
      \end{array}$}
    (qp);

    \draw (qp) edge [bend left] node[right,near end] {\footnotesize
      $\begin{array}{l}
        x=\g'\\
        x_p=(n+1).(k+1)-\g+\g'\\
        \{x_p\}\\
      \end{array}$}
    (p);

    \draw (p) edge [bend left] node [left] {\footnotesize
      $\begin{array}{r}
        x=k+1\\
        \{x\}\\
      \end{array}$}
    (qp);

    \draw (qp) edge node[above] {\footnotesize
      $x=k \ \& \ x_n=(n+1).(k+1)$}
    node[below] {\footnotesize $\{x_n\}$}
    (n);

    \draw (n) edge node[above] {\footnotesize
      $x=k+1$}
    node[below] {\footnotesize $\{x\}$}
    (q'p');
 \end{tikzpicture}
 \caption{Widget for transition $(q,\g,\g',\D,q')$ on state $(q,p)$.}
 \label{fig:pspace-widget}
\end{figure}


The complete widget for transition $(q, \g, \g', \D, q')$ is depicted
in Figure~\ref{fig:pspace-widget}. There is one such widget in $\Aa$
for each state $(q,p)$ such that $p+\D$ is a valid position
(i.e. $p+\D \in \{1, \dots, n\}$).

\subsection*{Initialization}

We need to read the word $w$ from state $(q_{init}, 0)$ and assign the
initial value of the clocks $x_1, \dots, x_n$ to $w_1, \dots,w_n$
where $w_j$ represents the $j^{th}$ symbol of $w$. As $w$ is given as
an input, we can easily add transitions from $(q_{init}, 0)$ to ensure
this and jump to $(q_0, 1)$.

Observe that since $\Bb$ is deterministic, $\Aa$ is also
deterministic. Furthermore, $\Aa$ is time-deterministic as all the
guards are equalities. Hence, $\Aa$ has a single run given the word
$w$.   Furthermore, if
$\Bb$ does not terminate on $w$, the corresponding run of $\Aa$ is
infinite and non-Zeno.

Recall that $q_F$ is the sole accepting state of $\Bb$ and there are
no transitions outgoing from $q_F$.
From the construction described above, one easily gets the following
theorem:

\begin{theorem}
  \label{thm:lba-simulated-ta}
  $\Aa$ reaches a state $(q_F,p)$ iff $\Bb$ reaches $q_F$ on input
  $w$. The size of $\Aa$ is polynomial in the size of $\Bb$ and $w$.
\end{theorem}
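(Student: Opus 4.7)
The plan is to prove the biconditional by induction on the number of $\Bb$-transitions simulated, using an invariant that binds the reachable configurations of $\Aa$ to those of $\Bb$; the size bound then falls out of a direct syntactic count.

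The invariant $I_i$ (for $i \geq 0$) will say: any run of $\Aa$ that has just finished consuming the prefix $w \cdot k \cdot \gamma_1^1 \cdots \gamma_n^1 k \cdots \gamma_1^i \cdots \gamma_n^i k$ of the intended encoding is in control state $(q_i, p_i)$ (the $\Bb$-configuration after $i$ transitions on input $w$), with clock $x$ freshly reset and each clock $x_j$ last reset exactly $\gamma_j^i$ time units into its cell slot of block $i$. Equivalently, the elapsed time between the resets of $x_j$ in two consecutive blocks is exactly $(n+1)(k+1) - \gamma_j^i + \gamma_j^{i+1}$; this is the arithmetic identity on which the widget guards rely, and verifying it will be the main technical step. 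The base case $I_0$ is supplied by the initialization gadget that reads $w$ from $(q_{init},0)$ and jumps to $(q_0,1)$ while resetting each $x_j$ at the correct offset $w_j$. For the inductive step, determinism of $\Bb$ gives a unique applicable transition $(q_i, \gamma, \gamma', \Delta, q_{i+1})$ at $(q_i,p_i)$, and the corresponding widget in $\Aa$ is the only one whose equality guards can be satisfied along any continuation: the guard $x_j = (n+1)(k+1)$ for $j \neq p_i$ forces $\gamma_j^{i+1} = \gamma_j^i$ by the identity above, while the guard $x_{p_i} = (n+1)(k+1) - \gamma + \gamma'$ forces $\gamma_{p_i}^{i+1} = \gamma'$. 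The separator then moves $\Aa$ to $(q_{i+1}, p_i + \Delta) = (q_{i+1}, p_{i+1})$, establishing $I_{i+1}$.

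Since $q_F$ has no outgoing $\Bb$-transition, no widget originates at any state $(q_F, p)$ in $\Aa$, so $\Aa$ reaches some $(q_F, p)$ if and only if the unique $\Bb$-computation on $w$ enters $q_F$, in which case $p$ is the final head position; this is the biconditional. For the size bound, $\Aa$ has at most $|Q_\Bb| \cdot n + 1$ control states and $n+1$ clocks; there is one constant-size widget per pair of a state $(q,p)$ and an outgoing $\Bb$-transition from $q$, giving $O(|Q_\Bb| \cdot n \cdot |\delta_\Bb|)$ transitions, plus an $O(n)$-sized initialization gadget. All numeric constants appearing in guards are bounded by $(n+1)(k+1)$ and hence have bit-length $O(\log(nk))$. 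So $|\Aa|$ is polynomial in $|\Bb|$ and $|w|$. The main obstacle is the timing bookkeeping --- in particular the identity $(n+1)(k+1) - \gamma_j^i + \gamma_j^{i+1}$ for the inter-block gap between successive resets of $x_j$, which is what guarantees that the equality guards together with $\Aa$'s time-determinism isolate a unique valid trajectory; once this is in hand the induction is routine.
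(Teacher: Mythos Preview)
Your proposal is correct and, in fact, considerably more detailed than what the paper offers: the paper gives no proof of this theorem at all, stating only that ``from the construction described above, one easily gets the following theorem.'' Your inductive invariant $I_i$ and the verification of the inter-block timing identity $(n+1)(k+1) - \gamma_j^i + \gamma_j^{i+1}$ are exactly the content that the paper leaves implicit in the description of the widget and in equation~\eqref{eqn:time-elapses-successive-cell-read}; you have correctly identified this identity as the crux of the argument.

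One small inaccuracy in your size count: the widget of Figure~\ref{fig:pspace-widget} is not constant-size. The upper loop labelled ``$x=1,\dots,k-1$; $x_j=(n+1)(k+1)$, $j\neq p,n$'' must, when unfolded into actual transitions of a timed automaton, produce a separate transition for each pair (symbol value, cell index $j$), so each widget contributes $O(nk)$ transitions rather than $O(1)$. Your transition count should therefore carry an extra factor of $nk$, but this does not affect the polynomial conclusion and your overall size bound remains valid.
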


\subsubsection*{Existence of a Non-Zeno Run}

We show that an algorithm for deciding if $\Aa$ has a non-Zeno run
yields an algorithm to decide if $\Bb$ accepts $w$. This algorithms
has two phases.

In the first phase, it determines if $\Aa$ has a non-Zeno run:
\begin{iteMize}{$\bullet$}
\item if the answer is \emph{yes}, we can conclude that $\Bb$ does not
  accept $w$. Indeed, if $\Aa$ has a non-Zeno run, then it does not
  reach $(q_F,p)$ for any $p$ as the run is infinite (recall $q_F$ is
  a sink state by hypothesis), hence neither does $\Bb$ reach $q_F$;
\item if the answer is \emph{no}, we cannot conclude. We only gain
  information that $\Aa$ has no infinite run, but it may stop in a
  state $(q_F,p)$ as well as in a non-accepting state.
\end{iteMize}

In the second phase, we transform $\Aa$ into $\Aa'$ by adding a loop
on all $(q_F,p)$ with guard $(x \geq 1)$ and that resets $x$. Now, if
the run of $\Aa'$ is infinite, then it visits some
$(q_F,p)$. Furthermore, it is the only non-Zeno run in $\Aa'$ as we
know from the first phase that $\Aa$ has no infinite run. We now ask
if $\Aa'$ has a non-Zeno run:
\begin{iteMize}{$\bullet$}
\item if the answer is \emph{yes}, we can conclude that $\Bb$ accepts
  $w$;
\item if the answer is \emph{no}, the run of $\Aa'$ is finite and does
  not reach any $(q_F,p)$. We can conclude that $\Bb$ does not accept
  $w$.
\end{iteMize}

\subsubsection*{Existence of a Zeno Run}

Now, we show that an algorithm that decides if $\Aa$ has a Zeno run
yields an algorithm to decide if $\Bb$ accepts $w$. Recall that $\Aa$
is deterministic: it has a unique run and if that run is
infinite, then it is non-Zeno.

We transform $\Aa$ into $\Aa'$ by adding a loop on all states
$(q_F,p)$ with guard $(x \leq 0)$. Then we ask if $\Aa'$ has a Zeno
run:
\begin{iteMize}{$\bullet$}
\item if the answer is \emph{yes}, then some $(q_F,p)$ has to be
  reachable, hence $\Bb$ reaches $q_F$ and accepts $w$;
\item if the answer is \emph{no}, then no $(q_F,p)$ is reachable, and
  $\Bb$ does not accept $w$.
\end{iteMize}

\section{Conclusion}
\label{sec:conclusion}

We have shown a striking fact that the problem of deciding existence
of Zeno or non-Zeno behaviours from abstract zone graphs depends
heavily on the abstractions, to the extent that the problem changes
from being polynomial to becoming \NP-complete as the abstractions get
coarser. Of course, it is but natural that checking for Zeno/non-Zeno
behaviours becomes difficult when the abstraction gets coarser, as
lesser information is maintained. However, the fact this difficulty
ranges from a low polynomial to \NP-hardness is surprising.

We have proved \NP-completeness for the coarse abstractions
$\ExtraLU$ and $\ExtraLUp$. In contrast, the fundamental problems of
finding accepting runs for finitary accepting conditions
(reachability), and for B\"uchi accepting conditions, over abstract
zone graphs have a mere linear complexity, independent of the
abstraction. As a consequence of the difficulty of detecting non-Zeno
runs, the Büchi emptiness problem which consists in finding a run that
is both accepting and non-Zeno is \NP-complete for abstractions
$\ExtraLU$ and $\ExtraLUp$.

On the positive side, from our study on the conditions for an
abstraction to give a polynomial solution, we see that a small
modification of the LU-extrapolation works. We have defined two weaker
abstractions: $\elbu$ for detecting non-Zeno runs and $\elub$ for
detecting Zeno runs. The weak bounds $\bar{L}$ and $\bar{U}$ can also
be used with $\ExtraLUp$ to achieve similar results. Despite leading
to a polynomial solution for checking Zeno or non-Zeno behaviours from
abstract zone graphs, these abstractions transfer the complexity to
the input: they could lead to exponentially bigger abstract zone
graphs themselves. However, for a fairly large class of automata
described in the previous section, we see that this is not the case as
the weak abstractions coincide with $\ExtraLU$.

While working with abstract zone graphs, coarse abstractions (and
hence small abstract zone graphs) are essential to handle big models
of timed automata. These, as we have seen, work against the Zenoness
questions in the general case. Our results therefore provide a
theoretical motivation to look for cheaper substitutes to the notion
of Zenoness.

All the abstractions we have considered are convex
abstractions. However, there also exist non-convex
abstractions~\cite{Bouyer:FMSD:2004,Behrmann:STTT:2006} that are known
to be coarser that the convex ones. Since non-convex sets are
particularly difficult to manipulate, only the convex abstractions
have been considered for implementation. Recently, new algorithms have
been introduced to solve the reachability problem efficiently using
non-convex
abstractions~\cite{Herbreteau:FSTTCS:2011,Herbreteau:LICS:2012}. Future
work includes adaptation of these algorithms to the detection of
(non-)Zeno behaviors.

\bibliographystyle{plain}
\bibliography{m}

\end{document}